\newcommand{\bra}[1]{\left\langle{#1}\right\vert}
\newcommand{\ket}[1]{\left\vert{#1}\right\rangle}
\newcommand{\qw}[1][-1]{\ar @{-} [0,#1]}
\newcommand{\multigate}[2]{*+<1em,.9em>{\hphantom{#2}} \qw \POS[0,0].[#1,0];p !C *{#2},p \save+LU;+RU **\dir{-}\restore\save+RU;+RD **\dir{-}\restore\save+RD;+LD **\dir{-}\restore\save+LD;+LU **\dir{-}\restore}
\newcommand{\ghost}[1]{*+<1em,.9em>{\hphantom{#1}} \qw}
\newcommand{\rstick}[1]{*!L!<-.5em,0em>=<0em>{#1}}
\newcommand{\lstick}[1]{*!R!<.5em,0em>=<0em>{#1}}
\newcommand{\Qcircuit}{\xymatrix @*=<0em>}
\font\gensymbols=drgen10
\def\male{{\gensymbols\char"1A}}
\def\female{{\gensymbols\char"19}}
\newcommand{\rsrc}[1]{ \langle #1 \rangle}
\newcommand{\typ}{  \ensuremath{ T^{(n)}_\epsilon  } }
\newcommand{\ptyp}{ \ensuremath{ \Pi_\epsilon^{(n)}} }
\newcommand{\joke}[2]{#2}
\newtheorem{definition}{Definition}[chapter]
\newtheorem{proposition}[definition]{Proposition}
\newtheorem{lemma}[definition]{Lemma}
\newtheorem{theorem}[definition]{Theorem}
\newtheorem{corollary}[definition]{Corollary}
\newtheorem{example}[definition]{Example} 
\newcommand{\nc}{\newcommand}
\nc{\rnc}{\renewcommand} 
\nc{\beq}{\begin{equation}}
\nc{\eeq}{{\end{equation}}} 
\nc{\beqa}{\begin{eqnarray}}
\nc{\eeqa}{\end{eqnarray}} \nc{\lbar}[1]{\overline{#1}}
\nc{\ketbra}[2]{|#1\rangle\!\langle#2|}
\nc{\proj}[1]{|#1\rangle\!\langle #1 |} 
\nc{\avg}[1]{\langle#1\rangle}
\rnc{\max}{\operatorname{max}} 
\nc{\rank}{\operatorname{rank}}
\nc{\conv}{\operatorname{conv}}
\nc{\smfrac}[2]{\mbox{$\frac{#1}{#2}$}}
\nc{\Tr}{\operatorname{Tr}}
\nc{\ox}{\otimes}
\nc{\dg}{\dagger}
\nc{\dn}{\downarrow} \nc{\cA}{{\cal A}} \nc{\cB}{{\cal B}}
\nc{\cC}{{\cal C}} \nc{\cD}{{\cal D}} \nc{\cE}{{\cal E}}
\nc{\cF}{{\cal F}} \nc{\cG}{{\cal G}} \nc{\cH}{{\cal H}}
\nc{\cI}{{\cal I}} \nc{\cJ}{{\cal J}} \nc{\cK}{{\cal K}}
\nc{\cM}{{\cal M}} \nc{\cN}{{\cal N}}
\nc{\cO}{{\cal O}} \nc{\cP}{{\cal P}} \nc{\cR}{{\cal R}}
\nc{\cS}{{\cal S}} \nc{\cT}{{\cal T}} \nc{\cU}{{\cal U}}
\nc{\cX}{{\cal X}} \nc{\cW}{{\cal W}} \nc{\cZ}{{\cal Z}}
\nc{\csupp}{{\operatorname{csupp}}}
\nc{\qsupp}{{\operatorname{qsupp}}} 
\nc{\rar}{\rightarrow} \nc{\lrar}{\longrightarrow}
\nc{\poly}{\operatorname{poly}}
\nc{\polylog}{\operatorname{polylog}}
\nc{\Lip}{\operatorname{Lip}} 
\def\>{\rangle}
\def\<{\langle}
\def\ph{\varphi}
\nc{\RR}{{{\mathbb R}}}
\nc{\CC}{{{\mathbb C}}}  
\nc{\NN}{{{\mathbb N}}}
\nc{\ZZ}{{{\mathbb Z}}}
\nc{\UU}{{{\mathbb U}}}
\nc{\id}{{\operatorname{{\mathbf 1}}}}
\def \Tr{\textup{Tr}}
\def \calH{ \ensuremath{ \cal H } }
\renewcommand{\ket}[1]{\left\vert #1 \right\rangle}
\renewcommand{\bra}[1]{\left\langle #1 \right\vert}
\newcommand{\braket}[2]{\left\langle #1\vert#2 \right\rangle}
\newcommand{\m}{  \ensuremath{\mathcal M} }
\newcommand{\K}{  \ensuremath{\mathcal K} }
\newcommand{\cL}{  \ensuremath{\mathcal L} }
\newcommand{\Q}{  \ensuremath{\mathcal Q} }
\newcommand{\Pp}{  \ensuremath{\mathcal P} }
\newcommand{\Ww}{  \ensuremath{\mathcal W} }
\newcommand{\qq}{  \ensuremath{\vec{q}} }
\newcommand{\Kbar}{{  \ensuremath{ \bar{ {\mathcal K} } }   }}
\newcommand{\otimesc}{\negthickspace\otimes\negthickspace}	
\newcommand{\Esq}{\ensuremath{E_\text{sq}} }
\newcommand{\rhot}{{\ensuremath{\tilde{\rho}}}}
\newcommand{\sigmat}{{\ensuremath{\tilde{\sigma}}}}
\newcommand{\be}{\begin{equation}}
\newcommand{\ee}{\end{equation}}
\newcommand{\bea}{\begin{eqnarray}}
\newcommand{\eea}{\end{eqnarray}}
\newcommand{\bs}{\begin{split}}
\newcommand{\es}{\end{split}}
\renewcommand{\neg}{\mbox{-}}
\newcommand{\XX}{X_1;\ldots;X_m}
\newcommand{\XcX}{X_1,\ldots,X_m}
\newcommand{\E}{   {\cal E} }
\newcommand{\beas}{\begin{eqnarray*}} 
\newcommand{\eeas}{\end{eqnarray*}} 
\newcommand{\myparagraph}[1]{\noindent{\bf #1}$\quad$}
\newcommand{\digraph}[3][scale=1]{
\newwrite\dotfile
\immediate\openout\dotfile=#2.dot
\immediate\write\dotfile{digraph #2 {\string#3}}
\immediate\closeout\dotfile
\IfFileExists{#2.ps}
{ \includegraphics[#1]{#2} }
	{ \fbox{ \begin{tabular}{l}
		The file \texttt{#2.ps} hasn't been created from
		\texttt{#2.dot} yet. \\
		Run `\texttt{dot -Tps -o #2.ps #2.dot}' to create it. \\
		Here is a \textsf{bash} loop to process all \textsf{dot} files
		in the current directory: \\
		\texttt{
			for f in *.dot do ;
			dot -Tps -o \$\{f\%dot\}ps \$f ;
			done
		}
		\end{tabular}}
	}
} 	
\begin{document}

\maketitle%

\begin{romanPagenumber}{2}%
	
	\SetDedicationName{\MakeUppercase{Dedication}}%
	\SetDedicationText{ \begin{center}	
						To my parents, the only ``system'' I have respect for. 
						\end{center}
					   }%
	\Dedication%
	
	\SetAcknowledgeName{\MakeUppercase{Acknowledgements}}%
	\SetAcknowledgeText{
			This work would not have been possible without the help and guidance of my supervisor Prof. Patrick Hayden. 
			His outstanding pedagogical abilities and attention to detail have helped shape my understanding of 
			the field of quantum information science at a world class level.
			In addition, I would like to thank Fr\'ed\'eric Dupuis, S\'ebastien Gambs and Omar Khalid for the 
			many fruitful discussions about information theory and their help with the preparation of this
			manuscript. 
			I also owe many thanks to Prof. David Avis and Leonid Chindelevitch for their assistance with some of 
			the most difficult parts in this work.
			There are many other people who deserve an honorable mention for either directly or indirectly
			influencing me: Claude Cr\'epeau, Aram Harrow, Debbie Leung, Jonathan Oppenheim and Andreas Winter.
			Last but not least, I want to thank my parents for cultivating in me the love of science and knowledge.
	}%
	\Acknowledge%

			\SetAbstractEnName{\MakeUppercase{Abstract}}%
			\SetAbstractEnText{ 
				A single quantum state can be shared by many distant parties. 
				In this thesis, we try to characterize the information contents of such 
				distributed states by defining the multiparty information and the multiparty squashed 
				entanglement, two steps toward a general theory of multiparty quantum information.
				\newline \indent
				As a further step in that direction, we partially solve the multiparty distributed  
				compression problem where multiple parties use quantum communication to faithfully 
				transfer their shares of a state to a common receiver.
				We build a protocol for multiparty distributed compression based on the fully quantum 
				Slepian-Wolf protocol and prove both inner and outer bounds on the achievable rate region.
				We relate our findings to previous results in information theory and discuss some 
				possible applications.
				 }

			\AbstractEn%

			\SetAbstractFrName{\MakeUppercase{ABR\'{E}G\'{E}}}%
			\SetAbstractFrText{ 
					Un \'etat quantique peut \^etre partag\'e entre
					plusieurs entit\'es qui sont spatialement s\'epar\'es. Dans
					ce m\'emoire, nous essayons de caract\'eriser
					l'information quantique contenue dans de tels \'etats
					distribu\'es en d\'efinissant et utilisant les notions
					d'information multipartie (multiparty information) et
					d'intrication ``\'ecras\'ee'' multipartie (multiparty
					squashed entanglement). Il s'agit de premiers pas 
					vers une th\'eorie g\'en\'erale de l'information quantique
					multipartie. 
					\newline \indent Nous faisons aussi un autre pas dans cette direction
					en \'etudiant le probl\`eme de la compression distribu\'ee
					d'information quantique. En particulier, nous
					proposons un protocole de compression distribu\'ee bas\'e
					sur la version quantique du protocole de Slepian et Wolf
					et analysons ses caract\'eristiques. Nous discutons
					aussi la relation entre nos r\'esultats et les travaux
					pr\'ec\'edents dans la th\'eorie de l'information   
					et soulignons quelques applications possibles de notre
					protocole.		
								}
					%
			\AbstractFr%

\TOCHeading{\MakeUppercase{Table of Contents}}%
\LOTHeading{\MakeUppercase{List of Tables}}%
\LOFHeading{\MakeUppercase{List of Figures}}%
\tableofcontents %

	%
\listoffigures %

\end{romanPagenumber}

\mainmatter %

\chapter{Introduction}

	Information theory is one of the most important mathematical theories developed in the last century. 
	It finds applications in communications engineering, computer science, physics, economics,
	neuroscience and many other fields of modern science.
	Of particular interest are the recent developments in quantum information theory (QIT),%
	\index{QIT@QIT: quantum information theory} a discipline which studies the limits that the laws of 
	quantum mechanics impose on our ability to store, manipulate and transmit information.
	All information is physical; whether it be the magnetic domains of a hard disk platter, 
	the reflective bumps on the surface of a DVD or the charge of the capacitors in a stick of
	RAM, that which we intuitively refer to as information must be stored in some physical system \cite{landauer}.
	Thus, the incursion of quantum physics into information theory is inevitable if we want to understand 
	the information properties of quantum systems like single photons and superconducting loops.
	
	Modern quantum information theory has elaborated a paradigm in which a set of spatially localized parties 
	try to accomplish a communication task by using communication resources like channels, states and quantum 
	entanglement \cite{BBPS,DHW04,DHW05b,FQSW}.
    Such an approach is now possible because of the substantial body of results characterizing 
    quantum communication channels \cite{H98,SW97,BSST99,D03} and the truly quantum resource of shared 
    entanglement \cite{B96,DW05,PV07}.
	%
	In this new quantum paradigm of information theory, many classical results need to be revisited in the light
	of the peculiar properties of quantum information. 
	
	
    In classical information theory, distributed compression is the search for the optimal rates 
    at which two parties Alice and Bob can compress and transmit information faithfully to a third party Charlie.
    If the senders are allowed to communicate among themselves then they can obviously use the
    correlations between their sources to achieve better rates.
    The more interesting problem is to ask what rates can be achieved if no communication is allowed between
    the senders.
    The classical version of this problem was solved by Slepian and Wolf~\cite{SW73}.
    The quantum version of this problem was first approached in \cite{ADHW04,HOW05} and more recently in \cite{FQSW},
    which describes the fully quantum Slepian-Wolf (FQSW) protocol and partially solves the distributed compression
    problem for two senders.
    
    \index{FQSW@FQSW: fully quantum Slepian-Wolf} 
	In this thesis, we analyze the multiparty scenario of distributed compression where many senders, 
	Alice $1$ through Alice $m$, send quantum information to a single receiver, Charlie. 
	We will describe the multiparty FQSW protocol and exhibit a set of achievable rates for this protocol.
	We also derive an outer bound on the possible rates for \emph{all} distributed compression protocols based on the
	multiparty squashed entanglement.
    
        
    The multiparty squashed entanglement (independently discovered by Yang, et al. \cite{multisquash}) is
	a generalization of the squashed entanglement defined by Christandl and Winter \cite{CW04}
	and has very desirable properties as a measure of multiparty entanglement.
    While there exist several measures for bipartite entanglement with useful properties and applications
    \cite{BBPS, HHT, Ra99, VP98}, the theory of multiparty entanglement, despite considerable effort
    \cite{LSSW,DCT99,CKW00,BPRST99}, remains comparatively undeveloped.
    Multiparty entanglement is fundamentally more complicated because it cannot be described by a single number
    even for pure states.
    We can, however, define \emph{useful} entanglement measures for particular applications, and
    the multiparty squashed entanglement is one such measure well-suited to application in the distributed 
    compression problem.

	The main results of this thesis are contained in Chapters \ref{chapter:multiparty-quantum-information} and 
	\ref{chapter:multiparty-distributed-compression}.
	Chapter~\ref{chapter:multiparty-quantum-information} presents our original results on the multiparty
	generalization of squashed entanglement.
	Chapter~\ref{chapter:multiparty-distributed-compression} deals with the multiparty distributed compression
	problem and proves inner and outer bounds on the rate region.
	Before we get there, however, we will introduce some background material on classical and quantum information 
	theory in Chapter~\ref{chapter:background}. 
	In Chapter~\ref{chapter:results-in-QIT}, we describe some important recent results of quantum information 
	theory which form the basic building blocks for our results.
	Finally, in Chapter~\ref{chapter:applications-to-black-holes} we take a look at some possible applications 
	of the distributed compression results to the black hole information paradox.
	The dependency graph for the sections in this thesis is shown in Figure~\ref{fig:dependency} on the next page.
	

	Most of the original results in Chapters~\ref{chapter:multiparty-quantum-information} and
	\ref{chapter:multiparty-distributed-compression} appear in a paper \cite{AHS07} co-authored with 
	Prof. David Avis and Prof. Patrick Hayden to which the author has made substantial contributions.


	    \begin{figure}[ht]  
	    \begin{center}
	    	\includegraphics[scale=0.7]{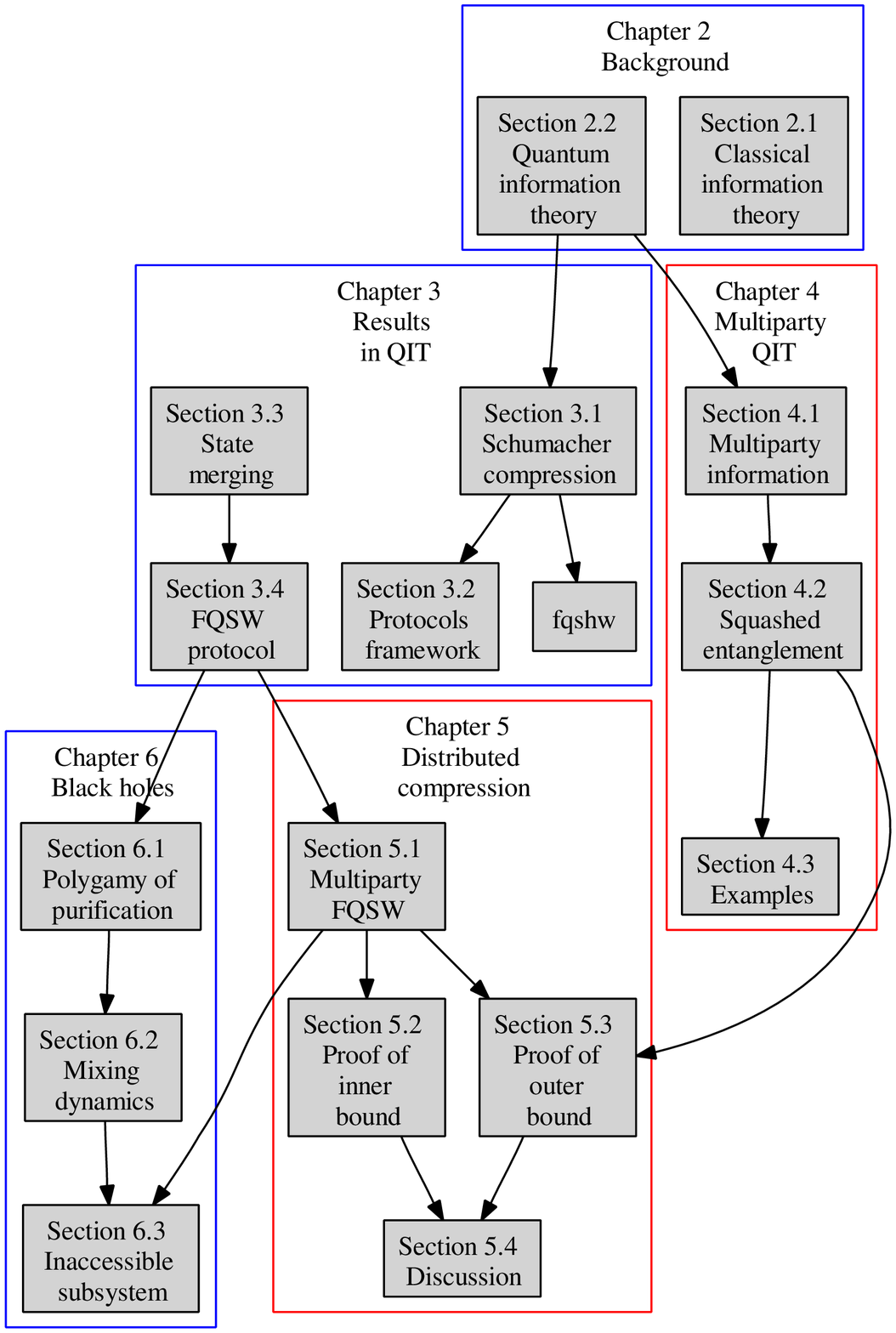}
	    \end{center}
			\FigureCaptionOpt{  Dependency graph for the parts of this thesis.}{
								Dependency graph for the parts of this thesis.}	             
	        \label{fig:dependency}
	    \end{figure}

\chapter{Background}	\label{chapter:background}

	In this chapter, we will present background concepts from classical information theory and their analogues in
	quantum information theory. 
	These concepts form the basic building blocks with which we will construct all subsequent results.
	Our coverage of the information theoretic topics is far from exhaustive; it serves to introduce a minimum 
	prerequisite structure that can support the rest of the exposition. For an in-depth view of classical and 
	quantum information theory the reader is referred to the classics in the fields: \cite{CT91} and \cite{NC04}
	respectively.

	\bigskip
	\bigskip	
	\section{Classical information theory}
	
	In 1948 Claude Shannon published a seminal paper \cite{S48} titled ``A mathematical theory of communication'' which
	set the stage for what has become one of the most fruitful modern mathematical theories. 
	The field of \emph{information theory} was born out of the need of communication engineers to quantify the
	information carrying capacities of channels and the theoretical aspects of data compression.
	
		\bigskip
		\subsection{Foundations}
			
			At the root of Shannon's information theory is the simplifying assumption that information ultimately boils 
			down to the statistics of the symbols used to express it.
			Thus, another name for information theory could be \emph{information statistics}.
			By focusing solely on the statistics of the symbols, which can be described by mathematical equations 
			and axioms 
			in the spirit of Hilbert's program~\cite{cyborgscience}, we can dispense with the difficult 
			semantical questions related to humans.
			
			\index{information!source} 			
			We say that information is produced by a \emph{source}, which is a random variable $X$
			that takes on values from an alphabet $\cX=\{\alpha^1,\alpha^2,\ldots,\alpha^{|\cX|} \}$  
			according to some probability distribution $\Pr\{X=x\}=p(x)$.
			
			\begin{example}
				Let $X$ be the outcome of a coin flip. We will denote the alphabet $\cX=\{\text{`H'}, \text{`T'}\}$.
				If the coin is fair, then 
				\begin{eqnarray*}
					\Pr\{X=\alpha^1\} &\equiv& \Pr\{X=\text{`H'}\} \ =\  0.5, \\
					\Pr\{X=\alpha^2\} &\equiv& \Pr\{X=\text{`T'}\} \ =\  0.5.
				\end{eqnarray*}
				In this case, all outcomes are equally likely and it is maximally difficult to guess
				the result of the coin flip.
			\end{example}

			\begin{example}
				Suppose the Canadian border control center receives an hourly status message $M$ from a distant outpost. 
				The possible messages are:
				\begin{itemize}
					\item	No one has attacked, which occurs 99.7\% of the time: \mbox{$\Pr\{M\!=\!\alpha^0\}\!=0.997$}
					\item	\joke{Neighbour 1 has}{The Americans have} attacked: $\Pr\{M=\alpha^1\}=0.002$
					\item	\joke{Neighbour 2 has}{The Russians have} attacked: $\Pr\{M=\alpha^2\}=0.001$ \joke{}{
							\footnote{The quoted probabilities may not reflect the current 
									  geo-political balance of power.}}
				\end{itemize}
				In this scenario, one of the outcomes, $\alpha^0$, is much more likely than all the others.
				If we were to shut down the remote outpost and instead guess $M=\alpha^0$ every hour, we would only
				be wrong 0.3\% of the time! Of course, implementing such an approximate border defense system 
				is a silly idea, but in other situations an approximate result is just as good as the exact one.
			\end{example}
			
		 	Do we learn more information from one outpost message $M$, or from one coin flip $X$? 
		 	Using information theory, we should be able to \emph{quantify} the amount of information produced by each
		 	source.
		
		\bigskip
		\subsection{Shannon entropy}
			
			According to Shannon, there exists a single function sufficient to quantify the information content of a
			source. \index{information!content}
			This function is the key building block in all of information theory.
			
			\begin{definition}[Shannon entropy] 
				Given a statistical source $X$, over the alphabet $\cX$ with probability function $p(x)$, the quantity 
				\be
					H(X)= - \sum_{x \in \cX}  p(x) \log_2 p(x)
				\ee
				is the \emph{Shannon entropy} of the source.
			\end{definition}
			\index{entropy!shannon-entropy@$H(X)$: Shannon entropy}
			
			The entropy of an unknown source measures our \emph{uncertainty} about it and therefore, it measures
			how much information we learn, on average, when we look at a symbol from that source. 
			The entropy is typically measured in \emph{bits} since we use the base-2 logarithm in the calculation.
			
			The quantity $-\!\sum_{x \in \cX}  p(x) \log p(x)$ also appears in thermodynamics where it is known
			as the Boltzmann-Gibbs entropy function. It is used to denote the logarithm of the number of available
			microstates that are consistent with certain macroscopic constraints~\cite{callen,M92}. 
			Together the entropy, energy, volume, pressure and temperature form the macroscopic description
			of a given thermodynamical system.
			

			We now revisit the coin flip and outpost message scenarios from the previous examples.
			
			\begin{example} 
				The entropy of the balanced coin flip is:
				\be \nonumber
					H(X) = - 0.5\log0.5 - 0.5\log0.5 = -\log0.5 = \log2 = 1 \ [\textup{bit}].
				\ee
				In other words, we learn one bit of information every time we flip the coin.
				On the other hand, the entropy of an outpost message is only
				\be \nonumber
					H(M) = - 0.997\log0.997 - 0.002\log0.002 - 0.001\log0.001= 0.03222 \ [\textup{bits}].
				\ee
				Therefore, every coin flip carries about 30 times more information than a message from the 
				distant outpost.
			\end{example}

		 	The true power of the information theoretic approach becomes apparent when we try to describe 
		 	very long strings of symbols produced \emph{independently} by the same source.
		 	Consider a source $X$ which is used $n$ times to produce the sequence $X_1,X_2,\ldots,X_n$.
		 	We will denote the entire sequence as $X^n$ with a superscript. 
			We assume that the random variables $X_i$ are independent and identically distributed (\emph{i.i.d.})%
			\index{i.i.d.@i.i.d.: identical, independently distributed} according to $p(x)$.
						
			We can write down the probability of a given string $x^n=x_1,x_2,\ldots,x_n$ occurring as
			\bea
				\Pr\{X^n = x^n\}  	&=& p(x_1,x_2,\ldots,x_n)  \nonumber \\
									&=& p(x_1)p(x_2)\cdots p(x_n) \label{prodProbs}
			\eea
			since the $X_i$'s are independent.
			
			Next we ask the important question: 
			\begin{quote}
			\hspace{-2mm}``How often does the symbol $\alpha^i$ occur, on average, in a sequence \hspace{4mm} of 
				  $n$ uses 
				  of the source $(X_1,\ldots,X_n)$?''
			\end{quote}
			Because every one of the symbols in the sequence has $\Pr\{X = \alpha^i\}=p(\alpha^i)$, the overall 
			number of  $\alpha^i$'s in a string of length $n$ is going to be approximatively $np(\alpha^i)$. 
			Therefore, on average, the probability of a string $x_1,x_2,\ldots,x_n$ is 
			\begin{eqnarray}\!\!\!\!\!\!\!\!\!\!\!
			p(x_1,\ldots,x_n) 	&=&	p(x_1)p(x_2)\cdots p(x_n) \label{productOfProbs}\\
						&\approx&		\underbrace{p(\alpha^1)\cdots p(\alpha^1)}_{np(\alpha^1) \textup{ times}}
										\underbrace{p(\alpha^2)\cdots p(\alpha^2)}_{np(\alpha^2) }\cdots 
										\underbrace{p(\alpha^{|\cX|})\cdots p(\alpha^{|\cX|})}_{np(\alpha^{|\cX|}) }
										\label{informalApprox}\\ 
						&=&				\prod_{x \in \cX} \Bigl[ p(x) 			\Bigr]^{np(x)} \nonumber \\ 
						&=&				\prod_{x \in \cX} \Bigl[ 2^{\log_2 p(x)} \Bigr]^{np(x)} \nonumber \\ 
						&=&				2^{n \left[\sum_{x \in \cX} p(x) \log_2 p(x)\right] } \nonumber \\
						&=&				2^{\neg nH(X)}. \label{allTheSame} 
			\end{eqnarray}

			By going from equation \eqref{productOfProbs} to \eqref{informalApprox}, we have made a crucial change 
			in our point of view: 
			instead of taking into account the individual symbols $x_i$ of the sequence, we focus on the 
			global count of the symbol's occurrences. 
			In other words, we abandon the microscopic description of the string and trade it for a macroscopic one
			in the spirit of thermodynamics.
			At first, it is difficult to believe that the typical sequences
			all 
			have the same constant probability of occurrence, but we will see in the next section 
			that this intuitive argument can be made rigorous. 
		
		\bigskip
		\subsection{Typical sets} \label{section:typical-sets}

			Much of information theory is based on the concept of typical sequences.
			In the i.i.d. regime, \index{i.i.d.@i.i.d.: identical, independently distributed} 
			nearly all of the sequences produced by the source have the same probability of occurrence.
			Consider the following theorem which makes precise our earlier argument.
			\index{AEP@AEP: asymptotic equipartition theorem}
			
			\begin{theorem}[Asymptotic equipartition theorem] \label{thm:AEP}
				Let $X_1,X_2,\ldots,X_n$ be a sequence of independent random variables distributed according
				to $p(x)$, then
				\be
					\lim_{n \to \infty}
					\Pr\left\{ 
							\left\vert \neg{\tfrac{1}{n}\log p(X_1, X_2, \ldots, X_n)} - H(X) \right\vert 
							> \epsilon 
						\right\}
					= 0, \qquad \forall \epsilon > 0.
				\ee
			\end{theorem}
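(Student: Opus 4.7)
The plan is to recognize the quantity inside the probability as a sample mean of i.i.d.\ random variables and then invoke the weak law of large numbers. The crucial observation is that, by independence,
\[
-\tfrac{1}{n} \log p(X_1, X_2, \ldots, X_n) = -\tfrac{1}{n} \sum_{i=1}^n \log p(X_i),
\]
since $p(x_1, \ldots, x_n) = p(x_1) p(x_2) \cdots p(x_n)$ as already noted in equation~\eqref{prodProbs}. Defining $Y_i := -\log p(X_i)$ turns the left-hand side into the empirical average $\tfrac{1}{n} \sum_{i=1}^n Y_i$ of i.i.d.\ random variables.

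Next, I would compute the expectation of a single $Y_i$:
\[
\EE[Y_i] = \EE[-\log p(X)] = -\sum_{x \in \cX} p(x) \log p(x) = H(X),
\]
which is exactly the quantity we want the empirical average to concentrate around. So the theorem reduces to showing that $\tfrac{1}{n} \sum_i Y_i$ converges in probability to its mean $H(X)$.

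At this point I would simply apply the weak law of large numbers to the $Y_i$. If one wants a self-contained argument rather than citing the WLLN, the standard route is Chebyshev's inequality: since $\cX$ is finite, $\var(Y_i) = \sum_x p(x) (\log p(x))^2 - H(X)^2$ is finite, so
\[
\Pr\!\left\{ \left\vert \tfrac{1}{n}\sum_{i=1}^n Y_i - H(X) \right\vert > \epsilon \right\} \le \frac{\var(Y_1)}{n \epsilon^2} \xrightarrow{n \to \infty} 0.
\]

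There is no real obstacle here; the main subtlety is just bookkeeping around $p(x) = 0$ (handled by the usual convention $0 \log 0 = 0$, so such symbols never appear in realized sequences) and ensuring finite variance, which is automatic for the finite alphabet $\cX$ used throughout this chapter. The conceptual payoff, which motivates the theorem's role in what follows, is that $p(X^n) \approx 2^{-nH(X)}$ with high probability, making rigorous the informal calculation in equations~\eqref{productOfProbs}--\eqref{allTheSame} and setting up the definition of typical sets in the next section.
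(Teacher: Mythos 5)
Your proof is correct and is the standard argument. The paper itself states Theorem~\ref{thm:AEP} without giving a proof, and your reduction to the weak law of large numbers via $Y_i = -\log p(X_i)$ (with the self-contained Chebyshev bound, which is valid since the finite alphabet $\cX$ guarantees finite variance, and with the $p(x)=0$ symbols correctly dismissed as almost surely absent from realized sequences) is exactly the rigorous version of the informal computation in equations~\eqref{productOfProbs}--\eqref{allTheSame} that the theorem is meant to formalize.
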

		
			\noindent In other words, for large enough $n$, the probability that a sequence occurs approaches 
			$2^{-nH(X)}$ --- a constant value. 
			The result can also be interpreted in a different manner: sequences that have probability different 
			from $2^{-nH(X)}$ are not likely to occur.
			Using this insight, we can partition the space of all possible sequences, $\cX^n$, into two sets. 
			The set of sequences that have probability of occurrence close to $2^{-nH(X)}$ and those that do not. 
			We will call the former the set of typical sequences. \index{typical!set}
			
			\begin{definition}[Typical set]
				The set of \emph{entropy typical sequences} with respect to $p(x)$ is the set of all sequences $x_1,x_2,\ldots,x_n$ satisfying:
				\be
					2^{-n\left( H(X) + \epsilon \right)} 
						\leq p(x_1,x_2,\ldots,x_n) \leq 
								2^{-n\left( H(X) - \epsilon \right)}
				\ee
				we will denote this set $\typ$.
			\end{definition}
			
			{\renewcommand{\labelenumi}{(\roman{enumi})\ \ \ } 
			\noindent The typical set, $\typ$, has the following properties:
			\begin{enumerate}
				\item $\Pr\{X^n \in \typ \} \geq 1-\delta
						\qquad \quad \forall \epsilon, \delta \text{\ and $n$ sufficiently large.}$
				\item $| \typ | \leq 2^{n[H(X)+\epsilon]}
						\!\qquad \qquad \quad \forall \epsilon \text{\ and $n$ sufficiently large.}$
						
			\end{enumerate}
			}
			Property (i) is a consequence of the asymptotic equipartition theorem 
			and says that for large $n$, most of the sequences that come out of the source will be typical. 
			Property~(ii) is a bound on the size of the typical set which follows from the fact that all 
			typical sequences occur with the same probability.
			
			The bound on the size of the typical set is at the root of our ability to compress information.
			
		\bigskip
		\subsection{Compression}	\label{subsection:shannon-compression}
			
			Compression, also referred to as \emph{source coding}, is our ability to encode a given source string
			into a shorter string while preserving most of the information contained therein.
			\index{compression!clas@classical, Shannon}
			More generally, we talk about a \emph{compression rate} which can be achieved for a given source $X$.
			\index{rate!compression}
			\begin{definition}[Compression rate]
				We say a compression rate $R$ is achievable if for all $\epsilon>0$, there exists $N(\epsilon)$
				such that for $n > N(\epsilon)$, there exist maps:
				\bea
					E_n:\cX^n	 	& \rightarrow & 	\m \qquad |\m| = 2^{nR} \\
					D_n:\m 			& \rightarrow &		\cX^n 
				\eea
				such that 
				\be
					\Pr\{X^n \neq Y^n\} < \epsilon 
				\ee
				where $Y^n = \left(D_n \circ E_n\right)X^n$.
			\end{definition}
			
			Shannon's compression theorem \cite{S48} provides a bound the compression rates that are 
			achievable for a given source $X$.
			\begin{theorem}[Shannon source coding] 	\label{thm:shannon-source-coding}
				Let $X^n \equiv X_1,X_2,\ldots X_n$ be a sequence of symbols i.i.d. $\sim p(x)$, 
				then any compression rate $R$ which satisfies 
				\be
					R > H(X),
				\ee
				is achievable for $n$ sufficiently large.
			\end{theorem}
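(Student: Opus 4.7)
The plan is to build the encoder and decoder around the typical set $\typ$, which is exactly the object the preceding two subsections were designed to deliver. Since $R > H(X)$, I can first fix some $\epsilon_1 > 0$ small enough that $H(X) + \epsilon_1 < R$; this choice will guarantee that the typical set is small enough to be indexed by the codebook $\m$.

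Next I would construct the maps explicitly. For $n$ large enough, property (ii) of $\typ$ gives
\[
|\typ| \leq 2^{n(H(X)+\epsilon_1)} < 2^{nR} = |\m|,
\]
so I can fix an injection $\iota : \typ \hookrightarrow \m$ by simply enumerating the typical sequences and assigning each one a distinct element of $\m$. Reserve one element $m_0 \in \m$ outside the image of $\iota$. Define
\[
E_n(x^n) = \begin{cases} \iota(x^n) & \text{if } x^n \in \typ, \\ m_0 & \text{otherwise,} \end{cases}
\qquad
D_n(m) = \begin{cases} \iota^{-1}(m) & \text{if } m \in \iota(\typ), \\ x_0^n & \text{otherwise,} \end{cases}
\]
where $x_0^n \in \cX^n$ is an arbitrary fixed sequence. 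By construction, $(D_n \circ E_n)(x^n) = x^n$ whenever $x^n \in \typ$, so a decoding error can occur only on atypical sequences.

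It then remains to bound the error probability. I would write
\[
\Pr\{X^n \neq Y^n\} \;\leq\; \Pr\{X^n \notin \typ\},
\]
and invoke property (i) of the typical set: for any $\delta > 0$ and $n$ sufficiently large, $\Pr\{X^n \notin \typ\} \leq \delta$. Given any target tolerance $\epsilon > 0$, taking $\delta < \epsilon$ and $n$ large enough (larger than both the threshold from property (ii) with $\epsilon_1$ and the threshold from property (i) with $\delta$) yields $\Pr\{X^n \neq Y^n\} < \epsilon$, which is exactly the achievability condition.

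I do not anticipate any real obstacle: the asymptotic equipartition theorem and the two properties of $\typ$ already encapsulate all of the probabilistic content. The only care needed is in the order of quantifiers — one first picks $\epsilon_1$ so that the codebook has room for the typical set, and only afterwards takes $n$ large enough that the typical set carries probability close to one. Everything else is bookkeeping around the injection $\iota$.
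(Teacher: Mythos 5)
Your proposal is correct and follows essentially the same route as the paper's own argument: indexing the typical set $\typ$ (whose size is controlled by property~(ii)), encoding typical strings by their index and atypical strings by an error flag, and bounding the failure probability via property~(i). Your version merely makes the paper's sketch more precise — notably the explicit choice of $\epsilon_1$ with $H(X)+\epsilon_1 < R$ before letting $n$ grow — which is exactly the right order of quantifiers.
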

			
			The idea behind Shannon compression is very simple.
			We begin by indexing the set of typical sequences $\typ$ in some order.
			We know that the size of $\typ$ is
			\be
				\left\vert \typ \right\vert \leq 2^{n[H(X)+\epsilon]}.
			\ee
			therefore labels of length $\left\lceil H(X)+\epsilon \right\rceil$ bits will be sufficient to 
			index the typical sequences.
			The encoding operation $E_n$ for a given string $x^n$ consists of:
			\begin{itemize}
				\item Recording the index of $x^n$ if $x^n \in \typ$ and
				\item Rejecting the string and recording ``error'' if $x^n \notin \typ$.
			\end{itemize}
			The decoding operation $D_n$ simply takes the index record and replaces it with the original string.
			
			Because of Property~(i) of the set of typical sequences, we know that the ``error'' condition 
			will occur rarely:
			\be
				\Pr\{x^n \notin \typ \} < \delta 
					\qquad \quad \forall \epsilon, \delta \text{\ and $n$ sufficiently large.}
			\ee
			This guarantees the low-error condition $\Pr\{X^n \neq Y^n\} < \delta$ for any $\delta > 0$.
			Shannon's coding theorem holds since the rate $R = \left\lceil H(X)+\epsilon \right\rceil$ is achievable 
			for any $\epsilon$ provided $n$ is large enough.

		\bigskip
		\subsection{Multiple sources}	\label{subsection:multiple-sources}

			When we consider situations involving more than one source, some new information theoretic 
			quantities become relevant.
			Consider now two sources $X$ and $Y$ distributed jointly according to $p(x,y)$. We will denote 
			the marginals $p(x) = \sum_y p(x,y)$ and $p(y) = \sum_x p(x,y)$.
		
			Fist we define the quantity 
			\bea 	\index{entropy!conditional}
				H(X|Y)		&=&		- \sum_{x,y} p(x,y) \log \frac{p(x,y)}{p(y)}\\
							&=&		\sum_{y} p(y) H(X|Y=y) \\
							&=&		H(XY) - H(Y)	\label{conditional-classical-formula}
			\eea
			which is known as \emph{conditional entropy}. 
			The conditional entropy measures the uncertainty in $X$ that remains if we know the value of $Y$.

			The quantity that quantifies how much information is shared between two sources is
			\be
				I(X:Y) = H(X)+H(Y)-H(X,Y)
			\ee
			and is usually referred to as the \emph{mutual information}. 
			\index{information!mutual@$I(X:Y)$: mutual information}
			Two sources which have zero mutual information are independent. 
			
			The mutual information plays a key role in the characterization the information carrying 
			capacity of memoryless channels. Together the compression and channel capacity formulas are the two pillars
			of Shannon's information theory. In this thesis we focus mainly on compression problems and 
			refer the reader interested in channel capacities to the classic texts \cite{CT91,CK81}.
			\index{channel capacity}
			
			In order to get a better intuitive understanding of the conditional entropy and the mutual information,
			we often use a Venn-like diagram to represents them as in Figure~2--1. 
			\begin{figure}[ht] 	
				\begin{center}
					\input{figures/FIGVENN.pic} 
					\FigureCaptionOpt{Graphical representation of the conditional entropy and the mutual information.}{
									  Graphical representation of the conditional entropy and the mutual information.}
				\end{center}
				\label{fig:diagram}
			\end{figure}
			
			Furthermore, one can define the \emph{conditional mutual information} by conditioning the mutual information
			formula on a third system $Z$.
			\bea			\index{information!conditional}
				I(X:Y|Z)		&=&		H(X|Z) + H(Y|Z) - H(XY|Z)\\
								&=&		H(XZ) + H(YZ) - H(XYZ) - H(Z).
			\eea
			The conditional mutual information measures the correlations between $X$ and $Y$
			that are not shared with the variable $Z$.

		\bigskip
		\subsection{Slepan-Wolf coding}			\label{subsection:classical-slepian-wolf}
			
			Next we turn to the compression of two correlated sources $X$ and $Y$ distributed according to $p(x,y)$.
			If the two sources can be encoded together, then according to the Shannon's theorem a compression rate of
			$H(X,Y)$ can be achieved.
			The more interesting problem requires the sources to be encoded separately without communication between
			the encoders. 
			This is known as the Slepian-Wolf source coding problem \cite{SW73}.
			
			Using the coding scheme suggested by Slepian and Wolf, we can compress at rates $(R_X,R_Y)$ 
			for $X$ and $Y$ respectively if they satisfy the inequalities 
			\be	\label{slepian-wolf-inequalities}
			\begin{aligned}
				R_X 		&>	 	\ H(X|Y),	\\
				R_Y 		&>		\ H(Y|X),	\\
				R_X+R_Y		&>		\ H(XY).	
			\end{aligned}
			\ee
			This set of	inequalities corresponds to an achievable \emph{rate region} in the $(R_X,R_Y)$-plane,
			as illustrated in Figure~\ref{fig:classicalSlepainWolfRegion}. 
							\index{rate!region}
			
			\bigskip
			\begin{figure}[ht]   \begin{center}
		        {\small
		        \input{figures/ClassicalSlepianWolfRedion.pst}             
		        }
		        \end{center}
		        \FigureCaptionOpt{The classical Slepian-Wolf rate region.}{
			        			  The classical Slepian-Wolf rate region. The points $\alpha$ and $\beta$
			        			  are two corner points of the region.}
		        \label{fig:classicalSlepainWolfRegion}
		    \end{figure}				
			
			To prove that the Slepian-Wolf rate region is achievable, we only need to show protocols which achieve
			the rates of the two corner points $\alpha$ and $\beta$. Any rate pair on the line between the two corner 
			points can be achieved by \emph{time sharing}. All other points in the rate
			region can be obtained by \emph{resource wasting}.
			
			The proof that the corner points are achievable relies on a coding scheme based on random bins 
			and the properties of jointly typical sequences.  \index{typical!jointly}
			A string $(x^n,y^n)$ is \emph{jointly typical} if $x^n$ is typical according to $p(x)$, $y^n$ is typical 
			according to $p(y)$ and $(x^n,y^n)$ is typical according to $p(x,y)$.
			To encode, we will randomly assign to each string $x^n$ an index $i(x^n) \in \{1,2,\ldots,2^{nR_X}\}$.
			Similarly, to each $y^n$ we assign an index $j(x^n) \in \{1,2,\ldots,2^{nR_Y}\}$. 
			The decoding operation takes the received indices $(i,j)$ and tries to reproduce 
			a copy of the original string $(\hat{x}^n,\hat{y}^n)$.
			
			In the case of point $\alpha$ from Figure~\ref{fig:classicalSlepainWolfRegion}, the rates correspond to
			\bea
				R_X &=& H(X) + \epsilon_1,		\label{rateX}\\
				R_Y &=&	H(Y|X) + \epsilon_2.	\label{rateY}
			\eea
			in the limit where $\epsilon_1$ and $\epsilon_2$ go to zero.
			According to Shannon's source coding theorem (Theorem~\ref{thm:shannon-source-coding}), the rate 
			of equation \eqref{rateX} is sufficient to faithfully decode the string $x^n$,
			i.e. with high probability $\hat{x}^n = x^n$.
			The decoder then has to find the string $y^n$ which is jointly typical with the decoded $\hat{x}^n$
			and this is possible provided the rate $R_Y$ is greater than the conditional entropy $H(Y|X)$.
			The coding scheme for point $\beta$ is analogous.
			

			The multiparty version of the Slepian-Wolf problem was considered in \cite{W74,C75}.
			In the multiparty case, we have not two but $m$ sources $X_1,X_2,\ldots,X_m$ which are to be encoded
			separately and decoded by a common receiver. 
			We want to know the optimal rate tuple $(R_1,R_2,\ldots,R_m)$ at which we can compress the
			corresponding sources such that the information can be recovered faithfully after decoding.
			It is shown in \cite{C75} that the rates have to satisfy the following set of inequalities
			\be	\label{multiparty-slepian-wolf-inequalities}
				\sum_{k \in \K} R_k \ 	>	\ H\!\left(X_{\K} | X_{\Kbar}\right),
			\ee		
			for all $\K \subseteq \{1,2,\ldots,m\}$, $\Kbar = \{1,2,\ldots,m\} \setminus \K$ and
			$X_\K := \{X_i : i \in \K\}$.
			Note that the two-party inequalities \eqref{slepian-wolf-inequalities} are a special case
			of the more general multiparty result.


	\bigskip
	\bigskip
	\section{Quantum information theory}
	
		The fundamental ideas of quantum information theory are analogous to those of classical information theory. 
		In addition to the classical sources and channels, we simply introduce a new set of fundamental 
		building blocks in our studies.
		These \emph{quantum resources} governed by the laws of quantum mechanics can exhibit strange and non-intuitive
		behaviour but can nevertheless be studied with the techniques of information theory.

		\bigskip
		\subsection{Quantum states}
			
			The fundamental principles of quantum mechanics are simple enough to be explained in the space
			available on the back of an envelope, but to truly understand the implications of these
			principles takes years of training and effort.
			We assume the reader is familiar with basic notions of quantum mechanics \cite{sakurai,NC04}. 
			This section will focus on specific notions and notation that are used in quantum information theory.
			
			We will denote quantum systems by uppercase roman letters like $A,B,R$ and the corresponding 
			Hilbert spaces as $\cH^A, \cH^B, \cH^R$ with respective dimensions $d_A,d_B,d_R$.
		    We denote pure states of the system $A$ by \emph{ket}s: $\ket{\ph}^A$
		    and \emph{density matrices} as $\ph^A$.		
			Because of the probabilistic interpretation of quantum mechanics, all kets have unit norm and all
			density matrices are positive and with unit trace.
		    We will refer to both kets and density matrices as \emph{states}.
		    
			We use the partial trace operator to model partial knowledge of a state.
			Given a bipartite state $\rho^{AB}$ shared between Alice and Bob, we say that Alice holds in her lab
			the reduced density matrix: $\rho^A = \Tr_B\rho^{AB}$, where $\Tr_B$ denotes a partial trace over 
			Bob's degrees of freedom.
			In general the state produced in this manner will be \emph{mixed} -- a classical probability distribution
			over states.
			
			Conversely, any mixed state $\sigma^A \in$\footnote{Strictly speaking, we should say 
			$\sigma^A \in D(\cH^A)$ where $D(\cH^A)$ is the set of density matrices over $\cH^A$. 
			We will use this economy of notation consistently.}$\cH^A$
			 can be \emph{purified} to a fictitious 
			larger Hilbert space. 
			That is, we imagine a corresponding pure state $\ket{\sigma}^{AR} \in \cH^A \otimes \cH^R$
			such that taking the partial trace over the $R$ system gives the original state: 
			$\Tr_R\left(\proj{\sigma}^{AR}\right)=\sigma^A$. 
			The purification procedure is often referred to as escaping to the \emph{Church of the larger
			Hilbert space} in literature.

		\bigskip			
		\subsection{von Neumann entropy}
		
			Analogously to classical information theory, we quantify the information content of quantum systems 
			by using an entropy function.
			
										\index{entropy!von-neumann-entropy@$H(A)_\rho$: von Neumann entropy}
			\begin{definition}[von Neumann Entropy] 
				Given the density matrix $\rho^A \in \cH^A$, the expression
				\be
					H(A)_\rho=-\Tr\left(\rho^A\log\rho^A\right)
				\ee
				is known as the \emph{von Neumann entropy} of the state $\rho^A$. 
			\end{definition}
			
			Certain texts use the alternate notation $S(A)_\rho$ for the von Neumann entropy to distinguish it 
			from the classical Shanon entropy, but we choose not to make this distinction here.
			This overloading of notation is warranted since the von Neumann entropy is in fact the 
			Shannon entropy of the eigenvalues of the state. 
			Given the spectral decomposition of the state $\rho^A = \sum_i \lambda_i \proj{e_i}$, we 
			can calculate $H(A)_\rho=-\Tr\left(\rho^A\log\rho^A\right) = - \sum_i \lambda_i \log \lambda_i$.
			The von Neumann entropy of a pure state is zero, since it has only a single eigenvalue.
			
			For bipartite states $\rho^{AB}$ we can also define the quantum conditional entropy
			\be
				H(A|B)_\rho 	:= 		H(AB)_\rho - H(B)_\rho					\label{cond-entrpy} 
			\ee
			where $H(B)_\rho = -\Tr\left(\rho^B\log\rho^B\right)$ is the entropy of the reduced density matrix
			$\rho^B = \Tr_A\!\left( \rho^{AB}\right)$. In the same fashion we can also define the 
			quantum mutual information information
			\be
				I(A;B)_\rho 	:=		H(A)_\rho + H(B)_\rho - H(AB)_\rho 
			\ee
			and in the case of a tripartite system $\rho^{ABC}$ we define the conditional mutual information 
			as 
			\bea
				I(A;B|C)_\rho 	&:=&	H(A|C)_\rho + H(B|C)_\rho - H(AB|C)_\rho \label{cond-mut-info} \\
								&=&		H(AC)_\rho + H(BC)_\rho - H(ABC)_\rho - H(C)_\rho.
			\eea
			
			\noindent It can be shown that $I(A;B|C)$ is strictly non negative for any state $\rho^{ABC}$.
			The formula $I(A;B|C)\geq 0$ can also be written in the form
			\be	\label{strong-subadditivity}
				H(AC) + H(BC) 	\geq	H(C) + H(ABC).
			\ee
			This inequality, originally proved in \cite{LR73}, is called the \emph{strong subadditivity} of von Neumann 
			entropy and forms an important building block of quantum information theory.
			\index{strong subadditivity}

			On the surface, it may appear to the reader that quantum information theory has nothing new to offer except 
			a rewriting of the classical formulas in a new context.
			This observation is highly misleading.
			We present the following example to illustrate some of the new aspects of quantum information theory.
			
			\begin{example}	\label{example:EPR-pair}
				Consider the $\Phi^{+}$\! Bell state 
				\be
					\ket{\Phi}^{AB} = \tfrac{1}{\sqrt{2}}(\ket{00}^{AB}+\ket{11}^{AB}).
				\ee
				This state exhibits a form of quantum correlation called \emph{entanglement} that is fundamentally
				different from classical correlation.
				The associated density matrix is $\Phi^{AB} = \proj{\Phi}^{AB}$, which has
				the reduced density matrices $\Phi^A = \Phi^B = \tfrac{1}{2}(\proj{0} + \proj{1})$.
				
				Next we calculate the entropy of the two subsystems $A$, $B$ and the system as a whole 
				\be
					H(A)_\Phi = 1, 	\qquad
					H(B)_\Phi = 1, 	\qquad
					H(AB)_\Phi = 0,
				\ee
				since $\Phi^A,\Phi^B$ are maximally mixed and $\ket{\Phi}^{AB}$ is pure.
				Using these results, it is now simple to calculate the conditional entropy
				\be	\label{cond-ent-can-be-neg}
					H(A|B)	=	H(AB) - H(B)	= -1 \textup{ [bits]},
				\ee
				and the mutual information
				\be	\label{mutual-inf-can-be-2}
					I(A;B)	=	H(A)+H(B) - H(AB)	= 2 \textup{ [bits]}.
				\ee
			\end{example}
			
			Equation \eqref{cond-ent-can-be-neg} illustrates one of the key differences between classical information
			theory and quantum information theory: the fact that conditional entropy can be negative.
			How can we interpret negative values as uncertainties? Also, it is not immediately clear what we mean by
			conditioning on a quantum system in the first place. These issues will be discussed in some detail in 
			Section~\ref{section:state-merging} where we will give the conditional entropy an operational interpretation.
			
			In classical information theory, the mutual information between two binary sources attains its maximal value
			of $1$ when the two sources are perfectly correlated. As we can see from equation 
			\eqref{mutual-inf-can-be-2}, in the quantum world two qubits can be, in some sense,
			\emph{more than perfectly correlated} and have mutual information as much as $2$ bits!

		\bigskip
		\subsection{Quantum resources}	\label{subsection:quantum-resources}
		
			The current trend in quantum information theory is to look at communication tasks as inter-conversions
			between clearly defined information resources. 
			To render the resource picture generic, we always imagine a scenario in which two localized parties,
			usually called Alice and Bob, want to perform a certain communication task.
			Local computation will be regarded as free of cost in order to focus on the communication aspects 
			of the task. 
			
			An example of a classical communication resource is the \emph{noiseless channel} from 
			Alice to Bob, denoted $[c\to c]$. 
			The symbol $[c \to c]$ represents the ability to send one bit of information from Alice to Bob.
			A related classical resources is the \emph{noisy channel}, denoted $\{c\to c\}$ which is usually 
			modeled as a mapping $\cN^{X \to Y}$, described by a conditional probability $p(Y=y|X=x)$ where
			$X$ is the input variable sent by Alice and $Y$ the random variable received by Bob.
			The noiseless channel $[c\to c]$ is, therefore, a special case of the general channel $\{c\to c\}$ 
			with the identity mapping $\cN = \id^{X \to Y}$ from $X$ to $Y$.
			Another classical resource denoted $[cc]$ represents a random bit shared between Alice and Bob.
			
			Quantum information theory introduces a new set of resources. 
			In analogy to the classical case, we have the \emph{noiseless quantum channel} $[q \to q]$ which 
			represents the ability to transfers one \emph{qubit}, a generic two dimensional quantum system, 
			from Alice to Bob. \index{qubit}
			A \emph{noisy quantum channel}, $\{q \to q\}$, is modeled by a mapping $\cN^{A \to B}$ 
			which takes density matrices in $\cH^A$ to density matrices in $\cH^B$.
			The mapping $\cN$ is a \emph{quantum operation}: a completely positive trace preserving (CPTP) 
			operator \cite{NC04}.
			\index{CPTP@CPTP: completely positive trace preserving}
			
			One key new resource of quantum information theory is the maximally entangled state shared between 
			Alice and Bob 
			\be
				\ket{\Phi}^{AB} = \tfrac{1}{\sqrt{2}}(\ket{00}^{AB}+\ket{11}^{AB}),
			\ee
			which we denote $[qq]$.  
			Note that, since local operations are allowed for free in our formalism, any state 
			$\ket{\Phi'}^{AB} = U^A\! \otimes\!U^B \ket{\Phi}^{AB}$ where $U^A,U^B$ are local unitary operations
			is equivalent to $\ket{\Phi}^{AB}$.
			Entanglement is a fundamental quantum resource because it cannot be generated by local operations and
			classical communication (LOCC). 
			\index{LOCC@LOCC: local operations and classical communication}
			The precise characterization of entanglement has been a great focal point of research in the last decade.
			For an in depth review of the subject we refer the readers to the excellent papers \cite{VP98, HHHH}.
			
			Entanglement forms a crucial building block for quantum information theory because it can be used 
			to perform or assist with many communication tasks.
			In particular, two of the first quantum protocols that ever appeared involve \emph{ebits}, 
			or entangled bits.  \index{ebit}
			The \emph{quantum teleportation} protocol \cite{teleportation} uses entanglement and two bits of classical
			communication to send a quantum state from Alice to Bob
			\be	\tag{TP}	\label{teleport}
				[qq] + 2[c \to c]	\ \ \geq \ \ 	[q \to q],
			\ee
			while the \emph{superdense coding} protocol \cite{superdense} uses entanglement to send two classical
			bits of information with only a single use of a quantum channel
			\be	\tag{SC}	\label{superdense}
				[qq] + [q \to q]	\ \ \geq \ \ 	2[c \to c].
			\ee
			The above \emph{resource inequalities} indicate that the resources on the left hand side can be 
			used to simulate the resource on the right hand side.
			\index{resource!inequality}

			The two protocols \eqref{teleport} and \eqref{superdense} are only the tip of the iceberg: 
			there are many more protocols and fundamental results in quantum information theory that can be 
			written as resource inequalities.
			In Section~\ref{section:family-of-protocols} we will introduce some of them and the relationships 
			that exist between them.
			

		\bigskip		
		\subsection{Distance measures}		\label{subsection:distance-measures}

			In order to describe the ``distance'' between two quantum states we use the notions of 
			\emph{trace distance} and \emph{fidelity}.
			\index{trace distance}
		    The trace distance between quantum states $\sigma$ and $\rho$ is
			\be
				TD(\rho, \sigma) 	:= \|\rho-\sigma\|_1 = \mathrm{Tr}|\rho-\sigma|
			\ee
			where  $|X| = \sqrt{X^{\dagger}X}$.
			
			\index{fidelity}
			The fidelity between two pure states is simply the square of their inner product
			\be
				F(\ket{\varphi}, \ket{\psi}) = \left\vert \braket{\varphi}{\psi} \right\vert^2.
			\ee
			
			The most natural generalization of this notion to mixed states $\rho$, $\sigma$ is
			the formula
	    	\be
	    		F(\rho, \sigma) = \mathrm{Tr}\left(\sqrt{\sqrt{\rho}\sigma\sqrt{\rho}}\right)^2.
	    	\ee
		    Two states that are very similar have fidelity close to 1 whereas states with little similarity 
		    will have low fidelity.
		    
		    Note that some texts, (ex: \cite{NC04}) define the trace distance with an extra normalization factor
			of $\tfrac{1}{2}$ and write the fidelity without the square. These differences of convention 
			do not affect any of our findings but are important to point out to avoid confusion.
		
			
			The trace distance and fidelity measures are related, that is if two states $\rho$ and $\sigma$
			are close in one measure they are also close in the other \cite{Fuchs}. 
			More precisely, the quantities $TD$ and $F$ satisfy the following inequalities
			\begin{align}
				1-\sqrt{F} \ \	&\leq \ \tfrac{1}{2}TD 	\ \leq \ \ \sqrt{1-F},	\\
				1 - TD	   \ \ 	&\leq \ \ \ F 	\ \  	\ \leq \ \ 1 - \tfrac{TD^2}{4}.
			\end{align}
			Thus, if for certain states $F \geq 1-\epsilon$, then $TD \leq 2\sqrt{\epsilon}$. 
			Also, if $TD \leq \epsilon$, then $F \geq 1 - \epsilon$.

		\bigskip				
		\subsection{Ensemble and entanglement fidelity}	\label{subsection:ent-fid}
		
			\index{i.i.d.@i.i.d.: identical, independently distributed}
			The concept of an identical, independently distributed (i.i.d.) source also exists in quantum information
			theory. However, there are a number of ways we can adapt the concept to the quantum setting so some
			clarifications are in order.
			
			An ensemble $\E=\{p_i, \ket{\psi_i} \}$ is a set of quantum states $\ket{\psi_i}$ which occur with 
			probability $p_i$. 
			One way to describe a quantum source is to specify the states $\ket{\psi_i}$ and the
			corresponding probabilities $p_i$ associated with this source. 
			Using this ensemble characterization we can specify what it means to successfully 
			perform a communication protocol with that source.
			Let $\cN^{A \to \widehat{A}}$ with input $\ket{\psi}^A \in \cH^A$ and output 
			$\sigma^{\widehat{A}} \in \cH^{\widehat{A}}$ be the quantum operation associated with the protocol:
			\be
				\cN(\proj{\psi}) = \sigma^{\widehat{A}}.
			\ee
			To measure how faithfully the input state has been reproduced at the output we calculate the 
			input-output fidelity $F(\ket{\psi}^A, \sigma^{\widehat{A}})$. 
			In order to measure how faithfully the source as a whole is reproduced at the output, 
			we have to average over the input-output fidelities of the ensemble
			\be
				\bar{F}\!\left(\E, \cN\right) :=	\sum_i	p_i F(\ket{\psi_i}, \sigma_i), 
				\qquad \sigma_i = \cN(\proj{\psi_i}).
			\ee
			If we want the source to be preserved perfectly then we require $\bar{F}(\E, \cN)=~1$.
			In general, however, we will be content with approximate transmission where
			\be \label{eqn:MixedFidDef}
				\bar{F}\!\left(\E, \cN\right) \geq 1 - \epsilon
			\ee
			for arbitrary small $\epsilon$.
			It turns out that this way of describing the source may not be practical or desirable since 
			it requires a detailed knowledge of the inner workings of the source --- something that is 
			often impossible to obtain even in theory.
			
			The better way to describe a quantum source is specify only the average density operator 
			$\rho = \sum_i p_i \proj{\psi_i}$ for that source. 
			This characterization could be obtained through \emph{state tomography} \cite{NC04} and 
			does not presuppose any knowledge of the ensemble which generates $\rho$.
			This description is more general because the results we obtain for the density matrix $\rho$ will hold
			for \emph{all} ensembles $\{p_i, \ket{\psi_i} \}$ such that $\rho = \sum_i p_i \proj{\psi_i}$.
			
			This also leads us to an alternative and simpler way of judging the success of a quantum protocol 
			that relies on the idea of the \emph{Church of the larger Hilbert space}.
			Let $\ket{\psi}^{AR}$ be a purification of $\rho^A$ to some reference system $R$. 
			This reference system is entirely fiducial and does not participate in the protocol.
			In the larger Hilbert space $\cH^A\otimes\cH^R$ the $\cN^{A \to \widehat{A}}$ operation acts as
			\be
				\cN^{A\to \widehat{A}}\!\!\otimes\!\id^R \!\left(\proj{\psi}^{AR} \right) = \sigma^{\widehat{A}R}.
			\ee
			
			\noindent The operation is shown as a quantum circuit in Figure~\ref{fig:ent_fid}.
		    \begin{figure}[ht]   \begin{center}
		        \input{figures/figureEntFid.pst}             \end{center}
		        \FigureCaptionOpt{Quantum circuit illustrating the concept of entanglement fidelity.}{
			        			  A quantum circuit which shows $\cN$ acting on the $A$ system while
			        			  the reference, $R$, is left unperturbed.}
		        \label{fig:ent_fid}
		    \end{figure}						
			
			For approximate transmission, we now require the fidelity between the pure input state $\ket{\psi}^{AR}$ 
			and the possibly mixed output state $\sigma^{\widehat{A}R}$ to be high
			\be	\label{eqn:EntFidDef}
				F(\ket{\psi}^{AR}, \sigma^{\widehat{A}R}) 
					= 		\bra{\psi^{AR}} \sigma^{\widehat{A}R} \ket{\psi^{AR}}
					\geq 	1 - \epsilon.
			\ee
			Equation \eqref{eqn:EntFidDef} measures the \emph{entanglement fidelity} of the operation: 
			how well the protocol manages to transfers the $R$-entanglement from the $A$ system to the $\widehat{A}$
			system. \index{entanglement!fidelity}
			It can be shown \cite{EntFid} that if the channel $\cN$ has high entanglement fidelity then the 
			average fidelity  $\bar{F}(\E, \cN)$ will also be high for any ensemble $\E$ such that 
			$\rho^A=\sum_i p_i \proj{\psi_i}$.
			In other words, equation \eqref{eqn:EntFidDef} implies equation \eqref{eqn:MixedFidDef}.
			The entanglement fidelity paradigm has the advantage that the input state to the protocol is
			pure, which makes our analysis much simpler.
			Also, in this paradigm we are certain that any correlations the $A$ system might have with other systems 
			are preserved because of monogamy of entanglement.
					
			In the i.i.d. setting, we operate simultneously on $n$ copies of the same input state $\rho^A$. 
			We denote the tensor product of all the input states by $\rho^{A^n} = \rho^A \otimes \cdots \otimes \rho^A$
			($n$-copies). The quantum operation becomes $\cN^{A^n \to \widehat{A}^n}$ and the output state will be
			$\sigma^{\widehat{A}^n}$. 
			The entanglement fidelity 
			\be	\label{eqn:EntFidDefIID}
				F(\ket{\psi}^{A^nR^n}, \sigma^{\widehat{A}^nR^n}) 
					= 		\bra{\psi^{A^nR^n}} \sigma^{\widehat{A}^nR^n} \ket{\psi^{A^nR^n}} 
					\geq 	1 - \epsilon(n),
			\ee	
			is now a function of $n$, the \emph{block size} of the protocol. 
			Thus, in the i.i.d. setting we say the protocol implemented by $\cN$ succeeds when $\epsilon(n) \to 0$ as 
			$n \to \infty$. 
			More formally, for any required precision $\epsilon_0$, there exists an $N(\epsilon)$ such that for all 
			$n \geq N(\epsilon)$,  there exist $n$-dependent maps $\cN$ such that $\epsilon(n) < \epsilon_0$ in 
			equation~\eqref{eqn:EntFidDefIID}.


\chapter{Results in quantum information theory} 	\label{chapter:results-in-QIT}
	
	This chapter is dedicated to four landmark results in quantum information theory.
	The first of these is Schumacher compression, the quantum version of source coding \cite{Sc95}.
	The second is the 
	\emph{resource framework of quantum information theory} \cite{DHW05b},
	which defines rigorously the properties of quantum protocols and discusses the relationships between 
	them \cite{DHW04}.
	Then, in section~\ref{section:state-merging}, we focus our attention on one protocol 
	for compression of quantum information with side information known as \emph{state merging} \cite{HOW05,HOW05b}.
	Finally, in the last section of this chapter, we discuss in detail the fully quantum Slepian-Wolf (FQSW)
	protocol for state transfer and simultaneous entanglement distillation. 
	To a large extent, the multiparty results in this thesis are a direct generalization of the two-party 
	FQSW protocol, therefore, section~\ref{section:FQSW} is of central importance to the remainder of the argument.

	\bigskip
	\bigskip
	\section{Schumacher compression}		\label{section:schumacher-compression}
		
		If classical information theory is 60 years old \cite{S48} then quantum information theory must be 12 years old.
		Indeed, we can say that Schumacher laid the foundations of quantum information theory with his 1995 
		paper \cite{Sc95} where he showed that the von Neumann entropy, $H(\rho)$,
		plays the analogous role of Shannon entropy for quantum systems. Namely, it has operational interpretation 
		as the number of qubits necessary to convey the information from a quantum source $\rho$.

		\subsection{Typical subspace}		\label{section:quantum-typical}
	
			The notion of a typical set (section~\ref{section:typical-sets}) can easily be generalized to
			the quantum setting.
			Consider a source which produces many copies of the state $\rho^A$ which has spectral decomposition
			$\rho^A = \sum_i \lambda_i \proj{i}$. 
			
														\index{i.i.d.@i.i.d.: identical, independently distributed}
			In the the i.i.d. regime, the state produced by the source is given by 
			$\rho^{A^n} = \rho^{A_1}\otimesc\rho^{A_2}\!\otimes \cdots \otimes \!\rho^{A_n}$
			which can be written as 
			\be
			\begin{aligned}
			 	\rho^{A^n}	&=	\sum_{i^n}	\lambda_{i_1}\!\cdots\!\lambda_{i_n}
			 								\ \proj{i_1}^{A_1}\otimesc\cdots\otimesc \proj{i_n}^{A_n} \\
						   				&=	\sum_{i^n}	\lambda_{i^n}	\proj{i^n}^{A^n}.
			\end{aligned}
			\ee
			
			We now define the \emph{typical projector} as follows
			\begin{eqnarray}	
				\ptyp			 	&=&	\sum_{i^n \in \typ} 	\proj{i_1}^{A_1}\otimesc	
																\proj{i_2}^{A_2}\otimesc 
																\ldots \proj{i_n}^{A_n}  \nonumber \\
									&=&	\sum_{i^n \in \typ} 	\proj{i^n}^{A^n},
			\end{eqnarray}
			where we sum over all the typical sequences $\typ$ with respect to the classical probability distribution
			$p(i) := \lambda_i$. 
			
			We call the support of $\ptyp$, the \emph{typical subspace} of $\cH^{A^n}$ associated with~$\rho^A$.
			\index{typical!subspace}
			The typical subspace, by its construction, inherits the characteristics of the typical set.
			Indeed, $\ptyp$ has the following properties 
			
			\smallskip
			{\renewcommand{\labelenumi}{(\roman{enumi})\ \ \ } 
			\begin{enumerate}
				\item 	$\Tr\left[ \rho^{A^n} \ptyp \right] > 1 - \delta \qquad 
							\forall \delta, \epsilon > 0$\ \  and $n$ sufficiently large.

				\item 	$\Tr\left[\ptyp\right] \leq 2^{n[H(A)_\rho +\epsilon]} \qquad \ 
							\forall \epsilon >0$\ \ and $n$ sufficiently large.
			\end{enumerate} 	}	
			\smallskip
			
			Property (i) says that, for large $n$, most of the states produced by the source will lie mostly inside 
			the typical subspace.
			Property~(ii) is a bound on the size of the typical subspace which follows from the classical bound on 
			the size of the typical set $\typ$.
			These two properties are at the heart of our ability to compress quantum information.


		\bigskip
		\subsection{Quantum compression}

			Analogously to the classical case, we have the notion of a quantum compression rate.
			In the quantum regime, we use the entanglement fidelity (see Section \ref{subsection:ent-fid}) 
			to measure how well the state is reproduced after decoding.
													
													\index{rate!quantum compression}
			\begin{definition}[Quantum compression rate]
				We say a compression rate $R$ for the source $\rho^{A}$ is achievable if for all $\epsilon$, 
				there exists $N(\epsilon)$ such that for $n > N(\epsilon)$, there exist maps:
				\bea
					\E_n:\cH^{A^n}	 	& \rightarrow & 	\m \qquad |\m| = 2^{nR} \\
					\cD_n:\m 	\ \ 	& \rightarrow &		\cH^{\widehat{A}^n}
				\eea
				such that the purification $\ket{\psi}^{A^nR^n}$ of $\rho^{A^n}$ satisfies  
				\be
					F(\ket{\psi}^{A^nR^n}, \sigma^{\widehat{A}^nR^n}) 
						= \bra{\psi} \sigma^{\widehat{A}^nR^n} \ket{\psi}^{A^nR^n} > 1 - \epsilon.
				\ee
				where $\sigma^{\widehat{A}^nR^n} =  \cD_n\!\circ\!\E_n\!\otimes\!\id^{R^n} \left(\proj{\psi}\right)$.
			\end{definition}
			
			\index{compression!quant@quantum, Schumacher}
			\begin{theorem}[Schumacher noiseless coding]
				An i.i.d. quantum source $\rho^A$ can be compressed at a rate $R$ if $R>H(A)_\rho$ and cannot
				if $R<H(A)_\rho$.
			\end{theorem}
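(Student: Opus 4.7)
The plan is to establish achievability and converse separately; the direct part is essentially the quantum translation of Shannon compression using the typical subspace of Section~\ref{section:quantum-typical}, and the converse rests on a continuity bound for quantum mutual information.

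For achievability, I would fix $\epsilon > 0$ with $R > H(A)_\rho + \epsilon$. Property~(ii) of Section~\ref{section:quantum-typical} gives $\rank(\ptyp) \leq 2^{n[H(A)_\rho+\epsilon]} \leq 2^{nR}$ for $n$ sufficiently large, so the typical subspace embeds into $\m$ via some isometry $V$. Define $\E_n$ as the CPTP map that applies $V$ to the projection of the input onto the typical subspace and routes the orthogonal complement to a fixed ``garbage'' state in $\m$, and let $\cD_n$ be a CPTP completion of $V^\dagger$. For $\ket{\psi}^{A^nR^n}$ a purification of $\rho^{A^n}$, the output decomposes as $\sigma^{\widehat{A}^nR^n} = (\ptyp \otimes \id^{R^n}) \proj{\psi} (\ptyp \otimes \id^{R^n}) + (\text{positive garbage term})$, and the entanglement fidelity is bounded below by
\be
\left\vert \bra{\psi}(\ptyp \otimes \id^{R^n})\ket{\psi} \right\vert^2 = \left(\Tr[\ptyp \rho^{A^n}]\right)^2 \geq (1-\delta)^2,
\ee
which exceeds $1-\epsilon$ by Property~(i) for $n$ large.

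For the converse, suppose some protocol achieves entanglement fidelity $\geq 1-\epsilon_n$ with $\epsilon_n \to 0$ at some rate $R$. Since encoding and decoding factor through $\m$, data processing together with the dimension bound $I(M;R^n) \leq 2\log\dim\m$ gives $I(\widehat{A}^n;R^n)_\sigma \leq 2nR$. On the other hand, $I(A^n;R^n)_\psi = 2nH(A)_\rho$ because $\ket{\psi}^{A^nR^n}$ is pure. By the Fuchs--van~de~Graaf inequalities of Section~\ref{subsection:distance-measures}, the trace distance between $\sigma^{\widehat{A}^n R^n}$ and $\proj{\psi}^{A^n R^n}$ is $O(\sqrt{\epsilon_n})$; an Alicki--Fannes continuity bound then yields $I(\widehat{A}^n;R^n)_\sigma \geq 2nH(A)_\rho - o(n)$. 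Combining forces $R \geq H(A)_\rho - o(1)$, so rates below $H(A)_\rho$ are not achievable.

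The main technical obstacle lies on the converse side: Fannes-style continuity bounds carry an additive error proportional to $n\log d_A$, which is of the same order as the quantities being compared, so one must let $\epsilon_n$ shrink fast enough that this penalty remains $o(n)$. The achievability side reduces, after the routine check that the CPTP completion does not spoil the fidelity computation above, to Properties~(i)--(ii) already established in Section~\ref{section:quantum-typical}.
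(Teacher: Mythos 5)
Your proposal is correct, and it does more than the paper's own treatment. On the direct part you follow essentially the paper's route: index (embed) the typical subspace into $\m$ using Property~(ii), discard the atypical component, and lower-bound the entanglement fidelity via
\be
F \geq \left\vert \bra{\psi}\bigl(\ptyp \otimes \id^{R^n}\bigr)\ket{\psi} \right\vert^2
= \left(\Tr\!\left[\ptyp\,\rho^{A^n}\right]\right)^2 \geq (1-\delta)^2,
\ee
which is a cleaner version of the paper's sketch (your isometry $V$ into $\m$ with decoding by a CPTP completion of $V^\dagger$ fixes the paper's slightly loose statement that ``the decoding operation is the identity''). Where you genuinely diverge is the converse: the paper never proves the ``cannot if $R<H(A)_\rho$'' direction at all --- it only sketches achievability and asserts optimality --- whereas you supply the standard converse via $I(A^n;R^n)_\psi = 2nH(A)_\rho$ for the pure input, the dimension bound $I(M;R^n)\leq 2\log\dim\m = 2nR$, data processing under the decoder, and Fuchs--van de Graaf plus Alicki--Fannes continuity. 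That argument is sound, and it buys an actual proof of the full theorem rather than half of it. One refinement: the ``main technical obstacle'' you flag is not really an obstacle. The continuity penalty is of order $\sqrt{\epsilon_n}\,n\log d_A$, and the paper's Definition of an achievable rate (fidelity $\geq 1-\epsilon$ for every $\epsilon$ once $n$ is large) already forces, by a diagonal choice, a sequence $\epsilon_n\to 0$; since $\sqrt{\epsilon_n}\to 0$, the penalty is $o(n)$ no matter how slowly $\epsilon_n$ vanishes, so no quantitative decay condition on $\epsilon_n$ is needed.
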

			
			The idea behind the Schumacher compression result is simple.
			We encode by performing the measurement 
			\be
				M_\E = \{ \ptyp, \ \id\! -\! \ptyp \}.
			\ee
			If \ptyp occurs, we keep this state since it is typical. 
			Otherwise, if $(\id - \ptyp)$ occurs, we replace the state with some fixed state 
			$\ket{\textrm{err}}$ as an indicator that an error has occurred. 
			The decoding operation $\cD_n$ is the identity operation.
			Property~(i) from the previous section guarantees that the probability of error tends to zero 
			when $n$ becomes large.
			Also, since we only send states within the typical subspace, Property~(ii) gives us a bound 
			on the amount of quantum information necessary to convey this state.

			Note that the compression protocol described above works both for scenarios where the mixed state
			is obtained from a stochastic average over pure states $\rho^A = \sum_i p_i \proj{\psi_i}^A$ 
			and scenarios where the density matrix is part of a larger pure state $\rho^A = \Tr_E \proj{\Psi}^{AE}$.

	\bigskip
	\bigskip
	\section{Quantum protocols as resource inequalities}		\label{section:family-of-protocols}

		Most of the old results of quantum information theory form a loose collection of coding theorems,
		each of them with applications only to one specific communication task.
		Recently, there has been a push to organize these results into a unified framework of resource
		inequalities~\cite{BBPS,DHW04,DHW05b,FQSW}.  \index{resource!inequality}
		A resource inequality is a quantitative statement regarding inter-conversions between clearly
		defined \emph{generic} information resources.
		The key benefit of such a framework is that, like {\sc Lego} blocks, we can build one communication 
		protocol based on another, and generally work at a higher level of abstraction then is possible
		when working with the specifics of each protocol.

		%
		%

		\bigskip
		\subsection{The framework}

			A unified framework for both classical and quantum information theory was developed in \cite{DHW05b}.
			The notions ``resource'' and ``protocol'' are clearly defined as well as the rules for combining and 
			composing them.
			In particular, this framework deals with the class of bipartite, unidirectional communication tasks 
			involving memoryless channels and sources in the i.i.d regime.
								\index{protocol!framework}
			
			Borrowing from the cryptography heritage, the two main participants in the protocols are called
			Alice and Bob. Alice is usually the sender, and performs some encoding operation while Bob
			does the decoding.
			Additionally, the framework introduces two novel participants \emph{Eve} and the \emph{Reference}.
			We use Eve to model information lost to the environment in a noisy channel.
			The reference $R$ is a fiducial purification system which allows us to deal with mixed states in a 
			simple manner as discussed in Section~\ref{subsection:ent-fid}.
			Most important of all, the framework introduces the \emph{Source}, which produces some state $\rho^S$
			and distributes it to the participants before the beginning of the protocol.

			In Section~\ref{subsection:quantum-resources}, we introduced some of the resources of 
			information theory like the noiseless classical channel $[c \to c]$, noisy classical 
			channel $\{c \to c\}$ and the quantum equivalents $[q \to q]$ and $\{q \to q\}$.
			In order to be more precise, we sometimes use a different notation for noisy channels
			\be
				\{q \to q\} \equiv	\rsrc{ \cN }
			\ee
			which explicitly shows the map $\cN$ associated with that channel.
			Note that the angle brackets $\rsrc{ . }$ indicate that we are working in the asymptotic regime
			of many copies of the resource: $\rsrc{\cN} \sim \cN^{\otimes n}$ and 
			$\rsrc{\rho^{AB}} \sim (\rho^{AB})^{\otimes n}$.
			We will denote a \emph{relative resource} as $\rsrc{\cN:\rho^A}$ which is a channel 
			guaranteed to  behave as the channel $\cN$ provided the input state is exactly $\rho^A$. 
			\index{resource!relative}
			
			As a first example of the protocol framework, consider the Schumacher compression result from 
			the previous section. It can be represented by the following resource inequality 
			\be	\label{RI:schumacher-compression}
				\left(H(B)_\sigma+ \delta \right)[q \to q] 	\geq 	\rsrc{ \id^{A \to B}  : \rho^A}
			\ee
			for any $\delta \geq 0$ and where $\sigma^B := \id^{A \to B}(\rho^A)$.
			The above equation indicates that $(H(B) + \delta)$ qubits are sufficient to accurately convey
			the information contained in the state $\rho^A$ to another party.

			%
			%
			%
			%
			%
			%
			%


		\bigskip
		\subsection{The family of quantum protocols}

			Many protocols of quantum information theory deal with the conversion of some
			noisy resource into the corresponding noiseless version possibly with the use of some auxiliary resources.
			It turns out that many of these protocols are related and it is sufficient to prove two protocols of this
			type and all other protocols follow as simple consequences when we apply teleportation \eqref{teleport}
			or superdense coding \eqref{superdense} either before or after these protocols \cite{DHW04}.
			
			The two protocols which generate all the others of this ``family tree'' are called the mother and 
			father protocols.
			The mother protocol takes the static resource $\rsrc{ \rho^{AB} }$ and some quantum communication
			to distill maximally entangled bits. The resource inequality is
			\be	\tag{\female}	\label{RI:mother}
				\rsrc{ \rho^{AB} } + \frac{1}{2}I(A;R)_\psi[q \to q] \geq \frac{1}{2}I(A;B)_\psi[qq],
			\ee	\index{protocol!mother}
			where the entropies are taken with respect to a purification $\ket{\psi}^{ABR}$ of $\rho^{AB}$.
			The mother protocol can be used to derive three ``children''.
			The first of these is entanglement distillation, also known as the hashing inequality~\cite{hashing}.
			We start with equation \eqref{RI:mother} but implement the $[q \to q]$ term as teleportation
			\be
					\frac{1}{2}I(A;R) \bigg( [qq] +  2[c \to c] 	\geq 	[q \to q] \bigg).
			\ee
			After canceling some of the $[qq]$ terms on both sides we obtain 
			\be		\label{RI:hashing}
				\rsrc{ \rho^{AB} } + I(A;R)_\psi[c \to c] \geq I_c(A\rangle B)_\psi[qq],
			\ee
			where $I_c(A\rangle B) = \tfrac{1}{2}I(A;B) - \tfrac{1}{2}I(A;R) = H(B) - H(AB)$. 
			The mother inequality can also be used to derive noisy versions of the teleportation \cite{DHW04}
			and superdense coding protocols \cite{noisySDC}.

			The father protocol takes the dynamic resource of a noisy quantum channel $\rsrc{ \cN^{A \to B} }$
			and some additional entanglement to simulate a noiseless quantum channel.
			Consider a setup where we send half of the state $\ket{\phi}^{AR}$ through the channel $\cN$,
			which we model as an isometric extension $U^{A \to BE}_\cN$ to an environment $E$.
			The resulting state is $\ket{\psi}^{BER} = U^{A \to BE}_\cN\otimesc\id^R \ket{\phi}^{AR}$ 
			and the resource inequality  is 
			\be	\tag{\male}		\label{RI:father}
				\rsrc{ \cN^{A \to B} }	+ \frac{1}{2}I(R;E)_\psi[qq]  \geq	\frac{1}{2}I(R;B)_\psi[q \to q].
			\ee	\index{protocol!father}
			Using the father inequality and the superdense coding result \eqref{superdense}, we can derive 
			the formula for the entanglement-assisted classical capacity of a quantum channel \cite{BSST99}
			\be
				\rsrc{ \cN^{A \to B} : \rho^A } + H(R)_\psi[qq] \geq	 I(R;B)_\psi[c  \to c].
			\ee
			More importantly, we can obtain the important quantum capacity result, the LSD Theorem~\cite{lloyd,shor,D03},
			named after Lloyd, Shor and Devetak:
			\be
				\rsrc{ \cN }	\geq	I_c(R\rangle B)_\psi[q \to q].
			\ee


			Furthermore, it turns out that equation \eqref{RI:mother} and \eqref{RI:father} are related.
			We can obtain one from the other by replacing dynamic resources with static resources and 
			adjusting for the definitions of $A$ and $R$.
			This duality could be a mere coincidence or it could be indicative of some hidden structure.			
			We will see in section~\ref{section:FQSW} that in fact there exists an even bigger mother!
			The FQSW protocol, sometimes called ``the mother of all protocols'', is a quantum
			protocol that generates both the mother and father protocols as well as many other 
			protocols that were not part of the original family tree \cite{FQSW,triangle}.

	\bigskip
	\bigskip
	\section{State merging}							\label{section:state-merging}
		
		Consider a setup where Alice and Bob share the state $\rho^{AB} = \Tr_R \proj{\psi}^{ABR}$.
		We would like to know how much quantum information Alice needs to send to Bob to \emph{merge} her 
		part of the state into Bob's. The problem is illustrated graphically in Figure~\ref{fig:STATEMERGING}.
		\bigskip
		
	    \begin{figure}[ht]   \begin{center}
	        \input{figures/figureSTATEMERGINGdiagram.pic}             \end{center}
	        \FigureCaptionOpt{Diagram of the state merging protocol.}{
		        Pictorial representation of the state merging protocol.
		        Alice's part of $\ket{\psi}^{ABR}$ is merged with Bob's part.
		        In the end, the purification of $R$ is held entirely in Bob's system.}
			\label{fig:STATEMERGING}
	    \end{figure}		
		    
		In the limit of many copies of the state, the rate at which Alice needs to send quantum information 
		to Bob is given by the formula
		\be
			R 	\ >	\ 	H(A|B)_\rho,
		\ee
		provided classical communication is available for free.
		The primitive which optimally achieves this task is called the \emph{state merging protocol} \cite{HOW05,HOW05b}.
		We will discuss this protocol in some detail in section~\ref{subsection:merging-protocol} but before that we
		dedicate some time to the quantum conditional entropy.
		
		\bigskip
		\subsection{Quantum conditional entropy}	\label{subsection:quantum-conditional-entropy}
		
			The classical notion of conditional entropy $H(X|Y)$ is the amount of communication needed to
			convey the information content of the source $X$ given knowledge of the variable $Y$ at the decoder. 
			As we saw in section~\ref{subsection:classical-slepian-wolf}, the conditional entropy is naturally suited 
			to application in the Slepian-Wolf problem of distributed compression.
			
			When we try to adapt the conditional entropy to the quantum world, we run into a number of 
			conceptual difficulties. 
			Indeed, in order to define $H(A|B)_\rho$ for a quantum state $\rho^{AB}$ we need to replace 
			the classical notion of a conditional distribution with some concept better suited 
			to density matrices \cite{adami1,CondMutInfo}.
			A more pragmatic approach is to simply mimic the form of equation~\eqref{conditional-classical-formula}
			from Section~\ref{subsection:multiple-sources} 
			and write the conditional entropy as a difference of two regular entropies 
			\be
					H(A|B)_\rho		:=		H(AB)_\rho	-	H(B)_\rho.
			\ee
			In this way, we obtain a \emph{formula} for the conditional entropy but still lack an \emph{interpretation}.
			The situation is complicated further by the fact that the quantum conditional entropy can take on 
			negative values seemingly indicating that it is possible to know more about the global state than about 
			a part of it!
			Also to be explained is the relation between the negative values of the conditional entropy 
			and the presence of quantum entanglement as indicated by the entropic Bell inequalities~\cite{adami2}.

			The interpretation issues around the quantum conditional entropy were finally settled in a satisfactory
			manner in two recent papers \cite{HOW05,HOW05b}, in which the quantum conditional entropy is given an
			operational interpretation in terms of the state merging protocol.

		\bigskip
		\subsection{The state merging protocol}		\label{subsection:merging-protocol}
									
									\index{protocol!state merging}		
			Consider a state $\rho^{AB}$ shared between Alice and Bob and a purification of that 
			state $\ket{\psi}^{ABR}$.
			We want to send Alice's part of the state to Bob by using an unlimited amount of classical communication 
			and as little quantum communication as possible.
			Let $\Phi_K \in \cH^{A_0B_0}, \Phi_L \in \cH^{A_1B_1}$ be two maximally entangled states of rank 
			$K$ and $L$ respectively.
			The state merging protocol takes as inputs the state $\proj{\psi}^{ABR}$ and $\log K$ ebits in the form
			of $\Phi_K$ and applies the quantum operation 
			$\cM\!:\!AA_0\otimes BB_0 \to A_1\otimes B_1 \widehat{B}B$ to produce a state 
			\be
				\sigma^{A_1B_1 \widehat{B}BR} 
					= (\cM \otimes \id^R)\!\left(	\proj{\psi}^{ABR} \otimes \Phi_K \right).
			\ee
			We want the state $\rho^{AB}$ to be transferred entirely to Bob's lab: 
			$\sigma^{\widehat{B}B} \approx \rho^{AB}$.
			In addition, $\log L$ ebits are generated by the protocol if $\sigma^{A_1B_1} \approx \Phi_L$.
			More precisely, we measure the success of the protocol by the entanglement fidelity
			\be
				F\!\left( \sigma^{A_1B_1 \widehat{B}BR}, \Phi^{A_1B_1}_L \otimes \proj{\psi}^{ABR} \right)
				\geq	1 - \epsilon.
			\ee

			In our original formulation of the state transfer task we asked how much quantum \emph{communication}
			from Alice to Bob is necessary, yet in the above formulation we only speak of 
			\emph{entanglement} being consumed and generated. 
			This is so because the two resources become equivalent when unlimited classical communication is allowed:
			\be
				[qq] \equiv [q\to q]		\qquad \qquad	\textrm{(free classical communication)}.
			\ee
			Thus, we can say that $\Phi_K$ is the entanglement consumed by the protocol while $\Phi_L$ is the
			entanglement generated. 
			In the i.i.d. regime, where $\ket{\psi}^{ABR} = \left(\ket{\ph}^{ABR}\right)^{\otimes n}$, we define 
			the \emph{entanglement rate}
			\index{i.i.d.@i.i.d.: identical, independently distributed}
			\be
				R =	\frac{1}{n}\left(\log K - \log L \right),
			\ee
			which can take on both positive and negative values.
			When $R > 0$, the entanglement resource has been consumed by the protocol, 
			but when $R < 0$ the protocol is actually generating entanglement as stated in the following theorem.
			
			\begin{theorem}[Quantum state merging \cite{HOW05b}]	\label{thm:state-merging}
				For a state $\rho^{AB}$ shared by Alice and Bob, the entanglement cost of merging is
				equal to the quantum conditional entropy $H(A|B) = H(AB) - H(B)$.
				When $H(A|B)$ is positive, merging is possible only if $R > H(A|B)$ ebits per input copy are
				provided.
				When $H(A|B)$ is negative, the merging is possible by local operations and classical communication
				and moreover, $R < - H(A|B)$ maximally entangled states are obtained per input copy.
			\end{theorem}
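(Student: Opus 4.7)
The plan is to prove both parts of the theorem via the \emph{decoupling} paradigm. The key observation is that since $\ket{\psi}^{ABR}$ is pure, Alice succeeds in transferring her share of $\rho^{AB}$ to Bob if and only if her residual system becomes uncorrelated with the reference $R$. Indeed, once the joint state on Alice's residue and $R$ factorizes into a product, Uhlmann's theorem furnishes a recovery isometry on Bob's side (together with whatever Alice transmits) that reconstructs a purification of $\rho^R$ residing entirely in Bob's lab. The entire problem therefore reduces to a decoupling estimate on Alice's side plus a standard resource-conversion step.

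First I would project onto the typical subspaces of $\rho^{A^n}$, $\rho^{B^n}$, and $\rho^{R^n}$ as in Section~\ref{section:quantum-typical}, so that the effective Hilbert-space dimensions are approximately $2^{nH(A)}$, $2^{nH(B)}$, and $2^{nH(R)}=2^{nH(AB)}$. Alice then applies a Haar-random unitary $U$ on her typical subspace and partitions the output into registers $A_1$ of dimension $K$ (to be sent) and $A_2$ of dimension $L$ (retained), with $KL \approx 2^{nH(A)}$. The central technical step is a Horodecki--Oppenheim--Winter decoupling lemma: a second-moment computation over the Haar measure together with the estimate $\Tr((\psi^{AR})^2)\approx 2^{-nH(B)}$ valid on the typical subspace shows that $\mathbb{E}_U \|\sigma^{A_2 R^n} - \tau^{A_2}\otimes\rho^{R^n}\|_1$ is exponentially small as soon as $\log K > n(H(A|B)+\delta)$. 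Markov's inequality then selects a specific $U$ achieving decoupling.

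Once decoupling holds, Uhlmann's theorem supplies a decoding isometry on Bob's side that recovers the target purification with fidelity $1-\epsilon$. The remaining step is to trade the $\log K$ qubits of direct quantum communication for entanglement via teleportation \eqref{teleport}, which replaces the qubit channel by $\log K$ ebits and $2\log K$ classical bits. The $\log L \approx n(H(A) - H(A|B)) = nH(B)$ ebits that survive between $A_2$ and Bob's matching register after the Uhlmann recovery count against this cost, yielding a net entanglement consumption rate of $\tfrac{1}{n}(\log K - \log L) = H(A|B) + o(1)$. When $H(A|B)<0$ one takes $K=1$: decoupling is already satisfied without any transmission, and the protocol \emph{generates} $-H(A|B)$ ebits per input copy using only local operations and classical communication.

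For the converse, I would compare the initial and final values of the conditional entropy $H(\text{Alice}|\text{Bob})$. Since classical communication does not decrease this quantity while consumption of a maximally entangled state of rank $K$ contributes $-\log K$ to it, the initial conditional entropy $nH(A|B)_\rho - \log K$ cannot exceed the final conditional entropy, which is approximately $-\log L$ by the $\epsilon$-closeness condition and Fannes' inequality. Rearranging yields $R = \tfrac{1}{n}(\log K - \log L) \geq H(A|B)_\rho - o(1)$. The main obstacle in the whole argument is the decoupling estimate itself, which requires a delicate second-moment Haar calculation and careful control of typical-subspace truncation errors; Uhlmann, teleportation, and the converse step all enter essentially as black boxes.
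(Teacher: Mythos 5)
Your overall architecture --- decouple Alice's residual system from $R$, invoke Uhlmann's theorem for Bob's decoder, implement the quantum communication by teleportation, and recycle the ebits generated --- is exactly the route the thesis takes, where Theorem~\ref{thm:state-merging} is obtained from the FQSW resource inequality \eqref{RI:FQSW} by substituting teleportation and recycling the entanglement (Section~\ref{subsec:FQSW-RI}). However, your rate bookkeeping contains an error that breaks the central decoupling step. With $KL \approx 2^{nH(A)}$, the Haar second-moment bound of Theorem~\ref{thm:oneShotMother} gives
\begin{equation} \nonumber
	\frac{d_{A^{\!S}}\, d_{R^n}}{K^2}\,\Tr\!\left[(\psi^{A^{\!S}R^n})^2\right]
	\ \approx\ \frac{2^{nH(A)}\, 2^{nH(AB)}\, 2^{-nH(B)}}{K^2}
	\ =\ \frac{2^{\,n I(A;R)}}{K^2},
\end{equation}
(your purity estimate $\Tr[(\psi^{AR})^2]\approx 2^{-nH(B)}$ is correct, since $\psi^{AR}$ and $\psi^{B}$ share a spectrum). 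Decoupling therefore requires $\log K > \tfrac{n}{2}I(A;R) = \tfrac{n}{2}\left(H(A)+H(A|B)\right)$, \emph{not} $\log K > n(H(A|B)+\delta)$: at your threshold the bound equals $2^{n(I(A;R)-2H(A|B))} = 2^{nI(A;B)}$, which diverges. Relatedly, $H(A)-H(A|B) = I(A;B)$, not $H(B)$, so your claim $\log L \approx nH(B)$ is false in general; and with your stated values of $K$ and $L$ the net rate $\tfrac{1}{n}(\log K - \log L)$ comes out to $2H(A|B)-H(A)$ rather than $H(A|B)$. The arithmetic only closes with the correct split $\log K \approx \tfrac{n}{2}I(A;R)$, $\log L \approx \tfrac{n}{2}I(A;B)$, whence $\tfrac{1}{2}I(A;R)-\tfrac{1}{2}I(A;B) = H(A|B)$ as in the thesis's derivation.

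The treatment of the negative case is also wrong. Taking $K=1$ does \emph{not} give decoupling for free: $H(A|B)<0$ is perfectly compatible with $I(A;R)>0$. For instance, let $A=A'A''$ with $A'$ holding two ebits with $B$ and $A''$ holding one ebit with $R$; then $H(A|B) = 1-2 = -1 < 0$ while $I(A;R) = 2 > 0$, so if Alice does nothing she remains correlated with the reference and Bob does not hold the purification of $R$. When $H(A|B)<0$ Alice must still act --- in the Horodecki--Oppenheim--Winter protocol she performs a random incomplete measurement and communicates the classical outcome (classical communication being free), or equivalently one runs the FQSW encoding and disposes of the $A_1$ system via measurement and classical messages; what is true is only that the \emph{net} ebit balance, $\tfrac{1}{2}I(A;R)-\tfrac{1}{2}I(A;B) = H(A|B)$, is negative, so entanglement is produced rather than consumed. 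Your converse sketch (monotonicity of $H(A|B)$ under classical communication, a $-\log K$ shift from consuming $\Phi_K$, and Fannes continuity) is the standard argument of \cite{HOW05b} and is fine as stated.
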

			
			We can express the state merging protocol as a resource inequality
			\be	\label{RI:state-merging}
				\rsrc{ U^{S \to AB} : \rho^S } \ + \  H(A|B)_\psi[q \to q]	\ \ 
				\geq \ \
				\rsrc{\id^{S \to B}  : \rho^S} \textup{\ \ \ \ (free $[c \leftrightarrow c]$)}
			\ee
			where $U^{S \to AB}$ is an isometry, $\rho^{AB} = U^{S \to AB}(\rho^S)$ that splits the state 
			produced by the source between Alice \& Bob while $\id^{S\to B}$ gives the state directly to Bob.
			The net effect of $\rsrc{U^{S \to AB} : \rho^S }$ on the left hand side and $\rsrc{\id^{S \to B}:\rho^S}$
			on the right, is the state transfer resource informally defined 
			\be
				\rsrc{\id^{A \to \widehat{B}} : \rho^{AB}} 
					:= 	\rsrc{\id^{S \to B}:\rho^S} - \rsrc{U^{S \to AB} : \rho^S }.
			\ee
			According to equation \eqref{RI:state-merging},  this resource can be an asset or a liability 
			depending on the sign of $H(A|B)$.

			%
			%
			
			The state merging protocol has numerous applications.
			It can be used to study the quantum capacity of multiple access channels, 
			entanglement distillation\cite{DW05},  entanglement of assistance\cite{EntAss} and distributed compression.
			The latter of these is of particular relevance to the subject of this thesis since it is
			a quantum generalization of the Slepian-Wolf problem discussed in 
			section~\ref{subsection:classical-slepian-wolf}.
			Alice and Bob have to individually compress their shares of a state $\rho^{AB}$ and transmit them to 
			common receiver, Charlie.
			We allow unlimited classical communication and rates $R_A$, $R_B$ of quantum communication to Charlie.
			The rate region for quantum distributed compression is given by the inequalities
			\be					\label{state-merging-inequalities}
				\begin{aligned}
					R_A 		&>	 	\ H(A|B)_\rho,	\\
					R_B 		&>		\ H(B|A)_\rho,	\\
					R_A+R_B		&>		\ H(AB)_\rho.	
				\end{aligned}
			\ee	
						\index{rate!region}
			
			The rates for quantum distributed compression \eqref{state-merging-inequalities} should be compared 
			with the classical distributed compression rates \eqref{slepian-wolf-inequalities}.
			This is an instance of a general trend in quantum information theory: 
			if classical communication is available for free, the solution to the quantum analogue
			of a given classical communication task is identical the classical solution up to replacement
			of Shannon entropies by von Neumann entropies.
			Many times, however, this ``$H$ goes to $S$ rule'' is only skin deep 
			and sometimes it does not hold at all.
			
			In the next section we will give the details of the fully quantum Slepian-Wolf (FQSW) protocol, 
			which is a generalization of state merging where no classical communication is allowed.
			In the light of this, a detailed proof of the state merging protocol has been omitted for the 
			sake of brevity and since it follows from the more powerful FQSW protocol.

	\bigskip
	\bigskip
	\section{The fully quantum Slepian-Wolf protocol} 		\label{section:FQSW}

		The fully quantum Slepian-Wolf protocol \cite{FQSW} is a procedure for simultaneous quantum state transfer 
		and  entanglement distillation.  \index{FQSW@FQSW: fully quantum Slepian-Wolf}
		It can be thought of as the quantum version of the classical Slepian-Wolf protocol but, unlike the
		state merging protocol considered above, no classical communication is allowed. 
	    This FQSW protocol generates nearly all the other protocols of quantum information theory 
	    as special cases, yet despite its powerful applications it is fairly simple to implement.
		
		\bigskip
	    \begin{figure}[hb]   \begin{center}
	        \input{figures/figureFWSQdiagram.pic}             \end{center}
	        \FigureCaptionOpt{Diagram of the $ABR$ correlations before and after the FQSW protocol.}{
		        Diagram representing the $ABR$ correlations before and after the FQSW protocol.
		        Alice manages to decouple completely from the reference $R$.
		        The $\widehat{B}$ system is isomorphic to the original $AB$: it is the purification of~$R$. }
	        \label{fig:FQSW}
	    \end{figure}
	    	
	    The state $\ket{\psi}^{ABR}=\left(\ket{\ph}^{ABR}\right)^{\otimes n}$ is shared between Alice, Bob and 
	    a reference system $R$.
	    The FQSW protocol describes a procedure for Alice to transfer her $R$-entanglement to Bob while
	    at the same time generating ebits with him.
	    Alice can accomplish this by encoding and sending part of her system, denoted $A_1$, to Bob.
	    The state after the protocol can approximately be written as 
	    $\ket{\Phi}^{A_2\widetilde{B}}(\ket{\ph}^{R\widehat{B}})^{\otimes n}$, 
	    where the systems $\widetilde{B}$ and $\widehat{B}$ are held in Bob's lab while $A_2$ remains with Alice.
	    The additional product, $\ket{\Phi}^{A_2\widetilde{B}}$, is a maximally entangled state shared between 
	    Alice and Bob. 
	    Figure \ref{fig:FQSW} illustrates the entanglement structure before and after the protocol.

		\bigskip
		\subsection{The protocol} \label{subsec:FQSW-protocol}
		
					\index{protocol!FQSW: fully quantum Slepian-Wolf}
			The protocol relies on an initial compression step and the mixing effect of random unitary operations 
			for the encoding. 
			We assume that, before the start of the protocol,  Alice and Bob have pre-chosen a random unitary operation
			$U_A$. 
			Equivalently, they could have shared random bits which they use to locally generate the same unitary
			operation.
	    	
		    The protocol, represented graphically in Figure \ref{fig:AliceActions}, consists of the following steps:
		    \begin{enumerate}
		        \item   Alice performs Schumacher compression on her system $A$ to obtain the output system $A^S$.
		
		        \item   Alice then applies a random unitary $U_A$ to $A^S$.
		
		        \item   Next, she splits her system into two parts: $A_1A_2=A^S$ with $d_{A_1} = 2^{nQ_A}$ and
		                \be
		                    Q_A  >  \frac{1}{2}I(A;R)_\ph.
		                \ee
		                She sends the system $A_1$ to Bob.
		
		        \item   Bob, in turn, performs a decoding operation $V_B^{ {A_1B} \to \widehat{B}\widetilde{B} }$ 
		        		which splits his system into a $\widehat{B}$ part purifying $R$ and a $\widetilde{B}$ part 
		        		which is fully entangled with Alice.
		    \end{enumerate}
		
		    \begin{figure}[ht]  \begin{center}
		        \input{figures/figureAliceActions.pic} \end{center}
		        \FigureCaptionOpt{Circuit diagram for the FQSW protocol.}{
		            Circuit diagram for the FQSW protocol. 
		            First we Schumacher compress the $A$ system, then we apply the random unitary encoding $U_A$.
		            At the receiving end Bob applies a decoding operation $V$. }
		        \label{fig:AliceActions}
		    \end{figure}
	
		    The best way to understand the mechanism behind this protocol is by thinking about destroying correlations.
		    If, at the end of the protocol, Alice's system $A_2$ is nearly decoupled from the reference in the sense that
		    $\sigma^{A_2 R} \approx \sigma^{A_2}\otimes \sigma^{R}$, then Alice must have succeeded in
		    sending her $R$ entanglement to Bob because it is Bob alone who then holds the $R$ purification.
		    We can therefore guess the lower bound on how many qubits Alice will have to send before she can
		    decouple from the reference. Originally, Alice and R share $I(A;R)_\ph$ bits of information per copy of
		    $\ket{\ph}^{ABR}$.   
		    Since one qubit can carry away at most two bits of 
			quantum mutual information, this means that the minimum rate 
		    at which Alice must send qubits to Bob is
		    \be
		        Q_A     >  \frac{1}{2}I(A;R)_\ph. 
		    \ee
	
		    It is shown in \cite{FQSW} that this rate is achievable in the limit of many copies of the state.
		    Therefore the FQSW protocol is optimal for the state transfer task.
			More formally the decoupling process is described by the following theorem:
			
			\bigskip
			
			\begin{theorem}[One-shot decoupling theorem from \cite{FQSW}] \label{thm:oneShotMother} \ \\
			Let $\sigma^{A_2 R}(U) = \Tr_{A_1}[ (U \otimes \id^R) \psi^{A^{\!S}R} (U^\dagger \otimes \id^R) ]$ 
			be the state remaining on $A_2 R$ after the unitary transformation $U$ has been applied to $A^S = A_1 A_2$.
			Then
			\be \label{eqn:oneShotMother}
			   \int_{\UU(A)} \Big\| \sigma^{A_2 R}(U) - \frac{\id^{A_2}}{d_{A_2}} \otimes \sigma^R \Big\|_1^2 \, dU
			   \leq
			   \frac{d_{A^{\!S}} d_R}{d_{A_1}^2} \Tr[(\psi^{A^{\!S}R})^2] .
			\ee
			\end{theorem}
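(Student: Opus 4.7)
The plan is to pass from the trace norm to the Hilbert–Schmidt norm (where Haar averages become tractable) and then compute the resulting second moment of $\sigma^{A_2R}(U)$ using the swap trick together with Schur–Weyl duality for the unitary group.

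First, since $M(U) := \sigma^{A_2R}(U) - \tfrac{\id^{A_2}}{d_{A_2}}\!\otimes\!\sigma^R$ acts on a Hilbert space of dimension $d_{A_2} d_R$, Cauchy–Schwarz applied to its singular values yields $\|M(U)\|_1^2 \leq d_{A_2} d_R \,\|M(U)\|_2^2$. Integrating over $U$ preserves this inequality, so it suffices to bound $\int \|M(U)\|_2^2\, dU$. Expanding the square and using the crucial fact that $\Tr_{A_2}\sigma^{A_2R}(U) = \sigma^R$ is independent of $U$ (the unitary acts only on $A^S$), the cross term and the constant term combine to give $-\tfrac{1}{d_{A_2}}\Tr[(\sigma^R)^2]$, so
\begin{equation*}
\int \|M(U)\|_2^2\, dU \;=\; \int \Tr[\sigma^{A_2R}(U)^2]\, dU \;-\; \tfrac{1}{d_{A_2}}\Tr[(\sigma^R)^2].
\end{equation*}

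The main technical step, and the one I expect to be the central obstacle, is evaluating $\int \Tr[\sigma^{A_2R}(U)^2]\, dU$. I would apply the swap trick $\Tr[X^2] = \Tr[(X\otimes X)\,F]$ with $F = F_{A_2 A_2'}\otimes F_{RR'}$ on two copies of $A_2 R$, and substitute $\sigma^{A_2R}(U) = \Tr_{A_1}[(U\otimes \id_R)\psi^{A^SR}(U^\dagger\otimes \id_R)]$. The partial traces over $A_1, A_1'$ amount to inserting $\id_{A_1 A_1'}$, so the integrand becomes a trace on two copies of $A^S R$ of $(U\otimes U)\bigl(\psi^{A^SR}\otimes\psi^{A^{S'}R'}\bigr)(U^\dagger\otimes U^\dagger)$ against $\id_{A_1 A_1'}\otimes F_{A_2 A_2'}\otimes F_{RR'}$. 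The $U$-dependence then lies entirely in a twirl on $A^S A^{S'}$, and by Schur–Weyl duality
\begin{equation*}
\int (U\otimes U)(\,\cdot\,)(U^\dagger\otimes U^\dagger)\,dU \;=\; \alpha^{RR'}\!\otimes \id^{A^S A^{S'}} + \beta^{RR'}\!\otimes F^{A^S A^{S'}},
\end{equation*}
with operators $\alpha, \beta$ on $RR'$ determined by the partial traces $\Tr_{A^SA^{S'}}[\psi\otimes\psi]$ and $\Tr_{A^SA^{S'}}[F^{A^SA^{S'}}(\psi\otimes\psi)]$. Contracting with the remaining swaps on $A_2 A_2'$ and $RR'$ reduces the whole expression to a linear combination of $\Tr[(\psi^R)^2]$ and $\Tr[(\psi^{A^SR})^2]$ with explicit coefficients in $d_{A_1}, d_{A_2}, d_{A^S}$.

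Finally, using $d_{A^S} = d_{A_1}d_{A_2}$ and $d_{A^S}^2 - 1 \leq d_{A^S}^2$, the $\Tr[(\psi^R)^2]$ contribution largely cancels against the earlier $-\tfrac{1}{d_{A_2}}\Tr[(\sigma^R)^2]$ term (since $\sigma^R = \psi^R$), and the surviving coefficient of $\Tr[(\psi^{A^SR})^2]$ simplifies to at most $\tfrac{1}{d_{A_1}^2 d_{A_2}}$. Multiplying back by the factor $d_{A_2}d_R$ from the Cauchy–Schwarz step produces the claimed bound $\tfrac{d_{A^S} d_R}{d_{A_1}^2}\Tr[(\psi^{A^SR})^2]$. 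The main difficulty is not conceptual but combinatorial: tracking the dimensional factors through the swap trick and the Schur–Weyl coefficients, and verifying the sign and magnitude of the cancellations that justify dropping the $\Tr[(\psi^R)^2]$ contribution to obtain a clean upper bound rather than an exact identity.
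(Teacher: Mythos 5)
Your proposal is correct and follows essentially the same route as the original argument in \cite{FQSW} --- the thesis itself states Theorem~\ref{thm:oneShotMother} as imported from that paper without reproducing the proof: Cauchy--Schwarz to pass from the trace norm to the Hilbert--Schmidt norm (picking up the factor $d_{A_2}d_R$), the observation that $\Tr_{A_2}\sigma^{A_2R}(U)=\psi^R$ makes the cross and constant terms collapse to $-\tfrac{1}{d_{A_2}}\Tr[(\psi^R)^2]$, and evaluation of $\int \Tr[\sigma^{A_2R}(U)^2]\,dU$ via the swap trick and the two-fold Haar twirl $a\,\id + b\,F$ on $A^S A^{S'}$.

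One bookkeeping slip you should fix: with $\Tr M = d_{A_1}^2 d_{A_2}$ and $\Tr[MF]=d_{A_1}d_{A_2}^2$ for $M=\id_{A_1A_1'}\otimes F_{A_2A_2'}$, the surviving coefficient of $\Tr[(\psi^{A^SR})^2]$ in the averaged 2-norm is $b = d_{A_1}(d_{A_2}^2-1)/(d_{A^S}^2-1)\le 1/d_{A_1} = d_{A^S}/(d_{A_1}^2 d_{A_2})$, \emph{not} $1/(d_{A_1}^2 d_{A_2})$ as you wrote; with your stated value, multiplying back by $d_{A_2}d_R$ would yield $d_R/d_{A_1}^2$, a factor $d_{A^S}$ smaller than the theorem's bound, so your arithmetic is internally inconsistent even though the plan and the final statement are right. (Also, be careful with the direction of your remark ``$d_{A^S}^2-1\le d_{A^S}^2$'': enlarging the denominator weakens an upper bound; the clean step is $d_{A_1}^2(d_{A_2}^2-1)\le d_{A^S}^2-1$, which gives $b\le 1/d_{A_1}$ directly, while $a\le 1/d_{A_2}$ makes the $\Tr[(\psi^R)^2]$ cancellation exact at the level of the upper bound.)
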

			
			\bigskip
			
			This theorem quantifies how close to decoupled the $A_2$ and $R$ systems are if a random unitary operation is
			applied to the $A^S=A_1A_2$ system. There are several important observations to make in relation to the above
			inequality.
			First, we note that for a given state $\ket{\psi}^{ABR}$, the dimensions of the systems $A^S$ and $R$ 
			as well as the purity $\Tr[(\psi^{A^SR})^2]$ are fixed numbers over which Alice has no control. 
			Alice can, however, choose the dimension of the subsystem she sends
			to Bob, $d_{A_1}$, and influence how decoupled she is from the reference.
			By making making $d_{A_1}$ sufficiently large, Alice can thus make the right hand side of 
			\eqref{eqn:oneShotMother} tend to zero.

			Second, the fact that Alice holds something very close to a maximally mixed state 
			$\id/d_{A_2}$ indicates that Bob can, by an appropriate choice of decoding operation $V_B$, 
			establish a maximally entangled state $\ket{\Phi}^{A_2\widetilde{B}}$ with Alice.
			These ebits generated between Alice and Bob are a useful side-effect of the protocol that is
			similar to the entanglement generated by the state merging protocol.

			All that remains now is to specify $d_{A_1}$, the dimension of the system sent to Bob, 
			in terms of entropic quantities of the input state.
			This can be done in the the limit where $n$, the number of copies of the state goes to infinity.
			Using the properties of typical subspaces, we can we can make the right hand side of equation
			\eqref{eqn:oneShotMother} tend to zero provided the rate $Q_A \equiv \frac{1}{n}\log d_{A_1}$ 
			satisfies \cite{FQSW}:
			\be	\label{eqn:FQSW}
			   	Q_A     \geq  \frac{1}{2}I(A;R)_\ph + \delta
			\ee
			for any $\delta>0$.

		\bigskip
		\subsection{The FQSW resource inequality}	\label{subsec:FQSW-RI}
	
			In the spirit of section~\ref{section:family-of-protocols} above, we can succinctly express the 
			effects of the fully-quantum Slepian-Wolf protocol as a resource inequality
			\be	\label{RI:FQSW}	
			    \rsrc{U^{S \to AB}: \ph^S } + \tfrac{1}{2}I(A;R)_\ph[q \to q]
			    \ \geq \ 
			    \tfrac{1}{2}I(A;B)_\ph[qq] + \rsrc{ \id^{S \to \widehat{B}} : \ph^S }
			\ee
			which is read:      {\it   Given the state $\ket{\ph}^{ABR}$ and $\tfrac{1}{2}I(A;\!R)$ qubits of
			                           communication from Alice to Bob we can obtain the state 
			                           $\ket{\ph}^{R\widehat{B}}$ while also purifying $\tfrac{1}{2}I(A;B)$ ebits. }

			As previously announced, the FQSW protocol is more powerful than the state merging protocol of
			section~\ref{section:state-merging} since it generates it as a special case.
			Indeed, when we implement the quantum communication $[q \to q]$ of equation~\eqref{RI:FQSW} as 
			teleportation according to equation~\eqref{teleport} 
			\be	\label{recycling-in-FQSW}
				\tfrac{1}{2}I(A;R)[qq]	+ I(A;R)[c \to c] 	\ \geq \ \tfrac{1}{2}I(A;R)[q \to q].
			\ee
			We now ``recycle'' the entanglement produced by the protocol. 
			The factor in front of $[qq]$ is is going to be $\tfrac{1}{2}I(A;R) - \tfrac{1}{2}I(A;B) = H(A|B)$ and the 
			overall resource inequality becomes  
			\be
			    \rsrc{U^{S \to AB}: \ph^S } + H(A|B)_\ph[qq] + I(A;R)_\ph[c \to c]
			    \ \geq \ 
			    \rsrc{ \id^{S \to \widehat{B}} : \ph^S },			
			\ee
			which is exactly the state merging resource inequality \eqref{RI:state-merging}, when we also
			account for the classical communication cost.

			The FQSW inequality generates the mother inequality \eqref{RI:mother} by discarding the additional 
			resource $\rsrc{ \id^{S \to \widehat{B}} : \ph^S }$ on the right hand side.
			Moreover it was recently shown that by the \emph{source-channel duality}, the FQSW protocol can 
			be used to generate the father protocol \eqref{RI:father} 
			and by \emph{time reversal duality} the FQSW protocol leads to the fully quantum reverse 
			Shannon (FQRS) protocol~\cite{triangle}. 
			Other notable results related to the FQSW protocol are the recent results for 
			broadcast channels \cite{DH06}, and the generalization of the FQSW task called 
			\emph{quantum state redistribution}, which uses side information both at the encoder and the
			decoder~\cite{DY06,DY07}.
			

		\bigskip
		
		In addition to the its powerful protocol generating faculties, the FQSW protocol has applications 
		to the distributed compression problem for quantum systems.
		Indeed, the original FQSW paper \cite{FQSW} partially solves the distributed compression problem in the two-party
		case by providing upper and lower bounds on the set of achievable rates.
		In Chapter~\ref{chapter:multiparty-distributed-compression} we will present our results on 
		the multiparty version of the same problem.
		For the sake of continuity, the reader may wish to skip 
		Chapter~\ref{chapter:multiparty-quantum-information} on a first reading 
		of the thesis since it is a self-contained exposition on the multiparty squashed entanglement, 
		which only comes into play relatively late in the distributed compression chapter.



\chapter{Multiparty quantum information}	\label{chapter:multiparty-quantum-information}

	Many of the protocols of information theory deal with multiple senders and multiple receivers.
	As a whole, however, \emph{network information theory}, the field which studies general multiparty 
	communication scenarios is not yet fully developed even for classical systems \cite{CT91}.
	Quantum network information theory, which deals with quantum multipartite communication, is also under
	active development \cite{butterfly,DH07,AHS07} and, thanks to the no-cloning properties of quantum information,
	sometimes admits simple solutions \cite{butterfly}.
	On the other hand, a full understanding of quantum network theory will require a precise characterization 
	of multiparty entanglement, a task which is far from completed \cite{LSSW,DCT99,CKW00,BPRST99}.
	Nevertheless, we can hope that years from now we will have a rigorous and complete theory of multiparty 
	information theory in the spirit of the two-party protocols framework \cite{DHW05b}.

	One step toward the development of a multiparty information theory would be to generalize the concept of mutual
	information $I(A;B)$ to more than two parties.
	The mutual information, the information that two systems $A$ and $B$ have in common, can be written as
	\be	\label{protoTypeOne}
		I(A;B)	=  H(A)	- H(A|B).
	\ee
	The above formula is interpreted as a reduction of the total uncertainty of $A$ by the amount that 
	is not common to $B$. What is left is the uncertainty that is shared.

	Another way to write the mutual information is
	\be	\label{protoTypeTwo}
		I(A;B)	=  H(A)+H(B) - H(AB),
	\ee	
	which adds both entropies (double-counting the entropy that is common) and then subtracts the total 
	entropy. Equation \eqref{protoTypeTwo} is a measure of how different from independent the variables $A$ and 
	$B$ are. 
	Both of these interpretations of the mutual information can be generalized to the multiparty case.

	One way to define the multiparty mutual information for three variables $A,B$ and $C$ is by mimicking
	the form of equation \eqref{protoTypeOne} above and define
	\be
		I_\cap(A;B;C)	:=	I(A;B)	- I(A;B|C).
	\ee	
	The motivation behind this formula is to subtract from the mutual information $I(A;B)$ any terms that 
	are due to $AB$-only correlations and not true tripartite correlations.
	The expanded form of the restrictive mutual information is
	{\small
	\be	\label{typeOne}	\nonumber
		I_\cap(A;B;C) =	H(A)+H(B)+H(C) -H(AB)-H(AC)-H(BC) + H(ABC).
	\ee}
	This is the information shared by \emph{all} three parties and corresponds to the region labeled ``g'' 
	in  Figure~\ref{fig:TripleVenn}.
	This form of multiparty mutual information was defined in \cite{CondMutInfo} but has not yet proved
	useful in applications. 
	Also, $I_\cap(A;B;C)$ can take on negative values \cite{multisquash}, which are difficult to interpret.

    \begin{figure}[ht]  \begin{center} \input{figures/figureTripleVenn.pic} \end{center}
		\FigureCaptionOpt{Entropy diagram for the multiparty information.}{
						  Entropy diagram for three parties $A$, $B$ and $C$.}
		\label{fig:TripleVenn}
	\end{figure}

	Another approach is to define the multiparty mutual information in the spirit of \eqref{protoTypeTwo},
	as the measure of how different from independent the three variables are
	\be	\label{typeTwo}
		I_\cup(A;B;C) :=	H(A)+H(B)+H(C) - H(ABC).
	\ee
	In terms of the regions in Figure~\ref{fig:TripleVenn}, we have $I_\cup(A;B;C) = d+e+f+2g$.
	This form of the mutual information is naturally connected to the relative entropy\cite{NC04} and 
	also satisfies the chain-like property
	\be
		I_\cup(A;B;C)	=	I(A;B)	+ 	I(AB;C),
	\ee
	which indicates how the multiparty information is affected when we introduce a new system.

						\index{information!multiparty}
    In this chapter we will investigate some of the properties of the inclusive multiparty information
    $I_\cup(A;B;C)$, henceforth referred to simply as mutual information $I(A;B;C)$.
    Our work on the multiparty information will also allow us to naturally extend the notion of squashed 
    entanglement~\cite{CW04} to the multiparty scenario.
    The multiparty squashed entanglement, discussed in Section~\ref{sec:squashed-entanglement}, turns out to 
    be a measure of multipartite entanglement with excellent properties and clear and intuitive interpretation.
    It finds application in the proof of Theorem~\ref{thm:THM2}, the outer bound on the rate region for 
    distributed compression.

	\pagebreak


\bigskip
\bigskip
\section{Multiparty information} \label{sec:multiparty-information}

	We begin with the definition of the multiparty quantum information for $m$ parties.

	    	\index{information!multiparty}
    \begin{definition}[Multiparty information]  \label{Jformation}
        Given the state $\rho^{X_1X_2\ldots X_m}$ shared between $m$ systems, we define the multiparty information
        as: 
        \bea
            I(X_1;X_2; \cdots; X_m)_\rho    &:=& H(X_1) + H(X_2) + \cdots + H(X_m) - H(X_1X_2 \cdots X_m)
                                            \nonumber \\
                                            &=& \sum_{i=1}^m H(X_i)_\rho - H(X_1X_2 \cdots X_m)_\rho
        \eea
    \end{definition}

    The subadditivity inequality for quantum entropy ensures that the multiparty information
    is zero if and only if $\rho$ has the tensor product form
    $\rho^{X_1} \otimes \rho^{X_2} \otimes \cdots \otimes \rho^{X_m}$.

    The conditional version of the multiparty mutual information is obtained by replacing all the entropies by
    conditional entropies
    \bea
        I(X_1;X_2; \cdots; X_m|E)_\rho  &:=& \sum_{i=1}^m H(X_i|E) - H(X_1X_2 \cdots X_m|E) \nonumber \\
                                        &=&      \sum_{i=1}^m H(X_iE) - H(X_1X_2 \cdots X_mE) - (m-1)H(E) \nonumber \\
                                        &=&      I(X_1;X_2; \cdots; X_m;E) - \sum_{i=1}^m I(X_i;E).
    \eea
    This definition of multiparty information has appeared previously in \cite{Lindblad,RHoro,GPW05} and
    more recently in \cite{multisquash}, where many of its properties were investigated.


    Next we investigate some formal properties of the multiparty information which will be
    useful in our later analysis.

    \begin{lemma}[Merging of multiparty information terms] \label{mergingJ}
        Arguments of the multiparty information can be combined by subtracting their mutual information:
        \be
            I(A;B;X_1;X_2; \cdots;X_m) - I(A;B) = I(AB;X_1;X_2; \cdots ;X_m).
        \ee
    \end{lemma}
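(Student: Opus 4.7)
The plan is to prove this by direct expansion of both sides using Definition~\ref{Jformation} and then observing that the two resulting expressions coincide term by term. No nontrivial inequality (such as strong subadditivity) is needed; the identity is purely algebraic in the von Neumann entropies.

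First I would expand the left-hand side. Treating $A, B, X_1, \ldots, X_m$ as the $(m+2)$ arguments of the multiparty information, the definition gives
\begin{equation*}
I(A;B;X_1;\ldots;X_m) = H(A) + H(B) + \sum_{i=1}^m H(X_i) - H(ABX_1\cdots X_m).
\end{equation*}
Subtracting $I(A;B) = H(A) + H(B) - H(AB)$ cancels the $H(A)$ and $H(B)$ terms and introduces a $+H(AB)$, leaving
\begin{equation*}
I(A;B;X_1;\ldots;X_m) - I(A;B) = H(AB) + \sum_{i=1}^m H(X_i) - H(ABX_1\cdots X_m).
\end{equation*}

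Next I would expand the right-hand side, treating the composite system $AB$ as a single argument alongside $X_1,\ldots,X_m$:
\begin{equation*}
I(AB;X_1;\ldots;X_m) = H(AB) + \sum_{i=1}^m H(X_i) - H(ABX_1\cdots X_m),
\end{equation*}
which is identical to the expression obtained for the left-hand side. This establishes the claimed identity.

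There is no real obstacle here; the only subtlety is purely notational, namely being careful that in the definition of $I(AB;X_1;\ldots;X_m)$ the system $AB$ is a single tensor-factor argument (so its contribution is $H(AB)$, not $H(A)+H(B)$). Once this bookkeeping is correctly applied, the merging rule is immediate and in fact mirrors the analogous classical identity for Shannon entropies, reflecting the ``chain-like'' behaviour of the inclusive multiparty information noted just above the lemma.
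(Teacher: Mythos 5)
Your proposal is correct and matches the paper's own proof, which likewise expands both sides via the definition of multiparty information, cancels the $H(A)$ and $H(B)$ terms against $-I(A;B)$, and identifies the remainder $H(AB) + \sum_{i=1}^m H(X_i) - H(ABX_1\cdots X_m)$ as $I(AB;X_1;\ldots;X_m)$. Your added remark about treating $AB$ as a single tensor-factor argument is exactly the right (and only) point of care in this purely algebraic identity.
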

    \begin{proof} This identity is a simple calculation. 
    It is sufficient to expand the definitions and cancel terms.
	   \begin{align}
	        I(A;&B;X_1;X_2; \cdots;X_m) - I(A;B)    =  \nonumber \\
	            & = H(A)\!+\!H(B)\!+\!\!\sum\!H(X_i) -\!H(ABX_1X_2 \cdots X_m)\!-\!H(A)\!-\!H(B)\!+\!H(AB)   \nonumber \\
	            & = H(AB) + \sum H(X_i) - H(A,BX_1X_2 \cdots X_m)   \nonumber  \\
	            & = I(AB;X_1;X_2; \cdots ;X_m). \nonumber
	    \end{align} 
	\end{proof}


    Discarding a subsystem inside the conditional multiparty information cannot lead it to increase.
    This property, more than any other, justifies its use as a measure of correlation.
    \begin{lemma}[Monotonicity of conditional multiparty information] \label{cond-monotonicity}
        \be
            I(AB;X_1;\cdots X_m| E) \geq I(A;X_1;\cdots X_m| E)
        \ee
    \end{lemma}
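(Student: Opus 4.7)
The plan is to expand both sides using the definition of conditional multiparty information and show that the difference is itself a conditional mutual information, hence non-negative by strong subadditivity.

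First I would write out, from Definition \ref{Jformation} (conditional version),
\begin{align*}
I(AB;X_1;\cdots;X_m|E) &= H(AB|E) + \sum_{i=1}^m H(X_i|E) - H(ABX_1\cdots X_m|E), \\
I(A;X_1;\cdots;X_m|E) &= H(A|E) + \sum_{i=1}^m H(X_i|E) - H(AX_1\cdots X_m|E).
\end{align*}
The $\sum_i H(X_i|E)$ contributions cancel when I subtract, leaving
\[
I(AB;X_1;\cdots;X_m|E) - I(A;X_1;\cdots;X_m|E) = \bigl[H(AB|E)-H(A|E)\bigr] - \bigl[H(ABX_1\cdots X_m|E)-H(AX_1\cdots X_m|E)\bigr].
\]

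Next I would recognize each bracketed quantity as a conditional entropy of $B$: specifically $H(B|AE)$ and $H(B|AX_1\cdots X_m E)$ respectively, using the chain rule $H(XY|Z)-H(X|Z)=H(Y|XZ)$ which follows immediately from \eqref{cond-entrpy}. Thus the difference becomes
\[
H(B|AE) - H(B|AX_1\cdots X_m E) = I(B;X_1\cdots X_m \mid AE),
\]
by the definition \eqref{cond-mut-info} of conditional mutual information.

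Finally I would invoke non-negativity of conditional quantum mutual information, which is exactly strong subadditivity \eqref{strong-subadditivity} applied with the conditioning system taken to be $AE$. This gives $I(B;X_1\cdots X_m\mid AE)\ge 0$ and hence the claimed inequality. There is no real obstacle here; the whole argument is a short algebraic reduction of the difference to a single conditional mutual information term, so the only care needed is bookkeeping of the entropy terms and an appeal to strong subadditivity at the end.
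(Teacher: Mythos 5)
Your proof is correct and takes essentially the same route as the paper: both reduce the difference of the two conditional multiparty informations to the single strong-subadditivity instance $H(ABE)+H(AX_1\cdots X_mE)\geq H(AE)+H(ABX_1\cdots X_mE)$, which you simply package as $I(B;X_1\cdots X_m|AE)\geq 0$. The paper reaches the same four-entropy expression by adding and subtracting $H(AE)$ and $H(AX_1\cdots X_mE)$ among the joint entropies, so the two arguments differ only in bookkeeping.
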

    \begin{proof}
        This follows easily from strong subadditivity of quantum entropy (SSA).
        \begin{align*}
            I(AB&;X_1;X_2; \ldots;X_m|E) =  \\
            = &\        H(ABE) + \sum_i H(X_iE) - H(ABX_1X_2 \ldots X_mE) -mH(E)    \nonumber   \\
            = &\        H(ABE) + \sum_i H(X_iE) - H(ABX_1X_2 \ldots X_mE) -mH(E)    + \nonumber \\
              &\ \quad  \underbrace{H(AE) - H(AE)}_{=0} \quad
                        + \quad \underbrace{H(AX_1X_2 \ldots X_mE) - H(AX_1X_2 \ldots X_mE)}_{=0} \nonumber \\
            = &\        H(AE) + \sum_i H(X_iE) - H(AX_1X_2 \ldots X_m) -mH(E)  + \nonumber \\
              &\ \quad \underbrace{\left[
                                        H(ABE) + H(AX_1X_2 \ldots X_mE) - H(AE) - H(ABX_1X_2 \ldots X_mE)
                                    \right]}_{\geq 0 \; \mbox{\tiny by SSA}} \nonumber \\
        \end{align*}             
        \begin{align*}
            \geq &\        H(AE) + \sum_i H(X_iE) - H(AX_1X_2 \ldots X_mE) -mH(E)  \\
            = &\        I(A;X_1;X_2 \ldots X_m|E)
        \end{align*}
    \end{proof}

    We will now prove a multiparty information property that follows from a more general chain rule,
    but is all that we will need for applications.

    \begin{lemma}[Chain-type Rule] \label{useful-chain-rule}
    \be
        I(AA';\XX|E) \geq I(A;\XX|A'E)
    \ee
    \end{lemma}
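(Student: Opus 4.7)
The plan is to prove this by direct calculation: expand both sides using the definitions and show that the difference is a manifestly non-negative quantity, namely a sum of conditional mutual informations, each of which is non-negative by strong subadditivity.

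First I would expand the left-hand side using the definition of conditional multiparty information given just before Lemma~\ref{mergingJ}, treating $AA'$ as a single argument:
\[
I(AA';X_1;\ldots;X_m|E) = H(AA'E) + \sum_{i=1}^m H(X_iE) - H(AA'X_1\cdots X_m E) - m H(E).
\]
Then I would expand the right-hand side similarly, conditioning on $A'E$ rather than $E$:
\[
I(A;X_1;\ldots;X_m|A'E) = H(AA'E) + \sum_{i=1}^m H(X_iA'E) - H(AA'X_1\cdots X_m E) - m H(A'E),
\]
using $H(X|YZ) = H(XYZ)-H(YZ)$ to rewrite each conditional entropy.

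The main step is then to subtract: the $H(AA'E)$ and $H(AA'X_1\cdots X_m E)$ terms cancel, leaving
\[
I(AA';X_1;\ldots;X_m|E) - I(A;X_1;\ldots;X_m|A'E) = \sum_{i=1}^m \bigl[H(X_iE) + H(A'E) - H(X_iA'E) - H(E)\bigr] = \sum_{i=1}^m I(X_i;A'|E).
\]
Each term $I(X_i;A'|E)$ is non-negative by strong subadditivity (equation~\eqref{strong-subadditivity}), so the whole difference is non-negative, establishing the inequality.

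There is no real obstacle here; the only thing to be careful about is bookkeeping the $m H(E)$ versus $m H(A'E)$ terms correctly when rewriting the conditioning, since a naive expansion can easily miscount by one. The fact that the gap is exactly $\sum_i I(X_i;A';|E)$ is itself informative: it quantifies how much correlation between the $X_i$'s and the ``absorbed'' subsystem $A'$ is lost when we move $A'$ from being one of the parties to being part of the conditioning side.
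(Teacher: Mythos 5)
Your proof is correct and takes essentially the same route as the paper: both expand the two conditional multiparty informations via the definition and identify their difference as $\sum_{i=1}^m \left[ H(A'E) + H(X_iE) - H(E) - H(A'X_iE) \right]$, which is non-negative term by term via strong subadditivity. The only (cosmetic) difference is that you explicitly name each gap term as the conditional mutual information $I(X_i;A'|E)$, whereas the paper leaves them as entropy combinations and cites SSA directly.
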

    \begin{proof}
    \begin{align*}
           I(AA'&;\XX|E)        = \\
            =&\ \       H(AA'E)+ \sum_{i=1}^m H(X_iE) - H(AA'\XcX) - mH(E)  \\
            =&\ \       I(A;\XX|A'E) + \sum_{i=1}^m \left[ H(A'E) + H(X_iE)- H(E) - H(A'X_iE)  \right] \nonumber \\
         \geq&\ \       I(A;\XX|A'E).
    \end{align*}
    The inequality is true by strong subadditivity.         
    \end{proof}

    \myparagraph{Remark} It is interesting to note that we have two very similar reduction-of-systems formulas derived
    from different perspectives. From Lemma \ref{cond-monotonicity} (monotonicity of the multiparty information)
    we have that
        \be
            I(AB;\XX|E) \geq I(A;\XX|E),
        \ee
    but we also know from Lemma \ref{useful-chain-rule} (chain-type rule) that
        \be
            I(AB;\XX|E) \geq I(A;\XX|BE).
        \ee
    The two expressions are inequivalent; one is not strictly stronger than the other.
    We use both of them depending on whether we want to keep the deleted system around for conditioning.


	\bigskip
	\bigskip	
	\section{Multiparty squashed entanglement} 	\label{sec:squashed-entanglement}

    Using the definition of the conditional multiparty information from the previous
    section, we can define a multiparty squashed entanglement analogous to the bipartite version
    \cite{tucci-1999,tucci-2002,CW04}.
    The multiparty squashed entanglement has recently been investigated independently by Yang et al. \cite{multisquash}.

	\index{entanglement!squashed@$\Esq$: squashed entanglement}
    \bigskip
    \begin{definition}[Multiparty squashed entanglement]  \label{squashedEntanglement}
        Consider the density matrix $\rho^{X_1X_2\ldots X_m}$ shared between $m$ parties. We define the multiparty 
        squashed entanglement in the following manner
        \bea
            \Esq(X_1;X_2; \ldots; X_m)_\rho
                        &:=&        \frac{1}{2}\inf_E
                                        \left[
                                            \sum_{i=1}^m H(X_i|E)_\rhot - H(X_1X_2 \cdots X_m|E)_\rhot
                                        \right] \nonumber \\
                        &=&             \frac{1}{2}\inf_E  I(X_1;X_2;\cdots;X_m|E)_\rhot
        \eea
        where the minimization happens over all states of the form $\rhot^{X_1X_2\ldots X_mE}$ such that
        $\Tr_E\!\left(\rhot^{X_1X_2\ldots X_mE}\right) = \rho^{X_1X_2\ldots X_m}$. (We say $\rhot$
        is an \emph{extension} of $\rho$.)
    \end{definition}
    The dimension of the extension system $E$ can be arbitrarily large, which is in part what makes calculations
    of the squashed entanglement very difficult except for simple systems.
    The motivation behind this definition is that we can include a copy of all classical correlations inside
    the extension $E$ and thereby eliminate them from the multiparty information by conditioning.
	Since it is impossible to copy quantum information, 
	we know that taking the infimum over all possible
	extensions $E$ we will be left with a measure of the purely quantum correlations.
	%
	%
    The definition of \Esq as a minimization over a conditional mutual information is motivated by the classical 
    cryptography notion of \emph{intrinsic information} which provides a bound on the secret-key
    rate~\cite{MW99,C05,CW04}.



	\bigskip
    \myparagraph{Example:} It is illustrative to calculate the squashed entanglement for separable states, which
    are probabilistic mixtures of tensor products of local pure states.
    Consider the state \vspace{-0.15cm}
    \be
        \rho^{X_1X_2\ldots X_m} = \sum_j p_j        \proj{\alpha_j}^{X_1} \otimes
                                                    \proj{\beta_j}^{X_2} \otimes \cdots
                                                    \proj{\zeta_j}^{X_m},                     \nonumber
                                                    \vspace{-0.3cm}
    \ee
    which we choose to extend by adding a system $E$ containing a record of the index $j$ as follows
    \be
        \rhot^{X_1X_2\ldots X_mE} = \sum_j p_j      \proj{\alpha_j}^{X_1} \otimes
                                                    \proj{\beta_j}^{X_2} \otimes \cdots
                                                    \proj{\zeta_j}^{X_m}  \otimes
                                                    \proj{j}^E.                   \nonumber
                                                    \vspace{-0.3cm}
    \ee
    When we calculate conditional entropies we notice that for any subset $\K \subseteq \{1,2,\ldots m\}$,
    \be
        H(X_\K | E)_\rhot = 0.
    \ee
    Knowledge of the classical index leaves us with a pure product state for which all the relevant entropies are zero.
    Therefore, separable states have zero squashed entanglement:
    \be	\nonumber
        \Esq(X_1;X_2; \ldots ;X_m)_\rho
                    = \frac{1}{2}\left[ \sum_i^m H(X_i|E)_\rhot -H(X_1X_2 \ldots X_m|E)_\rhot \right]
                    = 0.
    \ee

\bigskip

    We now turn our attention to the properties of $\Esq$.
    Earlier we argued that the squashed entanglement measures purely quantum contributions
    to the mutual information between systems, in the sense that it is zero for all separable states.
    In this section we will show that the multiparty squashed entanglement cannot increase under the action of
    local operations and classical communication, that is, that $\Esq$ is an LOCC-monotone.
    We will also show that \Esq has other desirable properties; it is convex, subadditive and continuous.


    \begin{proposition}
        The quantity $\Esq$ is an entanglement monotone, i.e. it does not increase 
        on average under local quantum operations and classical communication (LOCC). 
    \end{proposition}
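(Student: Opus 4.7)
The plan is to follow the template of Christandl and Winter's proof for the bipartite squashed entanglement, adapting it to the multiparty setting by using the multiparty information identities established in Section~\ref{sec:multiparty-information}. First I would reduce the statement to two elementary properties: (a) convexity of $\Esq$, and (b) non-increase on average under a local quantum instrument $\{\mathcal{E}_k\}$ performed by any single party, say party~$1$. Any LOCC protocol can be decomposed into a sequence of such local instruments together with classical communication, and because classical communication simply broadcasts a label already available in each branch, these two properties are enough to conclude that $\Esq$ does not increase on average under LOCC.

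For convexity, if $\rho^{X_1 \cdots X_m} = \sum_k q_k \rho_k$, I would pick an $\epsilon$-optimal extension $\tilde{\rho}_k^{X_1 \cdots X_m E}$ for each $\rho_k$ and assemble the block-diagonal extension $\tilde{\rho} = \sum_k q_k\, \tilde{\rho}_k \otimes \proj{k}^K$ of $\rho$, with extension system $EK$. Because $K$ is classical in $\tilde{\rho}$, conditioning splits the multiparty information as $I(X_1;\ldots;X_m|EK)_{\tilde{\rho}} = \sum_k q_k\, I(X_1;\ldots;X_m|E)_{\tilde{\rho}_k}$, so the infimum defining $\Esq(\rho)$ is bounded by $\sum_k q_k \Esq(\rho_k) + \epsilon$, and $\epsilon \to 0$ gives convexity.

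For the main step, let $\{M_k\}$ be Kraus operators of a local instrument on $X_1$, producing outcome $k$ with probability $p_k$ and post-measurement state $\rho_k$. Given an $\epsilon$-optimal extension $\tilde{\rho}$ of $\rho$, I would set $\tilde{\rho}_k = (M_k \otimes I_E)\tilde{\rho}(M_k^\dagger \otimes I_E)/p_k$, which is an extension of $\rho_k$, so $\sum_k p_k \Esq(\rho_k) \leq \tfrac{1}{2}\sum_k p_k\, I(X_1;\ldots;X_m|E)_{\tilde{\rho}_k}$. To relate this to $I(X_1;\ldots;X_m|E)_{\tilde{\rho}}$, consider the Stinespring isometry $V_1 = \sum_k M_k \otimes \ket{k}^K$ from $X_1$ to $X_1 K$, apply it to $\tilde{\rho}$ to obtain $\tilde{\tau} = V_1 \tilde{\rho} V_1^\dagger$, and then dephase $K$ to get $\tilde{\sigma} = \sum_k p_k\, \tilde{\rho}_k \otimes \proj{k}^K$. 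Since $K$ is classical in $\tilde{\sigma}$, we have $I(X_1;\ldots;X_m|EK)_{\tilde{\sigma}} = \sum_k p_k\, I(X_1;\ldots;X_m|E)_{\tilde{\rho}_k}$, and because the diagonal blocks of $\tilde{\tau}$ and $\tilde{\sigma}$ in the $\{\ket{k}\}$ basis agree, the same identity holds with $\tilde{\tau}$ in place of $\tilde{\sigma}$. Applying an isometry to $X_1$ preserves $I(X_1 K;\ldots;X_m|E)$, giving $I(X_1 K;\ldots;X_m|E)_{\tilde{\tau}} = I(X_1;\ldots;X_m|E)_{\tilde{\rho}}$, so that $\tilde{\sigma}$ with extension $EK$ furnishes a valid feasible point whose cost can be bounded by the original.

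The pivotal step, and the one I expect to require the most care, is bounding $I(X_1;\ldots;X_m|EK)_{\tilde{\tau}} \leq I(X_1 K;\ldots;X_m|E)_{\tilde{\tau}}$, since it is what actually moves the register $K$ from an ``argument'' position to a ``conditioning'' position without cost. This is exactly the chain-type rule of Lemma~\ref{useful-chain-rule} applied with $A=X_1$ and $A'=K$. Chaining the three equalities and this one inequality yields $\sum_k p_k \Esq(\rho_k) \leq \tfrac{1}{2}I(X_1;\ldots;X_m|E)_{\tilde{\rho}} \leq \Esq(\rho) + \tfrac{\epsilon}{2}$, and sending $\epsilon \to 0$ finishes the proof. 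Combining this with convexity and the observation that unitaries, addition of uncorrelated ancillas, and partial traces (by Lemma~\ref{cond-monotonicity}) are already controlled term by term on a fixed extension, one obtains the full LOCC monotonicity.
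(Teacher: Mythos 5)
Your overall architecture is the same as the paper's: the Christandl--Winter/Vidal reduction to (a) convexity and (b) non-increase on average under a unilocal instrument, with Lemma~\ref{cond-monotonicity} and the chain-type rule of Lemma~\ref{useful-chain-rule} doing the real work, and your convexity argument is essentially identical to the paper's block-diagonal-extension construction. However, one step in your instrument argument is genuinely false: the claim that, because $\taut$ and its $K$-dephased version $\sigmat$ have the same diagonal blocks in the $\{\ket{k}\}$ basis, one has $I(X_1;\ldots;X_m|EK)_{\taut} = I(X_1;\ldots;X_m|EK)_{\sigmat}$. The conditional multiparty information is not a function of the $K$-diagonal blocks alone: the entropies entering it, such as $H(X_iEK)$ for $i\neq 1$ and $H(EK)$, are computed from marginals obtained by tracing out $X_1$, and the off-diagonal $K$-blocks survive this partial trace (they become $\Tr_{X_1}[(M_{k'}^{\dagger}M_k\otimes \id)\,\rhot\,]$-type terms, which need not vanish). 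A concrete counterexample: take $m=2$, trivial $E$, $\rhot = \proj{\Phi}^{X_1X_2}$ a maximally entangled pair, and the measurement instrument $M_0=\proj{0}$, $M_1=\proj{1}$ on $X_1$. Then $\taut$ is the GHZ state on $X_1KX_2$, for which $I(X_1;X_2|K)_{\taut} = 1$, while $\sigmat$ is the dephased GHZ state, for which $I(X_1;X_2|K)_{\sigmat} = \sum_k p_k I(X_1;X_2)_{\rhot_k} = 0$.

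The fix is to reorder your steps so that $K$ is dephased while it still sits in an \emph{argument} position of the multiparty information, where monotonicity is available, and only then to move it into the conditioning position; i.e.,
\begin{align*}
    I(X_1;\ldots;X_m|E)_{\rhot}
        \ &=\ I(X_1K;X_2;\ldots;X_m|E)_{\taut} \\
        \ &\geq\ I(X_1K;X_2;\ldots;X_m|E)_{\sigmat} \\
        \ &\geq\ I(X_1;X_2;\ldots;X_m|EK)_{\sigmat}
        \ =\ \sum_k p_k\, I(X_1;\ldots;X_m|E)_{\rhot_k},
\end{align*}
where the first inequality realizes the dephasing of $K$ as copying $K$ into a fresh register $K'$ followed by discarding $K'$ (isometric invariance plus Lemma~\ref{cond-monotonicity}), and the second is Lemma~\ref{useful-chain-rule} applied to $\sigmat$ rather than to $\taut$. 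This reordered chain is precisely the paper's proof, with $K$ and $K'$ playing the roles of $X_1'$ and $X_1''$ there; note also that the paper's formulation handles instruments whose elements $\E_k$ have several Kraus operators each, which your single-Kraus isometry $V_1=\sum_k M_k\otimes\ket{k}^K$ would need to accommodate via an extra discarded index register. Even the weaker inequality $I(\cdot|EK)_{\sigmat}\leq I(\cdot|EK)_{\taut}$, which would have sufficed for your chain, cannot be extracted from the matching-diagonal-blocks observation; the safe route is the one above.
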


    \begin{proof}
    In order to show this we will follow the argument of \cite{CW04}, which in turn follows the approach
    described in \cite{Vid00}.  We will show that \Esq has the following two properties:
    \begin{enumerate}	\label{requirementsForLOCC}
        \item   Given any unilocal quantum instrument $\E_k$
                (a collection of completely positive maps such that $\sum_k\!\E_k$ is trace preserving \cite{DL70})
                and any quantum state $\rho^{X_1\ldots X_m}$, then
                \be
                    \Esq(X_1;X_2;\ldots X_m)_\rho \geq \sum_k p_k \Esq(X_1;X_2;\ldots X_m)_{\rhot_k}
                \ee
                where
                \be
                    p_k = \Tr\ \E_k(\rho^{X_1\ldots X_m})
                    \quad \textrm{and} \quad
                    \rhot_k^{X_1\ldots X_m}=\frac{1}{p_k}\E_k(\rho^{X_1\ldots X_m}).
                \ee
        \item   $\Esq$ is convex.
    \end{enumerate}


    Without loss of generality, we assume that $\E_k$ acts on the first system.
    We will implement the quantum instrument by appending to $X_1$
    environment systems $X_1'$ and $X_1''$ prepared in standard pure states,
    applying a unitary $U$ on $X_1X_1'X_1''$, and then tracing out over $X_1''$.
    We store $k$, the classical record of which $\E_k$ occurred, in the $X_1'$ system.
    More precisely, for any extension of $\rho^{X_1X_2\cdots X_m}$ to $X_1X_2\cdots X_mE$,
        \be
            \rho^{X_1X_2\ldots X_mE}    \mapsto
            \rhot^{X_1X_1'X_2\ldots X_mE} := \sum_k \ \E_k\!\! \otimes\!\!
                    I_E \!\left( \rho^{X_1X_2\ldots X_mE} \right)\otimes\ket{k}\bra{k}^{X_1'}.
        \ee
    The argument is then as follows:
    \bea
      \frac{1}{2} I(X_1;X_2;\ldots X_m|E)_\rho  &=&
                \frac{1}{2}I(X_1X_1'X_1'';X_2;\ldots ;X_m|E)_\rho \label{refone}    \\
        &=&     \frac{1}{2}I(X_1X_1'X_1'';X_2;\ldots ;X_m|E)_{\rhot}    \label{reftwo} \\
        &\geq&  \frac{1}{2}I(X_1X_1';X_2;\ldots ;X_m|E)_{\rhot} \label{refthree} \\
        &\geq&  \frac{1}{2}I(X_1;X_2;\ldots; X_m|EX_1')_{\rhot} \label{reffour} \\
        &=&     \frac{1}{2}\sum_k p_k I(X_1;X_2;\ldots ;X_m|E)_{\rhot_k}    \label{reffive} \\
        &\geq&  \sum_k p_k \Esq\left(X_1;X_2;\ldots ;X_m \right)_{\rhot_k} \label{refsix}
    \eea

    The equality \eqref{refone} is true because adding an uncorrelated ancilla does not change the entropy of the system.
    The transition $\rho \rightarrow \rhot$ is unitary and doesn't change entropic quantities so
    \eqref{reftwo} is true.
    For \eqref{refthree} we use the monotonicity of conditional multiparty information, Lemma \ref{cond-monotonicity}.
    In \eqref{reffour} we use the chain-type rule from Lemma \ref{useful-chain-rule}.
    In \eqref{reffive} we use the index information $k$ contained in $X_1'$.
    Finally, since $\Esq$ is the infimum over all extensions, it must be no more than the particular extension $E$,
    so \eqref{refsix} must be true.
    Now since the extension $E$ in \eqref{refone} was arbitrary, it follows that
    $\Esq({X_1;X_2;\ldots ;X_m})_\rho \geq \sum_k p_k \Esq\left(X_1;X_2;\ldots; X_m \right)_{\rhot_k}$ which completes
    the proof of Property 1.

    To show the convexity of $\Esq$, we again follow the same route as in \cite{CW04}.
    Consider the states $\rho^{X_1X_2\ldots X_m}$ and $\sigma^{X_1X_2\ldots X_m}$ and their
    extensions $\rhot^{X_1X_2\ldots X_mE}$ and $\sigmat^{X_1X_2\ldots X_mE}$ defined over the same system $E$.
    We can also define the weighted sum of the two states
    $\tau^{X_1X_2\ldots X_m} = \lambda\rho^{X_1X_2\ldots X_m} + (1-\lambda)\sigma^{X_1X_2\ldots X_m}$
    and the following valid extension:
    \be
        \tilde{\tau}^{X_1X_2\ldots X_mEE'} = \lambda\rho^{X_1X_2\ldots X_mE}\otimes\proj{0}^{E'}
                                            + (1-\lambda)\sigma^{X_1X_2\ldots X_mE}\otimes\proj{1}^{E'}.
    \ee
    Using the definition of squashed entanglement we know that
    \begin{align}
        \Esq(X_1;X_2;\ldots &; X_m)_\tau \nonumber \\
                            &\leq\      \frac{1}{2} I(X_1;X_2;\ldots; X_m|EE')_{\tilde{\tau}}   \nonumber \\
                            &=\         \frac{1}{2}\left[ \lambda I(X_1;X_2;\ldots ;X_m|E)_\rhot
                                            + (1-\lambda)I(X_1;X_2;\ldots ;X_m|E)_\sigmat \right].  \nonumber
    \end{align}
    Since the extension system $E$ is completely arbitrary we have
    \be
        \Esq(X_1;\ldots; X_m)_\tau
        \leq
        \lambda\Esq(X_1;\ldots; X_m)_\rho + (1-\lambda)\Esq(X_1;\ldots; X_m)_\sigma, \nonumber
    \ee
    so \Esq is convex.

    We have shown that \Esq satisfies both Properties 1 and 2 from page~\pageref{requirementsForLOCC}. 
    Therefore, it must be an entanglement monotone.
    \end{proof}

\bigskip
\myparagraph{Subadditivity on Product States}
    Another desirable property for measures of entanglement is that they should be additive or at least subadditive on
    tensor products of the same state. Subadditivity of \Esq is easily shown from the properties of
    multiparty information.

    \begin{proposition}     \label{subadditiveTensorProducts}
    $\Esq$ is subadditive on tensor product states, i.e.
        \be \label{subadditivity}
            \Esq\!\left( {X_1Y_1;\ldots;X_mY_m} \right)_\rho \leq
                    \Esq\!\left( {X_1;\ldots;X_m} \right)_\rho
                    +  \Esq\!\left( {Y_1;\ldots;Y_m} \right)_\rho
        \ee
        where $\rho^{X_1Y_1X_2Y_2\ldots X_mY_m}=\rho^{X_1X_2\ldots X_m}\otimes\rho^{Y_1Y_2\ldots Y_m}$.
    \end{proposition}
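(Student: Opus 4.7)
The plan is to mimic the standard argument for bipartite squashed entanglement by building an extension of the product state out of extensions of the factors. Let $\rhot^{X_1 \cdots X_m E}$ and $\sigmat^{Y_1 \cdots Y_m F}$ be extensions of $\rho^{X_1 \cdots X_m}$ and $\rho^{Y_1 \cdots Y_m}$ respectively. The key observation is that $\rhot \otimes \sigmat$ is a valid extension of the product state $\rho^{X_1 Y_1 \cdots X_m Y_m}$, with joint extension system $EF$. Thus by the definition of \Esq,
\be
  2\,\Esq(X_1Y_1;\ldots;X_mY_m)_\rho
  \leq I(X_1Y_1;\ldots;X_mY_m \mid EF)_{\rhot \otimes \sigmat}. \nonumber
\ee

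The main technical step will then be to show that on such a product extension the conditional multiparty information decouples as
\be
  I(X_1Y_1;\ldots;X_mY_m \mid EF)_{\rhot \otimes \sigmat}
  = I(X_1;\ldots;X_m \mid E)_{\rhot} + I(Y_1;\ldots;Y_m \mid F)_{\sigmat}. \nonumber
\ee
This is a direct entropy computation: because the $X$-systems with $E$ are in a tensor product with the $Y$-systems with $F$, every conditional entropy splits, i.e.\ $H(X_i Y_i \mid EF) = H(X_i \mid E) + H(Y_i \mid F)$ and $H(X_1 Y_1 \cdots X_m Y_m \mid EF) = H(X_1 \cdots X_m \mid E) + H(Y_1 \cdots Y_m \mid F)$. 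Summing and subtracting according to the definition of multiparty information gives the desired decoupling.

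Combining the two displays yields, for every choice of extensions $E$ of $\rho^{X_1 \cdots X_m}$ and $F$ of $\rho^{Y_1 \cdots Y_m}$,
\be
  2\,\Esq(X_1Y_1;\ldots;X_mY_m)_\rho
  \leq I(X_1;\ldots;X_m \mid E)_{\rhot} + I(Y_1;\ldots;Y_m \mid F)_{\sigmat}. \nonumber
\ee
Taking the infimum independently over $E$ and over $F$ on the right-hand side (which is legitimate since the two terms depend on disjoint extension systems) and dividing by two produces \eqref{subadditivity}.

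The only real obstacle is the tensor-product splitting of the conditional multiparty information; everything else is bookkeeping. That step is nothing more than the observation that the von Neumann entropy is additive across a tensor product, applied termwise, but one should be careful to check that each entropy in the definition of $I(\,\cdot\,|\,\cdot\,)$ factors correctly, including the prefactor $(m-1)H(EF) = (m-1)H(E) + (m-1)H(F)$ that arises when the conditional multiparty information is expanded using unconditioned entropies.
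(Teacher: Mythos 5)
Your proposal is correct and follows essentially the same route as the paper: both tensor an extension $E$ of the $X$-part with an extension of the $Y$-part to form a (non-generic) extension of the product state, use additivity of von Neumann entropy to split the conditional multiparty information as $I(X_1;\ldots;X_m\mid E) + I(Y_1;\ldots;Y_m\mid F)$, and then take the infimum over both extensions independently. Your explicit remark about the $(m-1)H(EF)=(m-1)H(E)+(m-1)H(F)$ term is exactly the bookkeeping implicit in the paper's entropy expansion.
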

    \begin{proof}
    Assume that $\rho^{X_1X_2\ldots X_mE}$ and $\rho^{Y_1Y_2\ldots Y_mE'}$ are extensions. Together they form an
    extension $\rho^{X_1Y_1X_2Y_2\ldots X_mY_mEE'}$ for the product state.
    \begin{align}
        2\Esq\big(X_1&Y_1;X_2Y_2;\ldots;X_mY_m\big)_\rho \nonumber \\
            &\leq\      I(X_1Y_1;X_2Y_2;\ldots;X_mY_m|EE') \nonumber \\
            &=\         \sum_i H(X_iY_iEE') - H(X_1Y_1X_2Y_2\ldots X_mY_mEE')
                                    - (m-1)H(EE') \nonumber
    \end{align}
    \begin{align}
            &=\         I(X_1;X_2;\ldots;X_m|E) + I(Y_1;Y_2;\ldots;Y_m|E'). \qquad \ \ \ 
    \end{align}
    The first line holds because the extension for the $XY$ system that can be built by combining
    the $X$ and $Y$ extensions is not the most general extension.
    The proposition then follows because the inequality holds for all extensions of $\rho$ and $\sigma$.
    \end{proof}

    The question of whether \Esq is additive, meaning superadditive in addition to subadditive, remains an open problem.
    Indeed, if it were possible to show that correlation between the $X$ and $Y$ extensions is unnecessary
    in the evaluation of the squashed entanglement of $\rho \otimes \sigma$, then \Esq would be additive.
    This is provably true in the bipartite case~\cite{CW04} but the same method does not seem to work
    with three or more parties.


\bigskip
\myparagraph{Continuity}
    The continuity of bipartite \Esq was conjectured in \cite{CW04} and proved by Alicki and Fannes in \cite{AF04}.
    We will follow the same argument here to prove the continuity of the multiparty squashed entanglement.
    The key to the continuity proof is the following lemma which makes use of an ingenious geometric construction.

    \begin{lemma}[Continuity of conditional entropy \cite{AF04}] \label{continuityOfConditional}
    Given density matrices $\rho^{AB}$ and $\sigma^{AB}$ on the space $\calH^A \otimes \calH^B$ such that
        \be
            \| \rho - \sigma \|_1 = \frac{1}{2}\Tr | \rho - \sigma | \leq \epsilon,
        \ee
    it is true that
        \be
            \left| H(A|B)_\rho - H(A|B)_\sigma \right| \leq 4\epsilon \log d_A + 2h(\epsilon)
        \ee
    where $d_A=\dim \calH^A$ and $h(\epsilon)=-\epsilon\log\epsilon - (1-\epsilon)\log(1-\epsilon)$ is the
    binary entropy.
    \end{lemma}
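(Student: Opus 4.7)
The plan is to use the Alicki--Fannes geometric trick, which exploits the concavity of conditional entropy together with the fact that conditioning on a classical register changes $H(A|B)$ by at most the entropy of the register. The idea is to build a single state $\omega^{AB}$ that admits two different classical-quantum extensions, one involving $\rho$ and the other involving $\sigma$, and then use that both extensions share the same marginal on $AB$.

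First, I would apply the Jordan decomposition to the Hermitian operator $\rho - \sigma$, writing $\rho - \sigma = \Delta_+ - \Delta_-$ with $\Delta_\pm \ge 0$ of orthogonal support. Setting $\epsilon' = \mathrm{Tr}\,\Delta_+ = \mathrm{Tr}\,\Delta_- \le \epsilon$ (this equality uses $\mathrm{Tr}(\rho-\sigma) = 0$ together with the convention $\|\cdot\|_1 = \tfrac{1}{2}\mathrm{Tr}|\cdot|$), I would normalize to get density matrices $\omega_\pm := \Delta_\pm / \epsilon'$. The crucial observation is that $\rho + \epsilon'\omega_- = \sigma + \epsilon'\omega_+$, so the state
$$\omega^{AB} \ := \ \frac{1}{1+\epsilon'}\bigl(\rho + \epsilon'\omega_-\bigr) \ = \ \frac{1}{1+\epsilon'}\bigl(\sigma + \epsilon'\omega_+\bigr)$$
is well-defined as a single density matrix with two different convex decompositions.

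Next I would attach a classical flag $X$ recording which term in the convex decomposition is in use, defining
$$\omega^{ABX} \ := \ \tfrac{1}{1+\epsilon'}\rho\otimes\proj{0} + \tfrac{\epsilon'}{1+\epsilon'}\omega_-\otimes\proj{1}, \qquad
\omega'^{ABX} \ := \ \tfrac{1}{1+\epsilon'}\sigma\otimes\proj{0} + \tfrac{\epsilon'}{1+\epsilon'}\omega_+\otimes\proj{1}.$$
Both traces over $X$ equal $\omega^{AB}$. For any classical-quantum extension, $H(A|BX)$ is a convex combination of conditional entropies: $H(A|BX)_\omega = \tfrac{1}{1+\epsilon'} H(A|B)_\rho + \tfrac{\epsilon'}{1+\epsilon'} H(A|B)_{\omega_-}$, and similarly for $\omega'$. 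Moreover, since $X$ is classical, $0 \le H(A|B) - H(A|BX) = I(A;X|B) \le H(X) = h\!\left(\epsilon'/(1+\epsilon')\right)$. Applying this two-sided estimate to both $\omega$ and $\omega'$, and using that $H(A|B)_\omega = H(A|B)_{\omega'}$, the $H(A|B)_\omega$ terms cancel in the subtraction, yielding
$$\Bigl|\tfrac{1}{1+\epsilon'}\bigl(H(A|B)_\rho - H(A|B)_\sigma\bigr) + \tfrac{\epsilon'}{1+\epsilon'}\bigl(H(A|B)_{\omega_-} - H(A|B)_{\omega_+}\bigr)\Bigr| \ \le \ h\!\left(\tfrac{\epsilon'}{1+\epsilon'}\right).$$
The triangle inequality and the trivial bound $|H(A|B)_{\omega_\pm}| \le \log d_A$ then give an estimate of the form $2\epsilon'\log d_A + (1+\epsilon')h(\epsilon'/(1+\epsilon'))$.

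The main obstacle will be matching the explicit constants $4\epsilon\log d_A + 2h(\epsilon)$ stated in the lemma rather than just getting a qualitative $O(\epsilon\log d_A + h(\epsilon))$ result. This requires (i) carefully tracking that $H(A|B)$ can be negative, so the trivial bound gives a range of $2\log d_A$ rather than $\log d_A$; (ii) performing the bound twice -- once from above and once from below -- to control both the concavity defect and the opposite inequality; and (iii) using monotonicity of $h$ on $[0,1/2]$ together with the inequality $(1+\epsilon')h(\epsilon'/(1+\epsilon')) \le 2h(\epsilon')$ (and then $h(\epsilon') \le h(\epsilon)$ for $\epsilon',\epsilon$ in the appropriate range) to absorb the normalization factor. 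None of these steps is deep, but the bookkeeping is what produces the precise constants $4$ and $2$.
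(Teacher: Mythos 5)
The paper never proves this lemma---it is imported verbatim from \cite{AF04} (the ``ingenious geometric construction'' the text alludes to)---and your argument is a correct reconstruction of exactly that Alicki--Fannes proof: Jordan decomposition of $\rho-\sigma$, the single state $\omega^{AB}$ with two convex decompositions, a classical flag $X$ with $0 \le H(A|B)-H(A|BX) = I(A;X|B) \le H(X)$, and the range bound $|H(A|B)| \le \log d_A$. Your intermediate estimate $2\epsilon'\log d_A + (1+\epsilon')h\bigl(\epsilon'/(1+\epsilon')\bigr)$ is in fact \emph{sharper} than the stated $4\epsilon\log d_A + 2h(\epsilon)$ (it is the tightened form of the inequality later popularized by Winter), and the bookkeeping you flag is unproblematic because for $\epsilon \ge 1/2$ the lemma is trivial from $\left|H(A|B)_\rho - H(A|B)_\sigma\right| \le 2\log d_A \le 4\epsilon\log d_A$, while for $\epsilon < 1/2$ the monotonicity of $h$ and $(1+\epsilon')h\bigl(\epsilon'/(1+\epsilon')\bigr) \le 2h(\epsilon')$ hold as you claim.
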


    This seemingly innocuous technical lemma makes it possible to prove the continuity of \Esq in spite
    of the unbounded dimension of the extension system.
    \begin{proposition}[\Esq is continuous]
        For all states $\rho^{X_1X_2\ldots X_m}$, $\sigma^{X_1X_2\ldots X_m}$  with 
        \mbox{$\| \rho - \sigma \|_1 \leq \epsilon$},
        \mbox{$\| \Esq(\rho)-\Esq(\sigma) \| \leq \epsilon'$} where $\epsilon'$ depends on $\epsilon$ and vanishes as
        $\epsilon \rightarrow 0$.
    \end{proposition}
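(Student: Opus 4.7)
The proof follows the bipartite argument of Christandl and Winter \cite{CW04}, using the Alicki--Fannes continuity lemma for conditional entropy (Lemma \ref{continuityOfConditional}) as the key technical input. The central difficulty is that the infimum defining $\Esq$ ranges over extension systems $E$ of unbounded dimension, so a direct application of Lemma \ref{continuityOfConditional} with $B=E$ would produce a useless bound proportional to $\log d_E$. The resolution is a geometric construction that produces paired extensions of $\rho$ and $\sigma$ on a \emph{common} extension system, whose trace distance is bounded by $\|\rho-\sigma\|_1$ while the disagreement lives entirely in the bounded-dimensional systems $X_1,\ldots,X_m$ together with a flag.

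\textbf{Construction of paired extensions.} Without loss of generality assume $\Esq(\rho) \leq \Esq(\sigma)$. Fix $\delta>0$ and an extension $\rhot^{X_1\cdots X_m E}$ of $\rho$ with $I(X_1;\cdots;X_m|E)_{\rhot} \leq 2\Esq(\rho)+\delta$. Via the Jordan--Hahn decomposition write $\rho-\sigma = \Delta_+ - \Delta_-$, set $t = \tfrac{1}{2}\|\rho-\sigma\|_1 \leq \epsilon/2$ and $\nu_\pm = \Delta_\pm/t$, and note that $\nu_\pm$ live on orthogonal subspaces of $\cH^{X_1}\otimes\cdots\otimes\cH^{X_m}$. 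One checks that $\mu := (\rho - t\nu_-)/(1-t) = (\sigma - t\nu_+)/(1-t)$ is a density matrix, yielding the common decomposition $\rho=(1-t)\mu+t\nu_-$ and $\sigma=(1-t)\mu+t\nu_+$. Assemble $\rhot$ and $\sigmat$ from a common near-optimal extension $\mu^{X_1\cdots X_m E_\mu}$ of $\mu$, fixed purifications $\nu_\pm^{X_1\cdots X_m E_\pm}$, and a classical flag register $F$ distinguishing the $(1-t)$--branch from the $t$--branch; in the latter branch $\rhot$ contains $\nu_-$ while $\sigmat$ contains $\nu_+$, which differ only on the bounded-dimensional $X_1\cdots X_m$ factor. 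A direct computation then gives $\|\rhot - \sigmat\|_1 = 2t \leq \epsilon$.

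\textbf{Applying the conditional entropy continuity lemma.} For every subset $\K \subseteq \{1,\ldots,m\}$, apply Lemma \ref{continuityOfConditional} with $A=X_\K$ of bounded dimension $d_\K = \prod_{i\in\K}d_{X_i}$ and $B$ equal to the full extension system (whose potentially enormous dimension does not enter the bound):
\[
\bigl|H(X_\K|E)_{\rhot} - H(X_\K|E)_{\sigmat}\bigr| \leq 4\epsilon\log d_\K + 2h(\epsilon).
\]
Summing over the $m+1$ entropic terms in $I(X_1;\cdots;X_m|E) = \sum_i H(X_i|E) - H(X_1\cdots X_m|E)$ yields
\[
\bigl|I(X_1;\cdots;X_m|E)_{\rhot} - I(X_1;\cdots;X_m|E)_{\sigmat}\bigr| \leq 8\epsilon\log d + 2(m+1)h(\epsilon),
\]
with $d=\prod_i d_{X_i}$. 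Combined with $2\Esq(\sigma) \leq I(X_1;\cdots;X_m|E)_{\sigmat}$ and the near-optimality of $\rhot$, this gives $\Esq(\sigma)-\Esq(\rho) \leq 4\epsilon\log d + (m+1)h(\epsilon) + \delta/2$; letting $\delta\to 0$ and symmetrizing establishes continuity with $\epsilon'(\epsilon) = 8\epsilon\log d + 2(m+1)h(\epsilon) \to 0$.

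\textbf{Main obstacle.} The delicate step is ensuring that the $\rhot$ produced by the geometric construction simultaneously satisfies the near-optimality bound $I(X_1;\cdots;X_m|E)_{\rhot} \leq 2\Esq(\rho)+\delta$ and sits close in trace distance to $\sigmat$. Decomposing $I_{\rhot}$ along the flag $F$ expresses it as $(1-t)\,I(X_1;\cdots;X_m|E_\mu)_\mu + t\,I(X_1;\cdots;X_m|E_-)_{\nu_-}$, and the gap between $(1-t)\Esq(\mu) + t\Esq(\nu_-)$ and $\Esq(\rho)$ permitted by convexity of $\Esq$ is bounded by $O(t\log d)$ since each squashed entanglement is at most $\tfrac{1}{2}\log d$. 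This is what rescues the argument from the unboundedness of $d_E$ and is the conceptual core distinguishing the multiparty proof from a naive direct estimate.
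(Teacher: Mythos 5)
Your proposal takes a genuinely different route from the paper, but the route has two gaps, the first of which is fatal as written. The decomposition at the heart of your ``construction of paired extensions'' does not exist. With $\rho-\sigma=\Delta_+-\Delta_-$, the operator common to both states is $\rho-\Delta_+=\sigma-\Delta_-$ (your signs are swapped: $(\rho-t\nu_-)/(1-t)=(\sigma-t\nu_+)/(1-t)$ would force $\rho-\sigma=\Delta_--\Delta_+$, impossible unless $\rho=\sigma$), and, more seriously, this common operator is in general \emph{not} positive semidefinite, so no density matrix $\mu$ with $\rho=(1-t)\mu+t\nu$ and $\sigma=(1-t)\mu+t\nu'$ exists. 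A counterexample on one qubit: $\rho=\proj{0}$, $\sigma=\proj{+}$ gives $t=1/\sqrt{2}$ and $\rho-\Delta_+$ with a negative diagonal entry ($\approx -0.10$). This is exactly the quantum obstruction that forces Alicki and Fannes to use the \emph{additive} construction $\omega=\frac{1}{1+t}(\rho+\Delta_-)=\frac{1}{1+t}(\sigma+\Delta_+)$, in which $\rho$ and $\sigma$ appear as components of a common mixture $\omega$ rather than sharing a common component --- and that construction does not hand you extensions of $\rho$ and $\sigma$ on the same $E$ at trace distance $O(\epsilon)$, which is what your argument needs. There is a second gap in your ``main obstacle'' paragraph: convexity gives $\Esq(\rho)\leq(1-t)\Esq(\mu)+t\Esq(\nu_-)$, whereas transferring near-optimality from $\mu$ to $\rho$ requires the \emph{reverse} (``almost concavity'') inequality $(1-t)\Esq(\mu)\leq\Esq(\rho)+O(t\log d)$; the crude bound $\Esq\leq\frac{m}{2}\log d$ yields only an $O(\log d)$ gap, not $O(t\log d)$, so the step you call the conceptual core is asserted rather than proved.

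The paper sidesteps both problems by escaping to the larger Hilbert space. From $\|\rho-\sigma\|_1\leq\epsilon$ it gets $F(\rho,\sigma)\geq 1-\epsilon$, and Uhlmann's theorem supplies purifications $\ket{\rho}^{X_1\cdots X_mR}$ and $\ket{\sigma}^{X_1\cdots X_mR}$ with fidelity at least $1-\epsilon$. Any extension of either state is realized by a channel $\Lambda^{R\to E}$ acting on the purifying system; applying the \emph{same} $\Lambda$ to both purifications produces extensions on a common $E$, and monotonicity of fidelity followed by the Fuchs--van de Graaf conversion gives $\|\rhot^{X_1\cdots X_mE}-\sigmat^{X_1\cdots X_mE}\|_1\leq 2\sqrt{\epsilon}$ --- uniformly in $E$, with no optimality bookkeeping needed. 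Termwise application of Lemma \ref{continuityOfConditional} then yields $\epsilon'=16\sqrt{\epsilon}\log\bigl(\prod_i d_i\bigr)+2(m+1)h(2\sqrt{\epsilon})$, and since the extension was arbitrary the two infima are within $\epsilon'$. Note that your claimed bound is linear in $\epsilon$ while the paper's is $O(\sqrt{\epsilon}\log d)$; a linear bound is in fact attainable, but only by first proving an almost-concavity statement for the conditional multiparty information along a classical flag (itself via the Alicki--Fannes lemma), not by the common-component decomposition, which is unavailable for quantum states.
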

    The precise form of $\epsilon'$ can be found in equation \eqref{epsilonprime}.

    \begin{proof}
        Proximity in trace distance implies proximity in fidelity distance \cite{Fuchs}, in the sense that
        \be
            F(\rho^{X_1X_2\ldots X_m}, \sigma^{X_1X_2\ldots X_m})   \geq        1 -\epsilon,
        \ee
        but by Uhlmann's theorem \cite{U76} this means that we can find purifications $\ket{\rho}^{X_1X_2\ldots X_mR}$
        and  $\ket{\sigma}^{X_1X_2\ldots X_mR}$ such that
        \be
            F(\ket{\rho}^{X_1X_2\ldots X_mR}, \ket{\sigma}^{X_1X_2\ldots X_mR})     \geq        1 -\epsilon.
        \ee
        Now if we imagine some general operation $\Lambda$ that acts only on the purifying system $R$
        \bea
            \rho^{X_1X_2\ldots X_mE}    &=&     (I^{X_1X_2\ldots X_m}\otimes\Lambda^{R \rightarrow E})
                                                \proj{\rho}^{X_1X_2\ldots X_mR}   \\
            \sigma^{X_1X_2\ldots X_mE}  &=&     (I^{X_1X_2\ldots X_m}\otimes\Lambda^{R \rightarrow E})
                                                \proj{\sigma}^{X_1X_2\ldots X_mR}
        \eea
        we have from the monotonicity of fidelity for quantum channels that
        \be
            F({\rho}^{X_1X_2\ldots X_mE}, {\sigma}^{X_1X_2\ldots X_mE})
                \geq F(\ket{\rho}^{X_1X_2\ldots X_mR}, \ket{\sigma}^{X_1X_2\ldots X_mR})
                    \geq        1 -\epsilon,
        \ee
        which in turn implies \cite{Fuchs} that
        \be
            \| \rho^{X_1X_2\ldots X_mE} - \sigma^{X_1X_2\ldots X_mE} \|_1   \leq    2\sqrt{\epsilon}.
        \ee
        Now we can apply Lemma \ref{continuityOfConditional} to each term in the multiparty information to obtain
        \begin{align}
        \Big| I(X_1&;X_2;\ldots X_m|E)_\rho - I(X_1;X_2;\ldots X_m|E)_\sigma \Big| \nonumber \\
            &\leq\  \sum_{i=1}^m    \Big| H(X_i|E)_\rho - H(X_i|E)_\sigma \Big| \nonumber \\
            & \qquad\qquad\quad    +\ \Big| H(X_1X_2\ldots X_m|E)_\rho  - H(X_1X_2\ldots X_m|E)_\sigma \Big| \nonumber \\
            &\leq\  \sum_{i=1}^m \left[ 8\sqrt{\epsilon} \log d_i + 2h( 2\sqrt{\epsilon}) \right]
                    + 8\sqrt{\epsilon} \log\left( \prod_{i=1}^m d_i \right)
                    + 2h( 2\sqrt{\epsilon} ) \nonumber \\
            &=\     16\sqrt{\epsilon}\log\left( \prod_{i=1}^m d_i \right)
                    + (m+1)2h( 2\sqrt{\epsilon} )
                    =: \epsilon' \label{epsilonprime}
        \end{align}
        where $d_i=\dim {\calH}^{X_i}$ and $h(.)$ is as defined in Lemma \ref{continuityOfConditional}.
        Since we have shown the above inequalities for \emph{any} extension $E$ and the quantity $\epsilon'$ vanishes
        as $\epsilon \rightarrow 0$, we have proved that \Esq is continuous.
    \end{proof}

	\bigskip
	\bigskip
	\section{Example calculations of \Esq}   \label{subsec:examples}

		Below we give several examples of simple systems where \Esq is calculated to gain intuition about how it behaves.
		As a first step we verify that \Esq is zero for states that are manifestly not entangled.
	
		\myparagraph{Example 1: Fully decoupled state}
		Given the state $\rho_1^{X_1X_2\ldots X_m} = \rho^{X_1} \otimes \rho^{X_2} \otimes\cdots\otimes\rho^{X_m} =
		\bigotimes_1^m \rho^{X_i}$ the mutual information for this state is:
		   \be
		       I(X_1;X_2; \ldots ;X_m) = \sum_i^m H(X_i) - \underbrace{H(X_1,X_2, \ldots, X_m)}_{=\sum_i^m H(X_i)} = 0
		   \ee
		which is to be expected since the state is a tensor product and cannot contain entanglement.

		In the next example we look at more complicated states where \Esq is non-zero but simple to calculate.
		
		\myparagraph{Example 2: Partially separable state}
		Now consider a state which is separable on all systems except for two. We write
		\be     \rho_2^{ABX_1X_2\ldots X_m} = \sum_j p_j
		                                               \proj{\alpha_j}^{AB} \otimes
		                                               \proj{\beta_j}^{X_1} \otimes \cdots
		                                               \proj{\zeta_j}^{X_m}
		                                               \nonumber
		\ee
		\noindent and an extension $E$ that records the index $j$. For this extension we will have:
		{ \small
		\begin{align*}
		   I(A;B;X_1;&\ldots ;X_m|E)       = \\
		   &=\ \ \              H(A|E) + H(B|E) + \sum_i^m H(X_i|E) - H(ABX_1\cdots X_m|E)\\
		   &=\ \ \              H(AE)\!+\!H(BE)\!+\!\!\sum_i^m H(X_iE)\!-\!H(ABX_1\cdots X_mE)\!-\!(m\!+\!1)H(E) \\
		   &\geq^{(1)}     		H(AE) + H(BE) - H(ABE) - H(E) \\
		   &=\ \ \              I(A;B|E) \\
		   &\geq\ \ \          	2 \Esq(A;B)_\rho
		\end{align*} }
		To show (1) we repeatedly used the strong subadditivity property of von~Neumann entropy
		\be
		   -H(E) - H(EY_1 \cdots Y_m) \geq - H(EY_m) - H(EY_1 \cdots Y_{m-1}).
		\ee
		Thus we have shown that for partially separable states, \Esq of the whole is at least as much as
		its non-separable part. 

		\myparagraph{Example 3: \Esq for the GHZ and W states}
		Consider the $m$-party GHZ state 
		$\ket{GHZ}^{X_1X_2\cdots X_m} = \frac{\ket{0}^{\otimes m} + \ket{1}^{\otimes m}}{\sqrt{2}}$
		and the $m$-party W state 
		$\ket{W}^{X_1X_2\cdots X_m}=\frac{1}{\sqrt{m}}\sum_{i=0}^{m-1} \vert\hat{i}\rangle$,
		where $\vert\hat{i}\rangle = \vert 0 \cdots 0 1 \underbrace{0 \cdots 0}_{i}\rangle$.
		In particular, the three-party GHZ and W states correspond to
		\be	\nonumber
		   \ket{GHZ} = \frac{\ket{000} + \ket{111}}{\sqrt{2}}  \qquad  \rm{and}
		                                                       \qquad  \ket{W}=\frac{1}{\sqrt{3}}
		                                                               \left( \ket{001}+\ket{010}+\ket{100} \right).
		\ee
		
		\noindent The squashed entanglement of the the general GHZ state is
		\bea
	      \Esq(X_1;X_2;\cdots;X_m)_{GHZ}	&=&		\frac{1}{2}\inf_E I(X_1;X_2;\cdots;X_m|E)	\nonumber\\ \nonumber
	   								&=&		\frac{1}{2}I(X_1;X_2;\cdots;X_m) \qquad\qquad\textrm{(pure state)}\\
	   								&=&     \frac{1}{2}\bigg[
	   												\sum_i^m \!\!\!\underbrace{H(X_i)}_{ \text{max. mixed} }
	                                                - \ \ \underbrace{H(X_1\ldots X_m)}_{ \text{pure}}
											\bigg]  \nonumber\\
									&=&     \frac{m}{2} \nonumber
		\eea
		
		\noindent For the W state, the 1-qubit reduced systems are of the form
		\be
		   \Tr_{X_2\ldots X_m}\left(\proj{W}\right) =
		                                                   \left(\begin{array}{cc}
		                                                       \frac{m-1}{m}       &   0               \\
		                                                       0                   &   \frac{1}{m}
		                                                   \end{array}\right)
		\ee
		and so the squashed entanglement for the W state is given by the formula
		\begin{eqnarray*}
		   \Esq(X_1;X_2;\cdots;X_m)_W 	&=&	\frac{1}{2}I(X_1;X_2;\cdots;X_m)   \\
		   								&=&	\frac{1}{2}\bigg[ 
	   												\sum_i^m H(X_i)
	                                                - \underbrace{H(X_1\ldots X_m)}_{ =0 }
											\bigg]  \\
	   							   	&=&	 \frac{m}{2}\log_2\left( \frac{m}{(m-1)^{\frac{(m-1)}{m}}}\right) \\
	   								&=&	  \frac{1}{2}\log_2\left( \frac{m^m}{(m-1)^{(m-1)}}\right) \\
	   								&=&	  \frac{1}{2} \log_2 m + O(1) \quad  <\!< \frac{m}{2}.
		\end{eqnarray*}
		We can see that the GHZ state is maximally multiparty entangled whereas the W state contains 
		very little multiparty entanglement.



\chapter{Multiparty distributed compression}	\label{chapter:multiparty-distributed-compression}

	
	    Distributed compression of classical information, as discussed in
	    Section~\ref{subsection:classical-slepian-wolf}, involves many parties collaboratively encoding their 
	    classical sources $X_1,X_2\cdots X_m$ and sending the information to a common receiver \cite{C75}.
	    In the quantum setting, the parties are given a quantum state $\ph^{A_1A_2\cdots A_m} \in \cH^{A_1A_2\cdots A_m}$
	    and are asked to individually compress their shares of the state and transfer them
	    to the receiver while sending as few qubits as possible \cite{ADHW04}. 
	    %
		%
	    We have already discussed a version of quantum distributed compression in 
	    Section~\ref{section:state-merging} where we used shared entanglement and classical communication 
	    to accomplish the task \cite{HOW05b}. 
	    In this chapter, we consider the fully quantum scenario where only quantum communication is used and 
	    classical communication is forbidden.

	    In our analysis, we work in the case where we have many copies of the input state, so that
	    the goal is to send shares of
	    the purification $\ket{\psi}^{A_1A_2\cdots\!A_mR}=(\ket{\ph}^{A_1A_2\cdots\!A_mR})^{\otimes n}$,
	    where the $A_i$'s denote the $m$ different systems and $R$ denotes the reference system, which does not
	    participate in the protocol.
	    A word on notation is in order. 
	    We use $A_i$ to denote both the individual system associated with state $\ph$ as well the $n$-copy
	    version $A_i^{\otimes n}$ associated with $\psi$; the intended meaning should be clear from the context.
	    We also use the shorthand notation $A=A_1A_2\cdots\!A_m$ to denote all the senders.
	    
	    The objective of distributed compression is for the participants to transfer their $R$-entanglement 
	    to a third party Charlie as illustrated in Figure \ref{fig:mpFQSWdiagram}.
	    As discussed in Section~\ref{subsection:ent-fid}, preserving the $R$-entanglement means our protocol
	    has high \emph{entanglement fidelity} \cite{EntFid} which guarantees that we can transfer
	    the state $\ph^{A_1A_2\cdots\!A_m}$, but also preserve all the correlations this state has with the 
	    rest of the world.
	    
	    \bigskip
	    
	    \begin{figure}[ht] \begin{center}
	        {\footnotesize
	        \input{figures/figurempFQSWdiagram.pic} }

	        \FigureCaptionOpt{	Representation of the quantum correlations in the multiparty distributed 
	        					compression protocol.}{
	            Pictorial representation of the quantum correlations between the systems at three stages of the protocol.
	            Originally the state $\ket{\psi}$ is shared between $A_1A_2\cdots A_m$ and $R$.
	            The middle picture shows the protocol in progress.
	            Finally, all systems are received by Charlie and $\ket{\psi}$ is now shared between
	            Charlie's systems $\widehat{A}_1\widehat{A}_2 \cdots \widehat{A}_m$ and $R$.            }
	        \label{fig:mpFQSWdiagram}
	    \end{center}
	    \end{figure}

		An equivalent way of thinking about quantum distributed compression is to say that the 
		participants are attempting to decouple their systems from the reference $R$ solely by 
		sending quantum information to Charlie.
		Indeed, if we assume that originally $R$ is the purification of $A_1A_2\cdots A_m$, 
		and at the end of the protocol there are no correlations between 
		the remnant $W$ systems (see Figure \ref{fig:mpFQSWdiagram}) and $R$, 
		then the purification of $R$ must have been transferred to 
		Charlie's laboratory since none of the original information was discarded.

	    To perform the distributed compression task, each of the senders independently encodes her share
	    before sending part of it to Charlie.
	    The encoding operations are modeled by quantum operations (CPTP maps) $\E_i$ with outputs $C_i$ of dimension
	    $2^{nQ_i}$. \index{encoding operation@$\E_i$: encoding operation}
	    Once Charlie receives the systems that were sent to him, he will apply a decoding operation $\cD$, with 
	    output system 
	    $\widehat{A}=\widehat{A}_1\widehat{A}_2 \ldots \widehat{A}_m$ isomorphic to the original $A=A_1A_2\ldots A_m$.
	 
	 	\bigskip				\index{rate!region}
	    \begin{definition}[The rate region] \label{achievable}
	    We say that a rate tuple $\vec{Q} = (Q_1,Q_2,\ldots,Q_m)$ is achievable if for all $\epsilon > 0$ there exists
	    $N(\epsilon)$ such that for all $n\geq N(\epsilon)$ there exist $n$-dependent maps $(\E_1,\E_2,\ldots,\E_m,\cD)$
	    with domains and ranges as in the previous paragraph for which the fidelity
	    between the original state, $\ket{\psi}^{A^nR^n}=\left(\ket{\ph}^{A_1A_2\cdots\!A_mR}\right)^{\otimes n}$,
	    and the final state,   ${\sigma}^{\widehat{A}_1\widehat{A}_2 \ldots \widehat{A}_mR}={\sigma}^{\widehat{A}^nR^n}$,
	    satisfies
	    { \small
	    \begin{equation} \nonumber 		
	        F\!\left( \ket{\psi}^{A^nR^n}\!\!\!, {\sigma}^{\widehat{A}^nR^n} \right) \!=\!
	        \phantom{.}^{\widehat{A}^nR^n}\!\!\bra{\psi}
	            (\cD \circ (\E_1 \ox \cdots \ox \E_{m}))(\psi^{A^nR^n})
	            \ket{\psi}^{\widehat{A}^nR^n} \geq 1 - \epsilon.
	    \end{equation} }
	    We call the closure of the set of achievable rate tuples the rate region.
	    \end{definition}



	\bigskip
	\bigskip
	\section{The multiparty FQSW protocol} \label{subsec:protocol}
	    Like the original FQSW protocol, the multiparty version relies on Schumacher compression and the mixing
	    effect of random unitary operations for the encoding.
	    The only additional ingredient is an agreed upon permutation of the participants.
	    The temporal order in which the participants will perform their encoding is of no importance.
	    However, the permutation determines how much information each participant is to send to Charlie. 
	
		\bigskip		\index{protocol!multiparty FQSW}
	    For each permutation $\pi$ of the participants, the protocol consists of the following steps:
	    \begin{enumerate}
	        \item   Each Alice-$i$ performs Schumacher compression on her system $A_i$ reducing its
	                effective size to the entropy bound of roughly $H(A_i)$ qubits per copy of the state.
	        \item   Each participant applies a known, pre-selected random unitary to the compressed system.
	        \item   Participant $i$ sends to Charlie a system $C_i$ of dimension $2^{nQ_i}$ where
	                \be
	                    Q_i > \frac{1}{2}I(A_i;A_{\K_i}R)_\ph
	                \ee
					where $\K_i = \{ \pi\!(j) : j>\pi^{\neg 1}(i) \}$ is the set of participants who come
					after participant $i$ according to the permutation.
	        \item   Charlie applies a decoding operation $D$ consisting of the composition of the decoding maps
	                $\cD_{\pi\!(m)} \circ \cdots \circ \cD_{\pi\!(2)} \circ \cD_{\pi\!(1)}$ defined by the individual 
	                FQSW steps in order to recover $\sigma^{\widehat{A}_1\widehat{A}_2 \ldots \widehat{A}_m}$
	                nearly identical to the original $\psi^{A_1A_2\cdots\!A_m}$ and purifying $R$.
	    \end{enumerate}

		Note that, in order to perform the decoding operation $\cD$, Charlie needs to know which random unitaries
		which were used in the individual encoding operations $\E_i$. We assume this information is shared before 
		the beginning  of the protocol in addition to the permutation $\pi$.
		


	\subsection{Statement of results} \label{subsec:statemenet-of-results}
	
	    This section contains our two main theorems about multiparty distributed compression.
	    In Theorem \ref{thm:THM1} we give the formula for the set of achievable rates using the multiparty FQSW protocol
	    (sufficient conditions).
	    Then, in Theorem \ref{thm:THM2} we specify another set of inequalities for the rates $Q_i$ which must be true
	    for any distributed compression protocol (necessary conditions).
	    In what follows, we consistently use $\K \subseteq \{ 1,2,\ldots m \}$ to denote any subset of the senders in
	    the protocol.
	
	
	    \begin{theorem}     \label{thm:THM1}
	    Let $\ket{\ph}^{A_1A_2\cdots A_mR}$ be a pure state. If the inequality
	        \be \label{inner-bound}
	            \sum_{k\in \K} Q_k \geq  
	            	\frac{1}{2} \left[ \sum_{k\in \K}\!\left[H(A_k)_\ph\right]  +  H(R)_\ph -  H(RA_{\K})_\ph \right]
	        \ee
	    holds for all $\K \subseteq \{1,2,\ldots,m\}$, then the rate tuple $(Q_1,Q_2,\cdots,Q_m)$ is achievable
	    for distributed compression of the $A_i$ systems.
	    \end{theorem}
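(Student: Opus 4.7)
The plan is to use a sequential application of the single-sender FQSW protocol along an arbitrary permutation of the encoders, then leverage a contrapolymatroid structure to fill out the entire rate region via time-sharing.

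Step one: fix a permutation $\pi$ and process encoders in the order $\pi(1),\pi(2),\ldots,\pi(m)$. At round $i$ participant $\pi(i)$ performs the three encoding steps of Section~\ref{subsec:FQSW-protocol}---Schumacher compression, a Haar-random unitary, and the split-and-transmit of a register $C_{\pi(i)}$ of dimension $2^{nQ_{\pi(i)}}$---and Charlie responds with the corresponding FQSW decoder $\cD_{\pi(i)}$. From participant $\pi(i)$'s point of view the systems outside her lab and outside Charlie's lab are precisely $A_{\K_{\pi(i)}}R$, where $\K_{\pi(i)}=\{\pi(i+1),\ldots,\pi(m)\}$; systems already in Charlie's possession need not be decoupled from since they are in the receiver's lab. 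Applying Theorem~\ref{thm:oneShotMother} inductively to the (approximately) pure state produced at the end of round $i-1$ shows that a rate
$$Q_{\pi(i)} \;>\; \tfrac{1}{2}I(A_{\pi(i)};A_{\K_{\pi(i)}}R)_{\ph}$$
is sufficient for the round-$i$ decoupling error, and hence the round-$i$ decoding error, to vanish as $n\to\infty$.

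Step two: identify the resulting rate tuple with a vertex of the region \eqref{inner-bound}. Write $T_i^\pi:=\{\pi(i),\ldots,\pi(m)\}$ and $f(\K):=\tfrac12\bigl[\sum_{k\in\K}H(A_k)+H(R)-H(RA_\K)\bigr]$. A short telescoping computation, together with the purity identity $H(RA_{T_1^\pi})=H(RA)=0$, yields
$$\tfrac{1}{2}I(A_{\pi(i)};A_{\K_{\pi(i)}}R)_{\ph}\;=\;f(T_i^\pi)-f(T_{i+1}^\pi),$$
so the $i$-th rate is a marginal increment of $f$ along the reverse ordering of $\pi$. The $m!$ permutations thus furnish $m!$ achievable corner points, each saturating \eqref{inner-bound} on the maximal chain $T_m^\pi\subset\cdots\subset T_1^\pi$.

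Step three: extend achievability to the full region. Because $H(RA_\K)$ is submodular in $\K$ by strong subadditivity, the function $f$ is supermodular (the $\sum_k H(A_k)/2$ piece is modular and $H(R)/2$ is constant), so \eqref{inner-bound} defines a contrapolymatroid whose extreme points are exactly the greedy vertices from Step two. Any other feasible tuple dominates a convex combination of these vertices componentwise, hence is achievable by time-sharing across the $m!$ permutation-based protocols and by wasting any surplus rate coordinatewise.

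The main obstacle lies in the error bookkeeping of Step one: Theorem~\ref{thm:oneShotMother} presumes an exact pure input, while in round $i\geq 2$ the joint state on the remaining encoders together with $R$ is only $\varepsilon_{i-1}$-close in trace norm to a relabeled copy of $\ket{\ph}^{\otimes n}$. Monotonicity of trace distance under CPTP maps plus Uhlmann's theorem allow us to substitute the ideal state into the decoupling bound at the price of an extra $O(\sqrt{\varepsilon_{i-1}})$ contribution to the round-$i$ error, and a union bound over the $m$ rounds keeps the total error $o(1)$. One must also verify that the $m$ independent random unitaries can be fixed simultaneously, which follows from a standard averaging-then-derandomization argument applied to the expectation over the product Haar measure.
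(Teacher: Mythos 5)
Your proposal is correct and, in its first two steps, is essentially identical to the paper's proof: your Step one is Lemma~\ref{q-point} (sequential application of two-party FQSW along a permutation, with the residual systems $A_{\K_i}R$ playing the role of the reference), and your Step two is the telescoping computation of Lemma~\ref{lemmaWithL}; your superadditivity observation via strong subadditivity is exactly Lemma~\ref{lemma-intersection-union}. The genuine divergence is in Step three. There you invoke the standard contrapolymatroid fact—Edmonds' theorem—that the greedy increments $f(T_i^\pi)-f(T_{i+1}^\pi)$ are feasible, are precisely the vertices of $P_\mathcal{H}$, and that $P_\mathcal{H}$ equals their convex hull plus the nonnegative orthant. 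The paper deliberately does \emph{not} take this shortcut: it notes that Edmonds' proof \cite{E69} rests on linear-programming duality and instead supplies an elementary, self-contained replacement, namely an induction on $|\K|$ (using superadditivity) to show the corner points satisfy all the inequalities, the topological-sort and maximal-chain construction of Proposition~\ref{flag-existence} together with Lemma~\ref{no-co} to show every vertex of $P_\mathcal{H}$ arises from some permutation, and a separate check that the cone part is $cone(\vec{e}_1,\ldots,\vec{e}_m)$, the whole assembled via the Minkowski--Weyl theorem. Your route buys brevity at the cost of an external citation; the paper's buys a duality-free argument that, as its conclusion emphasizes, transfers verbatim to any supermodular rate region (classical multiparty Slepian--Wolf, state merging). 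One further point in your favor: your error bookkeeping—substituting the ideal state into the one-shot decoupling bound via trace-distance monotonicity and Uhlmann's theorem, with a union bound over the $m$ rounds—is more explicit than the paper's proof sketch, which disposes of the accumulation of errors with the phrase ``essentially by the triangle inequality''; and your derandomization remark is consistent with, though not required by, the paper's assumption that the random unitaries are shared before the protocol begins.
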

	
	    Because Theorem \ref{thm:THM1} expresses a set of sufficient conditions for the protocol to succeed, we say
	    that these rates are contained in the rate region. The proof is given in the next section.

	    In the $m$-dimensional space of rate tuples $(Q_1,Q_2,\cdots,Q_m) \in \RR^m$, the inequalities
	    \eqref{inner-bound} define a convex polyhedron \cite{poly} whose facets are given by the corresponding
	    hyperplanes, as illustrated in Figure~\ref{fig:graph3d}.
	    More specifically, the rate region is a supermodular polyhedron \cite{E69}, which means that 
	    it has some special properties that will help us in the proof of Theorem~\ref{thm:THM1}.
	
	    \begin{figure}[ht]
	        \begin{center}      \includegraphics[width=4in]{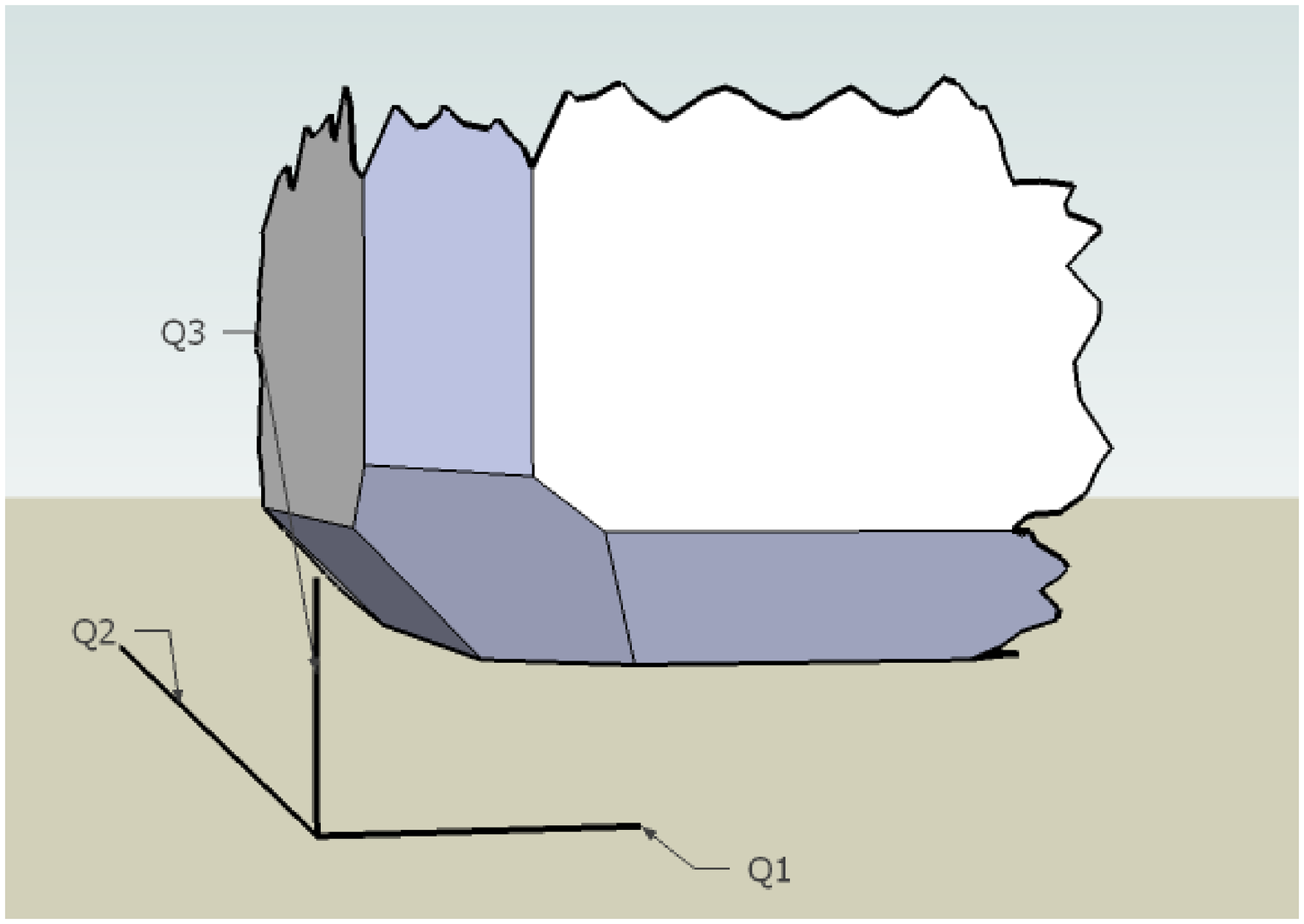}     \end{center}
	        \FigureCaptionOpt{The rate region for the multiparty FQSW protocol with three senders.}{
							  The rate region for the multiparty FQSW protocol with three senders.}
	        \label{fig:graph3d}
	    \end{figure}

	    In order to characterize the rate region further we formulate Theorem~\ref{thm:THM2},
	    an outer bound on the rates that must be satisfied for \emph{all} distributed compression protocols. 
	    \begin{theorem}     \label{thm:THM2}
	    Let $\ket{\ph}^{A_1A_2\cdots A_mR}$ be a pure state input to a distributed compression protocol which
	    achieves the rate tuple $(Q_1,Q_2,\ldots,Q_m)$, then it must be true that
	        \be \label{outer-bound}
	            \sum_{k\in \K} Q_k
	            \geq
	            \frac{1}{2} \left[ \sum_{k\in \K}\!\left[H(A_k)_\ph\right]  +  H(R)_\ph -  H(RA_{\K})_\ph \right]
	                        - \Esq( A_{k_1};A_{k_2};\ldots ;A_{k_{|\K|}} )_\ph,
	        \ee
	    for all $\K \subseteq \{1,2,\ldots,m\}$, where \Esq is the multiparty squashed entanglement.
	    \end{theorem}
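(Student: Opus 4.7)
The plan is to adapt the converse argument for the single-sender FQSW rate to the multiparty setting and to extract the squashed entanglement correction by exploiting the freedom to choose an extension of $\phi^{A_\K}$. Throughout, fix a subset $\K$ and let $\sigma$ denote the post-encoding state after Stinespring dilation of each encoder $\mathcal{E}_k$ to an isometry $V_k: A_k \to C_k W_k$, so that $\sigma^{C_1W_1\cdots C_mW_mR}$ is pure.

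First I would convert the fidelity condition of Definition~4.1 into an approximate decoupling statement: by Uhlmann's theorem and monotonicity of the trace distance under partial trace, $F(\sigma^{\widehat A R},\psi^{AR}) \geq 1-\epsilon$ implies $\|\sigma^{WR} - \sigma^W\otimes\sigma^R\|_1 \leq 2\sqrt{2\epsilon}$, where $W=W_1\cdots W_m$. Then, by the Alicki--Fannes continuity bound (Lemma~4.2 applied as in the continuity proof of $E_{\text{sq}}$), this proximity translates into entropy identities that hold up to an error $n\delta(\epsilon)$ which vanishes as $\epsilon\to 0$. In particular $H(W_\K R) = H(W_\K)+H(R) \pm n\delta(\epsilon)$, and similar ``approximate product'' entropy relations for marginals of interest.

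Next I would obtain the basic one-shot lower bound $n\sum_{k\in\K}Q_k \geq \log\dim C_\K \geq H(C_\K)_\sigma$ and manipulate $H(C_\K)$ using two facts: (i)~the joint isometry $V_\K = \bigotimes_{k\in\K}V_k$ preserves entropy, so $H(C_\K W_\K)_\sigma = nH(A_\K)_\phi$ and $H(C_\K W_\K R)_\sigma = nH(A_\K R)_\phi$; (ii)~$\sigma$ is pure. Applying subadditivity twice and adding the two resulting inequalities yields, after rearrangement and use of the decoupling identity from the previous step,
\begin{equation*}
2H(C_\K) \ \geq\ n\bigl[H(A_\K)_\phi + H(R)_\phi - H(A_\K R)_\phi\bigr] + nH(A_\K)_\phi - 2H(W_\K)_\sigma - n\delta'(\epsilon),
\end{equation*}
so it remains to show that $H(W_\K)_\sigma$ can be bounded above by an expression whose infimum over extensions gives $E_{\text{sq}}$.

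For the squashed entanglement step, I would introduce a fictitious reference system $E$ and an extension $\tilde\rho^{A_\K E}$ of $\phi^{A_\K}$. Since the encoders do not act on $E$, the same isometries $V_k$ applied to the enlarged pure state yield $H(W_\K | E)_{\sigma'} \leq \sum_{k\in\K}H(W_k | E)_{\sigma'}$ by subadditivity, and $H(C_k W_k | E) = H(A_k | E)$ by the isometric action of $V_k$. Combining these to eliminate $H(W_\K)$ from the bound above and reassembling the conditional entropies into the conditional multiparty information yields
\begin{equation*}
\sum_{k\in\K}Q_k \ \geq\ \tfrac{1}{2}I(A_\K;R)_\phi + \tfrac{1}{2}I(A_{k_1};\ldots;A_{k_{|\K|}})_\phi - \tfrac{1}{2}I(A_{k_1};\ldots;A_{k_{|\K|}}|E)_{\tilde\rho} - \delta''(\epsilon),
\end{equation*}
which, upon taking the infimum over all extensions $\tilde\rho^{A_\K E}$ and then $n\to\infty,\ \epsilon\to 0$, produces exactly the stated bound after rewriting $\tfrac{1}{2}[I(A_\K;R)_\phi + I(A_{k_1};\ldots;A_{k_{|\K|}})_\phi]$ as $\tfrac{1}{2}[\sum_{k\in\K}H(A_k) + H(R) - H(RA_\K)]$ via the pure-state identity $H(RA_\K)_\phi = H(A_{\bar\K})_\phi$.

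The hard part will be Step~4: making the bookkeeping work so that the extension $E$, which is extraneous to the actual protocol, legally enters the entropic bound on $H(W_\K)$ while leaving the operational quantities ($H(C_\K)$ and the rates $Q_k$) untouched. The delicate point is to ensure that the subadditivity $H(W_\K|E) \leq \sum_{k\in\K} H(W_k|E)$ combines with the isometry-preserving conditional identities to yield precisely the conditional multiparty information $I(A_{k_1};\ldots;A_{k_{|\K|}}|E)$ rather than a weaker combination; verifying this will likely require careful use of Lemmas~4.4, 4.5, and 4.6 on merging, monotonicity, and the chain-type rule for multiparty information, together with the Alicki--Fannes-type control of the cumulative approximation error $\delta''(\epsilon)$ in Stages 1--3.
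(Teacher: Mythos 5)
Your architecture tracks the paper's converse (dilate the encoders, fidelity implies decoupling of the waste from $R$, dimension counting, extensions lead to squashed entanglement), but the pivotal inequality in your third paragraph is false, and it fails already for the identity protocol. Take $(\ket{\Phi}^{A_1R})^{\otimes n}$ with the other senders trivial, $\K = \{1\}$, and an encoder that sends everything, so all $W_k$ are trivial: then $H(C_\K) = n$, $H(A_\K)_\ph = H(R)_\ph = 1$, $H(A_\K R)_\ph = 0$, and your display reads $2n \geq 3n$. The root cause is structural: the group dimension bound $n\sum_{k\in\K}Q_k \geq H(C_\K)$ can be spent only once, so the left-hand side must be $2n\sum_{k\in\K}Q_k$, with the second rate sum coming from the per-sender bounds $nQ_k \geq H(C_k) \geq nH(A_k)_\ph - H(W_k)$, which your sketch never uses. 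Adding the group bound (processed through purity of the partially encoded state, $H(C_\K A_{\Kbar}) = H(W_\K R^n)$, plus the decoupling estimate) to the summed per-sender bounds gives the correct intermediate
\be \nonumber
2n\!\sum_{k\in\K}\!Q_k \ \geq\ n\Big[\sum_{k\in\K}H(A_k) + H(R) - H(RA_\K)\Big]_\ph - \Big[\sum_{k\in\K}H(W_k) - H(W_\K)\Big] - f_1(\epsilon,n),
\ee
so the quantity to be compared with $2\Esq$ is the nonnegative waste-correlation deficit $\sum_{k}H(W_k) - H(W_\K)$, not an upper bound on $H(W_\K)$ alone. This matters: your own target needs the unconditioned term $\tfrac12 I(A_{k_1};\ldots;A_{k_{|\K|}})$ to upgrade $H(A_\K)$ to $\sum_k H(A_k)$, and that term is produced precisely by the per-sender bounds; if you use only group-level subadditivity, the $H(W_\K)$ terms cancel and you land on the weaker merged-sender bound $\sum_{k\in\K} Q_k \geq \tfrac12 I(A_\K;R)$ with no $\Esq$ correction at all.

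The second gap is your handling of the extension. A generic extension $\tilde\rho^{A_\K E}$ of $\ph^{A_\K}$ is incompatible with the protocol's input: $\ket{\ph}^{A_1\cdots A_mR}$ already purifies $A_\K$ through $A_{\Kbar}R$, so by monogamy there is in general no joint state containing $E$ together with the intact $A_{\Kbar}$ and $R$, and ``applying the same isometries to the enlarged pure state'' is undefined for general $\tilde\rho$. The paper resolves this by generating $E = \cN^{R\to E}(R^n)$ from the reference, and then --- this is the mechanism absent from your sketch --- proves a second continuity estimate $\vert H(W_i) - H(W_i|E)\vert \leq f_2(\epsilon,n)$ (and likewise for $W_\K$): the wastes are nearly product with $R^n$ by the Step-1 decoupling, hence with anything obtained from $R^n$ by a CPTP map, so conditioning on $E$ is essentially free. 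Without this, $H(W_\K)$ and $H(W_\K|E)$ may differ by $\Theta(n)$, and your identities $H(C_kW_k|E) = H(A_k|E)$ plus subadditivity cannot close the bound; ``the encoders do not act on $E$'' is not sufficient. Once the deficit is conditioned, the remainder does go through much as you anticipate: Lemma~\ref{mergingJ} (merging) and Lemma~\ref{cond-monotonicity} (monotonicity) replace $W_{k_j}$ by $A_{k_j}$ one index at a time, yielding $I(A_{k_1};\ldots;A_{k_{|\K|}}|E)$; and the final per-copy statement additionally requires subadditivity of $\Esq$ on tensor powers (Proposition~\ref{subadditiveTensorProducts}), which your $n\to\infty$ limit silently assumes.
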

		
		\bigskip	
	    The multiparty squashed entanglement 
	    was defined in Section~\ref{sec:squashed-entanglement} above. 
	    \index{entanglement!squashed@$\Esq$: squashed entanglement}
	    

	    Notice that Theorems \ref{thm:THM1} and \ref{thm:THM2} both provide bounds of the same form and
	    only differ by the presence of the \Esq term.
	    The rate region is squeezed somewhere between these two bounds as illustrated in Figure \ref{fig:2Dregion}.
	    
	    \begin{figure}[ht]  \begin{center}
	        \def\JPicScale{0.8}
	        {\footnotesize
	        \input{figures/figure2Dregion.pic} \def\JPicScale{1} 
	        }
	        \end{center}
			\FigureCaptionOpt{  Two dimensional diagram showing the inner and outer bound on the rate region.}{
								A two dimensional diagram showing the inner bound from Theorem \ref{thm:THM1} 
								and the outer bound from Theorem \ref{thm:THM2}. The boundary of the real rate 
								region must lie somewhere in between. }	             
	        \label{fig:2Dregion}
	    \end{figure}

	    For states which have zero squashed entanglement, the inner and outer bounds on the region coincide so that
	    in those cases our protocol is an optimal solution to the multiparty distributed compression problem.
	
%
%

	\bigskip
	\bigskip
	\section{Proof of inner bound}	\label{sec:THMIproof}
	
	    The multiparty fully quantum Slepian-Wolf protocol can be constructed directly \cite{HW06} or
	    through the repeated application of the two-party FQSW protocol \cite{FQSW}. 
		We choose the latter approach here in order to illustrate the power of the FQSW protocol as 
		a building block for more complex protocols.
		To complete the proof we will have to ``stitch together'' different achievable points 
		using some concepts from the theory of polyhedra~\cite{poly}.
		The multiparty rate region has a complex but regular geometry so it is important that we 
		use the right language to describe it.
		The geometry of multiparty rate regions has previously been discussed in \cite{E69,TH98}.
							\index{rate!region}
	
	    For every permutation $\pi \in S_m$ of the $m$ senders, there is a different rate tuple
	    $\vec{q}_\pi = (Q_1,Q_2,\ldots,Q_m)_\pi \in \RR^m$ which is achievable in the limit of many copies of the state.
	    By time-sharing we can achieve any rate that lies in the \emph{convex hull} of these points.
	    We will show that the rate region for an input state $\ket{\ph}^{A_1\cdots A_mR}$ can equivalently be described
	    by the set of inequalities from Theorem \ref{thm:THM1}, that is
	    \be \label{inner-bound-biss}
	        \sum_{k\in \K} Q_k
	        \geq
	        \frac{1}{2} \left[ \sum_{k\in \K}\!H(A_k)_\ph  +  H(R)_\ph -  H(RA_{\K})_\ph \right] =: C_\K
	    \ee
	    where $\K \subseteq \{1,2,\ldots,m\}$ ranges over all subsets of participants and $C_\K$ is the name we give to
	    the constant on the right hand side of the inequality.
	    The proof of Theorem \ref{thm:THM1} proceeds in two steps. First we show the set of rate tuples
	    $\{\vec{q}_\pi\}$ is contained in the rate region and then we prove that the set of inequalities
	    \eqref{inner-bound-biss} is an equivalent description of the rates obtained by time sharing and 
	    resource wasting of the rates $\{\vec{q}_\pi\}$.

	
	    Consider the $m$-dimensional space of rate tuples $(Q_1,\cdots,Q_m) \in \RR^m$.
	    We begin by a formal definition of a corner point $\qq_\pi$.
	    \begin{definition}[Corner point]
	        Let $\pi \in S_m$ be a permutations of the senders in the protocol.
	        The corresponding rate tuple $q_\pi=(Q_1,Q_2,\ldots,Q_m)$ is a corner point if
	        \be     \label{individual-rates}
	            Q_{\pi\!(k)}    =   \frac{1}{2}I(A_{\pi\!(k)};A_{\pi\!(k+1)}\cdots A_{\pi\!(m)}R)
	        \ee
	        where the set $A_{\pi\!(k+1)}\cdots A_{\pi\!(m)}$ denotes all the systems which come after $k$ in
	        the permutation $\pi$.
	    \end{definition}
	
	    We define $\Q := \{ \qq_\pi : \pi \in S_m \}$, the set of all corner points. 
	    Clearly, $\vert \Q \vert \leq m!$ but since some permutations might lead to the same rate tuple, 
	    the inequality may be strict.
	
	    \begin{lemma}   \label{q-point}
	        The set of corner points, $\Q = \{ \qq_\pi : \pi \in S_m \}$, is contained in the rate region.
	    \end{lemma}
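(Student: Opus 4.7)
The plan is to prove Lemma~\ref{q-point} by an inductive application of the two-party FQSW protocol from Section~\ref{subsec:FQSW-protocol}, with the permutation $\pi$ dictating the order in which the individual FQSW steps are chained together. By relabeling the senders I may assume, without loss of generality, that $\pi$ is the identity, so the target corner point is
\begin{equation*}
Q_k \;=\; \tfrac{1}{2}\, I(A_k;\, A_{k+1}\cdots A_m R)_\ph, \qquad k=1,2,\ldots,m.
\end{equation*}

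The encoding stage will be performed in parallel, exactly as in Section~\ref{subsec:protocol}: each Alice-$k$ Schumacher compresses her share, applies an independently chosen Haar-random unitary $U_k$ on her compressed system, and transmits a subsystem $C_k$ of dimension $2^{nQ_k}$ to Charlie. I will analyse the decoding sequentially, treating stage $k$ as an instance of the two-party FQSW protocol in which Alice-$k$ plays the sender, Charlie plays the receiver while holding the auxiliary systems $\widehat{A}_1\widehat{A}_2\cdots\widehat{A}_{k-1}$ that were produced in the previous $k-1$ decoding steps, and the ``external reference'' is $A_{k+1}A_{k+2}\cdots A_m R$. The one-shot decoupling theorem (Theorem~\ref{thm:oneShotMother}) together with the rate formula~\eqref{eqn:FQSW} then guarantees that this $k$-th stage succeeds with vanishing error as $n\to\infty$ whenever $Q_k > \tfrac{1}{2} I(A_k; A_{k+1}\cdots A_m R)_\ph$, matching the corner-point formula exactly. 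Crucially, Alice-$k$'s random unitary acts only on $A_k$, so the parallel encoding structure is fully compatible with Charlie's sequential decoding: operations by the other Alices commute trivially with anything outside their own laboratories.

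The backbone of the argument is the inductive invariant that after Charlie's first $k$ decoding steps, the global state on $\widehat{A}_1\cdots\widehat{A}_k\, A_{k+1}\cdots A_m R$ is within trace distance $\sum_{j\leq k}\epsilon_j(n)$ of an isomorphic copy of $\ket{\ph}^{A_1\cdots A_m R}$, tensored with some ebits shared between the Alices and Charlie that are irrelevant to the state-transfer task. Under this invariant, the entropic quantities entering the FQSW rate at stage $k+1$ coincide with those computed from the original $\ph$, so the rate condition is precisely $Q_{k+1}>\tfrac{1}{2}I(A_{k+1};A_{k+2}\cdots A_m R)_\ph$ as claimed. A triangle-inequality sum of the individual errors then shows that the overall entanglement fidelity of the composite $m$-stage protocol tends to $1$, placing $\vec{q}_\pi$ inside the rate region in the sense of Definition~\ref{achievable}.

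The main technical obstacle will be maintaining this inductive invariant through the interplay between Charlie's accumulating auxiliary systems and the residual ebits left behind at each stage. Concretely, I will need to verify that after stage $k$, Alice-$k$'s leftover system factorises from the external reference $A_{k+1}\cdots A_m R$, so that the subsequent operations by later Alices cannot re-introduce correlation with $R$ or corrupt the purification already transferred to Charlie. This is exactly the decoupling content of the FQSW conclusion applied at each individual stage, but care is needed to propagate it cleanly through all $m$ stages; the verification essentially reduces to bookkeeping showing that the purification role of $A_1,\ldots,A_{k-1}$ is incrementally, and isometrically, transferred to $\widehat{A}_1,\ldots,\widehat{A}_{k-1}$ in Charlie's laboratory, leaving the marginal on $A_k A_{k+1}\cdots A_m R$ unchanged and equal (up to $\epsilon$) to the original $\ph$-marginal used to define the next FQSW rate.
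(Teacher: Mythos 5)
Your proposal is correct and follows essentially the same route as the paper's own proof: both establish the corner point $\qq_\pi$ by fixing a convenient permutation without loss of generality (the paper picks the reversed order, you pick the identity --- a purely cosmetic difference) and then chaining $m$ instances of the two-party FQSW protocol, where at stage $k$ the already-received systems sit in Charlie's lab and the not-yet-sent systems $A_{\pi(k+1)}\cdots A_{\pi(m)}R$ play the role of the reference, yielding exactly the rate $\tfrac{1}{2}I(A_{\pi(k)};A_{\pi(k+1)}\cdots A_{\pi(m)}R)_\ph$ per step and a triangle-inequality accumulation of errors. Your explicit inductive invariant (each stage leaves the marginal on the remaining systems $\epsilon$-close to the original $\ph$-marginal, so the next stage's rate is computed from $\ph$ itself) is a more careful articulation of what the paper's proof sketch leaves implicit, but it is the same argument.
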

	
	    \begin{proof}[Proof sketch for Lemma \ref{q-point}]
	        We will now exhibit a protocol that achieves one such point. In order to simplify the notation,
	        but without loss of generality, we choose the reversed-order permutation $\pi=(m, \ldots, 2,1)$.
	        This choice of permutation corresponds to Alice-$m$ sending her information first and Alice-$1$ sending last.
	
	        We will repeatedly use the FQSW protocol is order to send the $m$ systems to Charlie:
	        \begin{enumerate}
	            \item       The first party Schumacher compresses her system $A_m$ and sends it to Charlie.
	                        She succeeds provided
	                        \be \nonumber
	                            Q_m     \geq    \frac{1}{2}I(A_m;A_1A_2\ldots A_{m-1}R) + \delta= H(A_m) + \delta
	                        \ee
	                        for any $\delta > 0$.
	                        The above rate is dictated by the FQSW inequality \eqref{eqn:FQSW} because
	                        we are facing the same type of problem except that the ``reference'' consists
	                        of $R$ as well as the remaining participants $A_1A_2\cdots A_{m-1}$.
	                        The fact that the formula reduces to $Q_m > H(A_m)$ should also be expected
	                        since there are no correlations that the first participant can take advantage of;
	                        she is just performing Schumacher compression.
	
	            \item       The second party also faces an instance of an FQSW problem.
	                        The task is to transmit the system $A_{m-1}$ to Charlie, who is now assumed to hold $A_m$.
	                        The purifying system consists of  $A_1A_2\cdots A_{m-2}R$.
	                        According to inequality (\ref{eqn:FQSW}) the rate must be
	                        \be \nonumber
	                            Q_{m-1} \geq    \frac{1}{2}I(A_{m-1};A_1A_2\cdots A_{m-2}R) + \delta
	                        \ee
	                        for any $\delta > 0$.
	
	            \item       The last person to be merging with Charlie will have a purifying system consisting
	                        of only $R$. Her transfer will be successful if
	                        \be \nonumber
	                            Q_1     \geq    \frac{1}{2}I(A_1;R) + \delta
	                        \ee
	                        for any $\delta > 0$.
	        \end{enumerate}
	
	        On the receiving end of the protocol, Charlie will apply the decoding map $\cD$ consisting of the
	        composition of the decoding maps $\cD_1 \circ \cD_2 \circ \cdots \circ \cD_m$ defined by the individual
	        FQSW steps to recover the
	        state $\sigma^{\widehat{A}_1\widehat{A}_2 \cdots \widehat{A}_m}$, which will be such that
	        the fidelity between $\ket{\psi}^{A^n R^n}$ and $\sigma^{\hat{A}^n R^n}$ is high, essentially by the triangle
	        inequality.
	        Finally, because we can make $\delta$ arbitrarily small, the rate tuple $(Q_1,\cdots,Q_m)$, with
	        \be
	            Q_k     =   \frac{1}{2}I(A_k; A_{1}\cdots A_{k\!-\!1}R),
	        \ee
	        must be contained in the rate region. The same argument applies for each permutation $\pi \in S_m$, leading
	        to the conclusion that the full set $\Q$ is contained in the rate region.
	    \end{proof}
	
	    Each one of the corner points $\qq_\pi$ can also be described by an equivalent set of equations involving
	    sums of the rates.
	    \begin{lemma} \label{lemmaWithL}
	        The rate tuple $(Q_1,Q_2,\ldots,Q_m)$ is a corner point if and only if for some $\pi \in S_m$ and for
	        all $l$ such that $1\leq l \leq m$,
	        {\small
	        \be \label{equationWithL}
	            \sum_{m-l+1 \leq k \leq m} Q_{\pi\!(k)} =    \frac{1}{2}
	                                                        \left[
	                                                            \sum_{m-l+1 \leq k \leq m} 
	                                                            \!\!\!\!\!\!\!\!H(A_{\pi\!(k)}) + H(R)
	                                                            - H(A_{\pi\![m-l+1, m]}R)
	                                                        \right]
	                                                        = C_{\pi\![m-l+1, m]}
	        \ee}
	        where $A_{\pi\![m-l+1, m]} := A_{\pi\!(m-l+1)}A_{\pi\!(m-l+2)}\cdots A_{\pi\!(m)}$ denotes
	        the last $l$ participants according to the permutation $\pi$.
	    \end{lemma}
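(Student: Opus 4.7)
The plan is to establish both directions by the same telescoping identity relating the mutual-information form of the individual rates to the entropy-sum form of the aggregate rates. Fix a permutation $\pi \in S_m$; to lighten notation I will carry out the argument for the identity permutation (the general case is literally the same with indices relabeled by $\pi$). Define the shorthand $S_k := A_k A_{k+1}\cdots A_m R$ for $1\le k\le m$ and $S_{m+1}:=R$, so that
\be \nonumber
    I(A_k; A_{k+1}\cdots A_m R) \ =\ H(A_k) + H(S_{k+1}) - H(S_k).
\ee

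For the forward direction, assume $(Q_1,\dots,Q_m)$ is the corner point attached to $\pi$, so $Q_k = \tfrac{1}{2} I(A_k;A_{k+1}\cdots A_m R)$. Summing the displayed identity over $k\in\{m-l+1,\dots,m\}$, the entropy differences $H(S_{k+1})-H(S_k)$ telescope, leaving $H(S_{m+1})-H(S_{m-l+1}) = H(R)-H(A_{m-l+1}\cdots A_m R)$. After the factor of $1/2$, this is exactly the right-hand side of \eqref{equationWithL}, giving $C_{[m-l+1,m]}$.

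For the converse, assume that \eqref{equationWithL} holds for every $l\in\{1,\dots,m\}$. Subtracting the equation for $l-1$ from the equation for $l$ isolates $Q_{m-l+1}$ on the left, and a short calculation on the right gives
\be \nonumber
    C_{[m-l+1,m]} - C_{[m-l+2,m]}
    \ =\ \tfrac{1}{2}\!\left[H(A_{m-l+1}) + H(A_{m-l+2}\cdots A_m R) - H(A_{m-l+1}\cdots A_m R)\right],
\ee
which is precisely $\tfrac{1}{2} I(A_{m-l+1};A_{m-l+2}\cdots A_m R)$. Ranging $l$ from $1$ to $m$ reproduces \eqref{individual-rates} for every index, so $(Q_1,\dots,Q_m)$ is the corner point associated with $\pi$.

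No step here is really an obstacle, since the entire argument reduces to a telescoping manipulation of von Neumann entropies; the only mild bookkeeping is making sure the boundary term at $l=m$ (where $A_{\pi[1,m]}R$ is a pure state and so $H(A_{\pi[1,m]}R)=0$) and at $l=1$ (where the sum contains a single term) are handled uniformly, which is immediate from the definitions of $S_k$ and $C_\K$.
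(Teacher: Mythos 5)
Your proposal is correct and matches the paper's proof in essence: both directions rest on the same telescoping of $I(A_{\pi(k)};A_{\pi(k+1)}\cdots A_{\pi(m)}R)$ into the entropy-sum form, which is exactly the computation in the paper. The paper handles the converse by merely noting the triangular system is ``clearly solvable'' for the individual $Q_k$, whereas you make this explicit by subtracting consecutive equations — a welcome bit of detail, not a different method.
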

	    \begin{proof}[Proof of Lemma \ref{lemmaWithL}]
	        The proof follows trivially from Lemma \ref{q-point} by considering sums of the rates.
	        If we again choose the permutation $\pi=(m, \ldots, 2,1)$ for simplicity, we see that the sum of the
	        rates of the last $l$ participants is
	        \begin{align}
	            Q_1 +  \cdots + Q_l         &= \frac{1}{2}\bigg[ I(A_1;R)
	                                                    + I(A_2;A_1R) +\cdots
	                                                    + I(A_l;A_1 \cdots A_{l-1}R) \bigg] \nonumber \\
	                            &=  \frac{1}{2}\bigg[
	                                                        \sum_{1 \leq k \leq l} H(A_k)  + H(R) -  H(A_1\cdots A_lR)
	                                                        \bigg] = C_{12\ldots l}.
	        \end{align}
	        A telescoping effect occurs and most of the inner terms cancel so we are left with a system of equations
	        identical to \eqref{equationWithL}. Moreover, this system is clearly solvable for
	        the individual rates $Q_k$. The analogous simplification occurs for all other permutations.
	    \end{proof}

	
	    So far, we have shown that the set
	    of corner points $\Q$ is contained in the rate region of the multiparty fully quantum Slepian-Wolf protocol.
	    The convex hull of a set of points $\Q$ is defined to be
	    \be \label{conv-def}
	        conv(\Q):= \left\{  \vec{x}  \in \RR^m :\
	                            \vec{x} = \sum \lambda_i \vec{q}_i, \
	                            \vec{q}_i \in \Q,\
	                            \lambda_i \geq 0,\
	                            \sum \lambda_i = 1
	                    \right\}.
	    \ee
	    Because of the possibility of time-sharing between the different corner points, the entire convex hull
	    $conv(\Q)$ must be achievable.
	    Furthermore, by simply allowing any one of the senders to waste resources, we know that if a rate tuple
	    $\qq$ is achievable, then so is $\qq + \vec{w}$ for any vector $\vec{w}$ with nonnegative coefficients.
	    More formally, we say that any $\qq + cone(\vec{e}_1, \vec{e}_2, \ldots, \vec{e}_m)$ is also inside the
	    rate region, where $\{\vec{e}_i\}$ is the standard basis for $\RR^m$:
	    $\vec{e}_i = (\underbrace{0, 0, \ldots, 0, 1}_{i}, 0, 0)$ and
	    \vspace{-0.4cm}
	    \be \label{cone-def}
	        cone(\vec{e}_1,\cdots,\vec{e}_m) :=
	                \left\{     \vec{x}  \in \RR^m : \
	                            \vec{x} = \sum \lambda_i \vec{e}_i,\
	                            \lambda_i \geq 0 \right\}.
	    \ee
	    Thus, we have demonstrated that the set of rates
	    \be
	        P_\mathcal{V} := conv(\Q) + cone(\vec{e}_1,\cdots,\vec{e}_m)
	    \ee
	    is achievable. 
	    To complete the proof of Theorem \ref{thm:THM1}, we will need to show that $P_\mathcal{V}$ has an equivalent
	    description as
	    \be \label{inner-bound-bisss}
	        P_\mathcal{H}:=     \left\{ (Q_1,\cdots,Q_m) \in \RR^m \ :\
	                                        \sum_{k\in \K} Q_k  \geq  C_\K ,
	                                        \forall \K \subseteq \{1,2,\ldots,m\}               \right\},
	    \ee
	    where the constants $C_\K$ are as defined in equation \eqref{inner-bound-biss}.
	    This equivalence is an explicit special case of the Minkowski-Weyl Theorem on convex polyhedra.
		
		\bigskip	
	    \begin{theorem}[Minkowski-Weyl Theorem] \cite[p.30]{poly}
	    For a subset $P \subseteq \RR^m$, the following two statements are equivalent:
	    \begin{itemize}
	        \item   $P$ is a $\mathcal{V}$-polyhedron: the sum of a convex hull of a finite set of points
	                $\Pp = \{ \vec{p}_i\}$ plus a conical combination of vectors $\Ww = \{\vec{w}_i\}$
	                \be \label{vpolyhedron}
	                    P = conv(\Pp) \ +\  cone(\Ww)
	                \ee
	                where $conv(\Pp)$ and $cone(\Ww)$ are defined in \eqref{conv-def} and \eqref{cone-def} respectively.
	
	
	        \item   $P$ is a $\mathcal{H}$-polyhedron: an intersection of $n$ closed halfspaces
	                \be \label{hpolyhedron}
	                    P = \{ \vec{x} \in \RR^m : A\vec{x} \geq \vec{a} \}
	                \ee
	                for some matrix $A \in \RR^{n\times m}$ and some vector $\vec{a} \in \RR^n$.
	                Each of the $n$ rows in equation \eqref{hpolyhedron} defines one halfspace.
	    \end{itemize}
	    \end{theorem}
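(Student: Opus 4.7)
The plan is to prove the two implications separately, using Fourier--Motzkin elimination as the workhorse for one direction and a separating hyperplane argument for the other. The Minkowski--Weyl theorem is a standard result in polyhedral combinatorics, so the proof proposal follows the classical route; the aim is to make precise why the two descriptions coincide rather than to optimize brevity.

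For the direction ($\mathcal{V}$-polyhedron $\Rightarrow$ $\mathcal{H}$-polyhedron), I would rewrite $P = \textup{conv}(\Pp) + \textup{cone}(\Ww)$ as the image under projection of a polyhedron in a higher-dimensional space. Explicitly, introduce coefficients $\lambda_i \geq 0$ for each $\vec{p}_i \in \Pp$ and $\mu_j \geq 0$ for each $\vec{w}_j \in \Ww$, and consider the polyhedron
\begin{equation*}
\widetilde{P} = \Bigl\{ (\vec{x},\vec{\lambda},\vec{\mu}) : \vec{x} = \sum_i \lambda_i \vec{p}_i + \sum_j \mu_j \vec{w}_j,\ \sum_i \lambda_i = 1,\ \lambda_i \geq 0,\ \mu_j \geq 0 \Bigr\}.
\end{equation*}
This is an intersection of finitely many half-spaces and hyperplanes in the $(\vec{x},\vec{\lambda},\vec{\mu})$ space, and $P$ is its coordinate projection onto $\vec{x}$. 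The key tool is that Fourier--Motzkin elimination, when applied to eliminate one variable at a time, preserves the property of being an intersection of finitely many half-spaces. Eliminating all the $\lambda_i$ and $\mu_j$ in turn therefore yields a finite system $A\vec{x} \geq \vec{a}$ describing $P$, which is exactly an $\mathcal{H}$-representation.

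For the reverse direction ($\mathcal{H}$-polyhedron $\Rightarrow$ $\mathcal{V}$-polyhedron), I would introduce the vertices (extreme points) and extreme rays of $P = \{\vec{x} : A\vec{x} \geq \vec{a}\}$, together with its recession cone $\textup{rec}(P) = \{\vec{w} : A\vec{w} \geq 0\}$. The strategy is to show that $P = \textup{conv}(V) + \textup{cone}(W)$, where $V$ is the finite set of vertices and $W$ is a finite set of generators of the recession cone. Finiteness of $V$ and $W$ comes from the fact that each vertex corresponds to a choice of $m$ linearly independent active constraints, and similarly for extreme rays; so only finitely many such combinatorial choices exist. The containment $\textup{conv}(V) + \textup{cone}(W) \subseteq P$ is immediate from convexity. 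For the reverse containment, I would argue by induction on the dimension of the minimal face containing a given point $\vec{x} \in P$: if $\vec{x}$ lies in the relative interior of a face, one can either express it as a midpoint of two points on lower-dimensional faces, or detect a direction of unbounded movement that gives a recession vector. Repeating the decomposition terminates at vertices and rays.

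The main obstacle will be handling the unbounded and degenerate cases cleanly in the second direction, particularly ensuring the induction on faces terminates and that the extracted recession directions are conic combinations of extreme rays. An alternative route that sidesteps this bookkeeping is to invoke Farkas' lemma directly: if $\vec{y} \notin \textup{conv}(V) + \textup{cone}(W)$, then Farkas gives a separating inequality $\vec{c}^T \vec{x} \geq \beta$ valid on $\textup{conv}(V) + \textup{cone}(W)$ but violated at $\vec{y}$, and one checks this inequality is implied by $A\vec{x} \geq \vec{a}$, yielding $\vec{y} \notin P$. This converts the geometric argument into a duality argument and is probably the cleanest way to finish, provided Farkas' lemma (itself a consequence of Fourier--Motzkin) is taken as given.
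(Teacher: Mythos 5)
The paper never actually proves this statement: it is quoted from Ziegler \cite[p.30]{poly}, and what the surrounding text proves by hand is only the special instance needed for Theorem~\ref{thm:THM1}, namely that the particular supermodular polyhedron $P_\mathcal{H}$ equals $conv(\Q)+cone(\vec{e}_1,\ldots,\vec{e}_m)$, via an explicit computation of the vertex set (superadditivity, the no-co-occurrence lemma, and the maximal-chain/topological-sort construction). So your proposal should be judged as a proof of the general theorem, not against an in-paper argument. Your $\mathcal{V}\Rightarrow\mathcal{H}$ direction is correct and standard: lifting to the $(\vec{x},\vec{\lambda},\vec{\mu})$-space and projecting out the auxiliary variables by Fourier--Motzkin elimination does preserve finite half-space descriptions, which is exactly what is needed.

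The genuine gap is in your $\mathcal{H}\Rightarrow\mathcal{V}$ direction: you take $V$ to be the set of vertices, but an $\mathcal{H}$-polyhedron need not have any. If $P$ contains a line --- e.g.\ the slab $\{\vec{x}\in\RR^2 : 0\le x_1\le 1\}$ or a single half-space --- then no point lies on $m$ linearly independent active constraints, so $V=\emptyset$, $conv(V)=\emptyset$, and your claimed decomposition asserts $P=\emptyset$. Both of your finishing arguments inherit this failure: the induction on faces does not terminate (the minimal face containing a point is an affine line, not a vertex), and the Farkas/duality variant needs the linear program $\min \vec{c}^{\,T}\vec{x}$ over $P$ to attain its optimum at a vertex, which again presupposes pointedness. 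The standard repair is to split off the lineality space $L=\{\vec{w} : A\vec{w}=0\}$, write $P=(P\cap L^{\perp})+L$ with $P\cap L^{\perp}$ pointed, run your argument there, and absorb a basis of $L$ together with its negatives into $\Ww$; equivalently, replace ``vertices'' by one point chosen from each minimal face. With that patch your outline is sound. Note that for the paper's own application the issue happens to be invisible: the homogeneous system $\sum_{k\in\K}Q_k\ge 0$ contains the singleton inequalities $Q_k\ge 0$, so $P_\mathcal{H}$ has recession cone equal to the nonnegative orthant and is automatically pointed --- but a proof of the cited theorem in full generality must handle the unpointed case.
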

	
	    \vspace{0.45cm}

	    \myparagraph{Preliminaries}
	    Before we begin the equivalence proof in earnest, we make two useful observations which will be
	    instrumental to our subsequent argument.
	    First, we prove a very important property of the constants $C_{\K}$ which will dictate
	    the geometry of the rate region.
	    \begin{lemma}[Superadditivity] \label{lemma-intersection-union}
	        Let $\K, \cL \subseteq \{1,2,\ldots,m\}$ be any two subsets of the senders. Then
	        \be \label{lower-order}
	            C_{\K\cup\cL} + C_{\K\cap\cL} \  \geq\  C_{\K} + C_{\cL}.
	        \ee
	    \end{lemma}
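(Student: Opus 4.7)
The plan is to unfold the definition of $C_\K$ and observe that the inequality reduces cleanly to an instance of strong subadditivity of von Neumann entropy.

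First I would write out both sides explicitly. Substituting the definition
\[
C_\K = \tfrac{1}{2}\Big[\sum_{k\in\K} H(A_k)_\ph + H(R)_\ph - H(RA_\K)_\ph\Big],
\]
into $C_{\K\cup\cL} + C_{\K\cap\cL} - C_\K - C_\cL$, the single-party entropy terms match identically on the two sides, because each index $k$ appears the same number of times in $\{\K\cup\cL,\K\cap\cL\}$ as in $\{\K,\cL\}$. The $H(R)$ contributions also cancel (there are two copies on each side). What remains is
\[
C_{\K\cup\cL} + C_{\K\cap\cL} - C_\K - C_\cL \;=\; \tfrac{1}{2}\Big[H(RA_\K) + H(RA_\cL) - H(RA_{\K\cup\cL}) - H(RA_{\K\cap\cL})\Big].
\]

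Second, I would recognise the remaining inequality as strong subadditivity. Writing $X = A_{\K\setminus\cL}$, $Y = A_{\cL\setminus\K}$, and $Z = A_{\K\cap\cL}$, the four terms become $H(RXZ) + H(RYZ) \geq H(RXYZ) + H(RZ)$, which is precisely the strong subadditivity inequality \eqref{strong-subadditivity} applied with the common system $RZ$. Since SSA is a well-known property of the von Neumann entropy, this finishes the argument.

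There is no real obstacle here; the only care required is in the bookkeeping of the single-party entropies to confirm that they cancel, and in identifying the right grouping of subsystems so that the resulting inequality is in the standard SSA form. No new ideas beyond entropy identities are needed.
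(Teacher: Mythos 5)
Your proof is correct and follows essentially the same route as the paper's: expand the definition of $C_\K$, cancel the single-party and $H(R)$ terms, and reduce the claim to $H(RA_\K)+H(RA_\cL) \geq H(RA_{\K\cup\cL})+H(RA_{\K\cap\cL})$, which is strong subadditivity. Your explicit identification of the grouping $X=A_{\K\setminus\cL}$, $Y=A_{\cL\setminus\K}$, $Z=A_{\K\cap\cL}$ with conditioning system $RZ$ is a slightly more detailed presentation of the final step the paper simply attributes to SSA.
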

	    \begin{proof}[Proof of Lemma \ref{lemma-intersection-union}]
	        We expand the $C$ terms and cancel the $\frac{1}{2}$-factors to obtain
	        \begin{align} \nonumber
	        \begin{aligned}
	            &\sum_{k\in {\K\cup\cL} }\!H(A_k)  +  H(R) -  H(RA_{\K\cup\cL}) \\[-2mm]
	            &\ +  \sum_{k\in {\K\cap\cL} }\!H(A_k)  +  H(R) -  H(RA_{\K\cap\cL})
	        \end{aligned}
	                           &\geq
	        \begin{aligned}
	            &\sum_{k\in \K}\!H(A_k)  +  H(R) -  H(RA_{\K}) \\[-2mm]
	            &\ + \sum_{k\in \cL}\!H(A_k)  +  H(R) -  H(RA_{\cL}).
	        \end{aligned}
	        \end{align}
	        After canceling all common terms we find that the above inequality is equivalent to
	        %
	        \be                                                                                   %
	            H(RA_{\K}) +  H(RA_{\cL})
	                  \ \ \ \geq \ \ \
	                                H(RA_{\K\cup\cL}) + H(RA_{\K\cap\cL}),
	        \ee
	        which is true by the strong subadditivity (SSA) inequality of quantum entropy~\cite{LR73}.
	    \end{proof}

	    As a consequence of this lemma, we can derive an equivalence property for the saturated inequalities.
	    \begin{corollary} \label{meet-join-lemma}
	        Suppose that the following two equations hold for a given point of $P_\mathcal{H}$:
	        \be
	            \sum_{k\in \K} Q_k = C_{\K}             \qquad \text{and} \qquad  \sum_{k\in \cL} Q_k = C_{\cL}.
	        \ee
	        Then the following equations must also be true:
	        \be
	            \sum_{k\in \K\cup\cL} Q_k = C_{\K\cup\cL} \qquad  \text{and} 
	            \qquad  \sum_{k\in \K\cap\cL} Q_k = C_{\K\cap\cL}.
	        \ee
	    \end{corollary}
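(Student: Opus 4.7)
The plan is to combine the inclusion-exclusion identity for sums of rates with the superadditivity property of the $C_\K$'s established in Lemma \ref{lemma-intersection-union}, and then squeeze. More precisely, I will rely on two elementary facts: first, the basic set-theoretic identity
\be
\sum_{k\in \K\cup\cL} Q_k \ +\  \sum_{k\in \K\cap\cL} Q_k \ =\  \sum_{k\in \K} Q_k \ +\  \sum_{k\in \cL} Q_k,
\ee
which holds for any real numbers $Q_k$ (each index appears exactly as often on the left as on the right); and second, the hypothesis that the point belongs to $P_\mathcal{H}$, which gives the two inequalities $\sum_{k\in \K\cup\cL} Q_k \geq C_{\K\cup\cL}$ and $\sum_{k\in \K\cap\cL} Q_k \geq C_{\K\cap\cL}$.

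From there the argument is a short squeeze. Using the hypothesis that $\sum_{k\in \K} Q_k = C_\K$ and $\sum_{k\in \cL} Q_k = C_\cL$, the inclusion-exclusion identity becomes
\be
\sum_{k\in \K\cup\cL} Q_k \ +\ \sum_{k\in \K\cap\cL} Q_k \ =\  C_\K + C_\cL.
\ee
Superadditivity (Lemma \ref{lemma-intersection-union}) gives $C_\K + C_\cL \leq C_{\K\cup\cL} + C_{\K\cap\cL}$, so the left-hand side is at most $C_{\K\cup\cL} + C_{\K\cap\cL}$. On the other hand, each of the two summands on the left-hand side is individually at least the corresponding $C$-constant, so their sum is at least $C_{\K\cup\cL} + C_{\K\cap\cL}$. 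Thus the sum equals $C_{\K\cup\cL} + C_{\K\cap\cL}$, and two nonnegative slack terms sum to zero, forcing both to vanish and yielding the two saturated equalities asserted in the corollary.

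There is no real obstacle here: the work was already done in proving the superadditivity of $C_\K$ (which used strong subadditivity of von Neumann entropy). The corollary is really just the order-theoretic statement that the collection of subsets $\K$ for which the constraint $\sum_{k\in\K} Q_k \geq C_\K$ is tight forms a sublattice of $2^{\{1,\ldots,m\}}$, a standard fact about supermodular polyhedra that I would mention for context since the text has already noted that the rate region is such a polyhedron.
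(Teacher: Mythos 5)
Your proof is correct and follows essentially the same route as the paper: both arguments chain the hypothesized equalities, the superadditivity of the constants $C_\K$ (Lemma \ref{lemma-intersection-union}), and the defining inequalities of $P_\mathcal{H}$ into a squeeze, resting on the counting identity $\sum_{k\in\K}Q_k+\sum_{k\in\cL}Q_k=\sum_{k\in\K\cup\cL}Q_k+\sum_{k\in\K\cap\cL}Q_k$, which the paper leaves implicit (``the leftmost terms and rightmost terms are identical'') and you state explicitly. Your closing remark about the tight sets forming a sublattice is accurate and consistent with the paper's observation that $P_\mathcal{H}$ is a supermodular polyhedron.
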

	    \begin{proof}[Proof of Corollary \ref{meet-join-lemma}]
	        The proof follows from the equation
	        \be \label{chained-eqn-meet-join-lemma}
	            \sum_{k\in \K} Q_k + \sum_{k\in \cL} Q_k
	            =           C_{\K} + C_{\cL}
	            \ \leq\     C_{\K\cup\cL} + C_{\K\cap\cL}
	            \ \leq\     \sum_{k\in \K\cup\cL} Q_k + \sum_{k\in \K\cap\cL} Q_k
	        \ee
	        where the first inequality comes from Lemma \ref{lemma-intersection-union}.
	        The second inequality is true by the definition of $P_\mathcal{H}$ since $\K\cup\cL$ and $\K\cap\cL$
	        are subsets of $\{1,2,\ldots,m\}$.
	        Because the leftmost terms and rightmost terms are identical, we must have equality
	        throughout equation \eqref{chained-eqn-meet-join-lemma}, which in turn implies the the union and the
	        intersection equations are saturated.
	    \end{proof}
	
		An important consequence of Lemma~\ref{lemma-intersection-union}  is that it implies that the polyhedron
		$P_\mathcal{H}$ has a very special structure. It is known as a supermodular polyhedron or
		contra-polymatroid. 
		The fact that $conv(Q) = P_\mathcal{H}$ was proved by Edmonds \cite{E69},
		whose ingenious proof makes use of linear programming duality. Below we give an elementary proof 
		that does not use duality.

	    A \emph{vertex} is a zero-dimensional face of a polyhedron. 
		A point $ \bar{Q} = ( \bar{Q}_1,\bar{Q}_2,\ldots,\bar{Q}_m ) \in P_\mathcal{H} \subset \RR^m$
		is a vertex of $P_\mathcal{H}$ if and only if it is the unique solution of a set of linearly
		independent equations
	    \be     \label{original-set-of-equations}
	            \sum_{k\in \cL_i} Q_k        =  C_{\cL_i}, \qquad \qquad 1 \leq i \leq m
	    \ee
	    for some subsets $\cL_i \subseteq \{1,2,\ldots,m\}$. 
	    In the remainder of the proof we require only a specific consequence of linear independence,
	    which we state in the following lemma.
	    \begin{lemma}[No co-occurrence] \label{no-co}
	        Let $\cL_i \subseteq \{1,2,\ldots,m\}$ be a collection of $m$ sets such that the system
	        \eqref{original-set-of-equations} has a unique solution. 
	        Then there is no pair of elements $j$, $k$ such that $j \in \cL_i$ if and only if  $k \in \cL_i$  for all $i$.
	    \end{lemma}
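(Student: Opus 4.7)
The plan is to prove the contrapositive: if there exist distinct indices $j, k$ that either both belong to $\cL_i$ or both fail to belong to $\cL_i$ for every $i$, then the system \eqref{original-set-of-equations} cannot have a unique solution. The point is that such a pair $(j,k)$ forces the columns of the coefficient matrix corresponding to $Q_j$ and $Q_k$ to be identical, which immediately kills linear independence of the $m$ equations in $m$ variables.

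More concretely, suppose $\bar{Q}$ is a solution of the system and that $j, k$ co-occur as in the hypothesis. I would construct a one-parameter family of solutions $\bar{Q}(\epsilon)$ by setting
\[
    \bar{Q}_j(\epsilon) = \bar{Q}_j + \epsilon, \qquad
    \bar{Q}_k(\epsilon) = \bar{Q}_k - \epsilon, \qquad
    \bar{Q}_l(\epsilon) = \bar{Q}_l \ \ (l \neq j,k).
\]
For each equation indexed by $i$, there are two cases. If both $j \in \cL_i$ and $k \in \cL_i$, then the left-hand side of the $i$-th equation changes by $(+\epsilon) + (-\epsilon) = 0$. If both $j \notin \cL_i$ and $k \notin \cL_i$, then neither $Q_j$ nor $Q_k$ appears in the $i$-th equation and the left-hand side is unchanged. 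By hypothesis these are the only two cases, so $\bar{Q}(\epsilon)$ satisfies every equation in \eqref{original-set-of-equations} for all $\epsilon \in \RR$. This contradicts the assumed uniqueness of the solution, completing the proof.

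There is no real obstacle here; the lemma is a simple linear-algebraic observation packaged for later geometric use. The only thing to be careful about is not to confuse the statement with something stronger such as the $\cL_i$ themselves being distinct or forming a chain — the hypothesis is only that the linear system they define is uniquely solvable, and the conclusion is a combinatorial restriction on how the $\cL_i$ may overlap. This will subsequently be combined with Corollary \ref{meet-join-lemma} to argue that the saturated sets at a vertex can be arranged into a chain, which is the real geometric content we are heading toward.
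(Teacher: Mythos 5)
Your proof is correct and is essentially the paper's argument: the paper simply notes that such a pair $j,k$ would make the corresponding columns of the coefficient matrix linearly dependent, and your one-parameter family $\bar{Q}(\epsilon)$ with $\bar{Q}_j + \epsilon$, $\bar{Q}_k - \epsilon$ is exactly the explicit witness of that dependence (a nonzero kernel vector $e_j - e_k$). Nothing is missing; you have just spelled out the one-line linear-algebra observation in full.
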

	    \begin{proof}
	        If there was such a pair $j$ and $k$, then the corresponding columns of the left hand side of 
	        \eqref{original-set-of-equations} would be linearly dependent.
	    \end{proof}
	
	    Armed with the above tools, we will now show that there is a one-to-one correspondence between the
	    corner points $\Q$ and the vertices of the $\mathcal{H}$-polyhedron $P_\mathcal{H}$.
	    We will then show that the vectors that generate the cone part of the $\mathcal{H}$-polyhedron
	    correspond to the resource wasting vectors $\{\vec{e}_i\}$.
	
	    \vspace{0.45cm}

	    \myparagraph{Step 1: $\Q \subseteq vertices(P_\mathcal{H})$}
	    We know from Lemma \ref{lemmaWithL} that every point $\qq_\pi \in \Q$ satisfies the $m$ equations
		\begin{align}	\label{special-set-of-equations}
	        \sum_{m-i+1 \leq k \leq m} Q_{\pi\!(k)} &=   C_{\pi\![m-i+1, m]}, & 1\leq &i \leq m.
		\end{align}
	
	
	
		The equations \eqref{special-set-of-equations} are linearly independent since the left hand
		side is triangular, and have the form of the inequalites in \eqref{inner-bound-bisss}
		that are used to define $P_\mathcal{H}$. They have the unique solution:
		\begin{align}	\label{solution}
		 Q_{\pi\!(m)} &=  C_{\pi(m)}  &	
		 Q_{\pi\!(i)} &=   C_{\pi\![i, m]} - C_{\pi\![i+1, m]}, \qquad 1\leq i \leq m-1.
		\end{align}
		We need to show that this solution satisfies all the inequalities used to define
		$P_\mathcal{H}$ in \eqref{inner-bound-bisss}. We proceed by induction on $|\K|$.
		The case $|\K|=1$ follows from \eqref{solution} and the superadditivity property
		\eqref{lower-order}.
		For $|\K| \ge 2$ we can write $\K= \{\pi(i)\} \cup \K'$ for some 
		$\K' \subseteq \{ \pi(i+1), \pi(i+2),\ldots,\pi(m) \}$. Then
		\beas
		\sum_{k \in \K} Q_k 	&=& 	Q_{\pi(i)} + \sum_{k \in \K'} Q_k \\
								&\ge&  	C_{\pi\![i, m]} - C_{\pi\![i+1, m]} +  \sum_{k \in \K'} Q_k \\
							&\ge&  	C_{\pi\![i, m]} - C_{\pi\![i+1, m]} + C_{\K'} \qquad\qquad \textrm{(induction)} \\
								&\ge& 	C_{\K}
		\eeas
		where we again used superadditivity to get the last inequality.

	    \vspace{0.45cm}

	    \myparagraph{Step 2: $vertices(P_\mathcal{H}) \subseteq \Q$} In order to prove the opposite inclusion,
	    we will show that every vertex of $P_\mathcal{H}$ is of the form of Lemma
	    \ref{lemmaWithL}. More specifically, we want to prove the following proposition.
	    \begin{proposition}[Existence of a maximal chain] \label{flag-existence}
	        Every vertex of $P_\mathcal{H}$, that is, the intersection of $m$ linearly independent hyperplanes
	        \begin{align}
	            							\qquad\qquad\qquad\qquad\qquad 
	            \sum_{k\in \cL_i} Q_k         &=  C_{\cL_i},       & 1\leq &i \leq m,\\[-3mm]
	        \intertext{	defined by the family of sets $\{\cL_i; \, 1 \leq i \leq m\}$ can be 
	        			described by an equivalent set of equations}
	            							\qquad\qquad\qquad\qquad\qquad 
	            \sum_{k\in \K_i} Q_k         &=  C_{\K_i},       & 1\leq &i \leq m,
	        \end{align}
	        for some family of sets distinct $\K_i \subseteq \{1,2,\ldots,m\}$ that form a \emph{maximal chain} 
	        in the sense of
	        \be
	            \emptyset = \K_0 
	            \subset \K_1 
	            \subset \K_2 
	            \subset \cdots 
	            \subset \K_{m-1} 
	            \subset \K_m = \{1,2,\ldots,m\}.
	        \ee
	    \end{proposition}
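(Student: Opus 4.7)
The plan is to generate a lattice of saturated sets from the $m$ defining equations at the vertex, and then extract a maximal chain from within this lattice. First, I would let $\mathcal{F}$ denote the smallest family of subsets of $\{1,\ldots,m\}$ which contains $\cL_1,\ldots,\cL_m$ and is closed under union and intersection. By repeated application of Corollary~\ref{meet-join-lemma}, every set $\K \in \mathcal{F}$ satisfies $\sum_{k \in \K} Q_k = C_\K$ at the vertex. Two basic observations about $\mathcal{F}$ will be crucial for the construction: (i) $\{1,\ldots,m\} = \cL_1 \cup \cdots \cup \cL_m \in \mathcal{F}$, since otherwise the incidence matrix of the $\cL_i$ would contain an all-zero column, contradicting linear independence; and (ii) $\mathcal{F}$ separates points of $\{1,\ldots,m\}$, which is a direct rephrasing of Lemma~\ref{no-co}: for any $j \neq k$ some $\cL_i$, and hence some element of $\mathcal{F}$, contains exactly one of $\{j,k\}$.

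Next, I would extract a maximal chain from $\mathcal{F}$ by induction on $m$. For each element $j$, define $f(j) = \bigcap\{S \in \mathcal{F} : j \in S\}$, the minimal element of $\mathcal{F}$ containing $j$ (which lies in $\mathcal{F}$ by intersection closure). Pick $j_0$ so that $f(j_0)$ is minimal under inclusion among $\{f(j) : j \in \{1,\ldots,m\}\}$. I claim $f(j_0) = \{j_0\}$: if $k \in f(j_0)$, then $f(k) \subseteq f(j_0)$ by the definition of $f(k)$, so minimality forces $f(k) = f(j_0)$; but then $j_0$ and $k$ belong to the same members of $\mathcal{F}$, which by (ii) forces $k = j_0$. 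Set $\K_1 := \{j_0\} \in \mathcal{F}$.

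To produce the remainder of the chain, I would pass to the induced family $\mathcal{F}' := \{\, S \setminus \K_1 : S \in \mathcal{F},\ \K_1 \subseteq S\,\}$ on the ground set $\{1,\ldots,m\}\setminus \K_1$. Closure under $\cup$ and $\cap$ is inherited, the full ground set $\{1,\ldots,m\}\setminus \K_1$ lies in $\mathcal{F}'$, and separation survives because for any $j,k \in \{1,\ldots,m\}\setminus \K_1$ a separating set $S \in \mathcal{F}$ can be replaced by $S \cup \K_1 \in \mathcal{F}$, which still separates $j$ and $k$ after removing $\K_1$. The induction hypothesis then yields a maximal chain $\emptyset \subsetneq \K_2' \subsetneq \cdots \subsetneq \K_m' = \{1,\ldots,m\}\setminus \K_1$ in $\mathcal{F}'$, and adjoining $\K_1$ to each term produces a chain $\emptyset \subsetneq \K_1 \subsetneq \K_2 \subsetneq \cdots \subsetneq \K_m = \{1,\ldots,m\}$ with $|\K_i|=i$ and each $\K_i \in \mathcal{F}$.

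Finally, the $m$ equations $\sum_{k \in \K_i} Q_k = C_{\K_i}$ are all saturated at the vertex and linearly independent (after reordering the coordinates to match the chain, the coefficient matrix is lower triangular with ones on the diagonal), so they uniquely determine the same vertex as the original system and thus constitute an equivalent set of defining equations. The main technical obstacle in this plan is the inductive step: I need the passage from $\mathcal{F}$ to $\mathcal{F}'$ to preserve both point-separation and the presence of the full ground set, because without these two properties one cannot peel off singletons repeatedly; everything else reduces to bookkeeping on top of Corollary~\ref{meet-join-lemma} and Lemma~\ref{no-co}.
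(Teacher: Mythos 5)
Your proposal is correct, and it reaches the proposition by a genuinely different route than the paper's proof, although both rest on the same two ingredients: Corollary~\ref{meet-join-lemma} (saturation is preserved under unions and intersections) and Lemma~\ref{no-co} (no two elements co-occur in every constraint set). The paper works with an implication digraph on $\{1,2,\ldots,m\}$ --- an edge $(j,k)$ whenever every $\cL_i$ containing $j$ also contains $k$ --- uses Lemma~\ref{no-co} to make it acyclic, topologically sorts it, takes the suffixes of the sorted order as the chain $\K_1 \subset \cdots \subset \K_m$, and then must verify explicitly that each $\K_{l-1}$ arises as a union of restricted sets, $\K_{l-1}=\bigcup_{\nu_{m-l}\notin\cL_i^{\K_l}}\cL_i^{\K_l}$, so that saturation propagates down the chain. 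You instead close $\{\cL_i\}$ under $\cup$ and $\cap$ into a family $\mathcal{F}$ of saturated sets and prove the clean combinatorial statement that a union/intersection-closed, point-separating family containing the ground set contains a maximal chain, by peeling off a minimal principal set $f(j_0)=\{j_0\}$ and inducting on the quotient family $\mathcal{F}'$. The two constructions are in fact mirror images: your $f(j)$ is exactly the principal up-set of $j$ in the paper's implication preorder, so your $j_0$ is the last vertex of the paper's topological order, and you grow the chain from its bottom element while the paper strips sources off the top. What your packaging buys is a self-contained induction in which closure, separation, and possession of the ground set are the only invariants to track (no explicit union formula to check), plus an explicit final equivalence step --- triangularity of the chain system and uniqueness of its solution --- which the paper leaves implicit, relying on the triangular structure already noted in its Step 1. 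Two minor points to make explicit in a polished write-up: $f(j)$ is well defined and lies in $\mathcal{F}$ because $\{1,2,\ldots,m\}\in\mathcal{F}$ by your observation (i) and $\mathcal{F}$ is finite and intersection-closed; and the separation property you carry through the induction is exactly the asymmetric form delivered by Lemma~\ref{no-co} (some member contains exactly one of $j,k$), which is all your singleton-extraction argument uses. Neither is a gap; both follow from pieces you already have.
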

	    Since there exists a permutation $\pi$ such that $\forall i,\ \pi\![m-i+1, m] = \K_i$ this implies that
	    all the vertices of $P_\mathcal{H}$ are in $\Q$. The main tool we have have at our disposal in order to prove
	    this proposition is Corollary \ref{meet-join-lemma}, which we will use extensively.
	
	
	
	    \begin{proof}[Proof of Proposition \ref{flag-existence}]
	
	    Let $\{ \cL_i \}_{i=1}^{m}$ be the subsets of $\{1,2,\ldots,m\}$ for which the inequalities are saturated
	    and define $\cL^\mathcal{S}_i := \cL_i \cap \mathcal{S}$, the intersection of $\cL_i$ with some set
	    $\mathcal{S} \subseteq \{1,2,\ldots,m\}$.
	
	    \noindent Construct the directed graph $G = (V, E)$, where:
	    \begin{itemize}
	        \item   $V = \{1,2,\ldots,m\}$, i.e. the vertices are the numbers from $1$ to $m$;
	
	        \item   $E = \left\{(j, k) \ :\  (\forall i) \; j \in \cL_i \implies k \in \cL_i \ \right\}$,
	                i.e. there is an edge from vertex $j$ to vertex $k$ if  whenever vertex $j$ occurs
	                in the given subsets, then so does vertex $k$.
	    \end{itemize}
	    Now $G$ has to be acyclic by Lemma \ref{no-co}, so it has a topological sorted order. 
	    Let us call this order $\nu$.  Let $\K_0 = \emptyset$ and let
	    \be
	        \K_l = \{\nu_{m-l+1}, \ldots, \nu_m \}
	    \ee
	    for $l \in \{ 1, \ldots , m \}$.
	    The sets $\K_l$, which consist of the last $l$ vertices according to the ordering $\nu$,
	    form a maximal chain $\K_0 \subset \K_1 \subset \cdots \subset \K_{m-1} \subset \K_m$
	    by construction.
	
	    We claim that all the sets $\K_l$ can be constructed from the sets $\{ \cL_i \}$ by using unions
	    and intersections as dictated by Corollary \ref{meet-join-lemma}.
	    The statement is true for $\K_m=\{1,2,\ldots,m\}$ because every variable must appear in some constraint
	    equation, giving $\K_m = \cup_i \cL_i$.
	    The statement is also true for $\K_{m-1}=\{\nu_2, ..., \nu_m\}$ since the vertex $\nu_1$ has no
	    in-edges in $G$ by the definition of a topological sort, which means that
	    \be \label{build-Km1}
	      \K_{m-1} = \bigcup_{\nu_1 \notin \cL^{\K_m}_i} \cL^{\K_m}_i.
	    \ee
	    %
	    %
	    %
	    For the induction statement, let $l \in \{m-1, \ldots, 2, 1\}$ and
	    assume that $\K_l = \bigcup_{i} \cL^{\K_{l}}_i$. 
	    Since the vertex $\nu_{m-l}$ has no in-edges in the induced subgraph generated by the vertices $\K_l$
	    by the definition
	    of the topological sort, $\K_{l-1}$ can be obtained from the union of all the sets not containing $\nu_{m-l}$:
	    \be
	        \K_{l-1} = \bigcup_{\nu_{m-l} \notin \cL^{\K_l}_i} \cL^{\K_l}_i.
	    \ee
	    In more detail, we claim that for all $\omega \neq \nu_{m-l} \in \K_{l-1}$ there exists $i$ such that
	    $\nu_{m-l} \not\in \cL_i^{\K_l}$ and $\omega \in \cL_i^{\K_l}$. If it were not true, that would imply the existence
	    of $\omega \neq \nu_{m-l} \in \K_{l-1}$ such that for all $i$, $\nu_{m-l} \in \cL_i^{\K_l}$ or
	    $\omega \not\in \cL_i^{\K_l}$. This last condition implies that whenever $\omega \in \cL_i^{\K_l}$
	    it is also true that $\nu_{m-l} \in \cL_i^{K_l}$,
	    which corresponds to an edge $(\omega,\nu_{n-l})$ in the induced subgraph.
	    \end{proof}
	
	    We have shown that every vertex can be written in precisely the same form as Lemma \ref{lemmaWithL}
	    and is therefore a point in $\Q$.
	    This proves $vertices(P_\mathcal{H}) \subseteq \Q$, which together with the result of Step 1, implies
	    $vertices(P_\mathcal{H}) = \Q$.

	    \vspace{0.45cm}

	    \myparagraph{Step 3: Cone Part}
	    The final step is to find the set of direction vectors that correspond to the cone part of $P_\mathcal{H}$.
	    The generating vectors of the cone are all vectors that satisfy the homogeneous versions of
	    the halfspace inequalities \eqref{hpolyhedron}, which in our case gives
	    \be
	        \sum_{k \in \K} Q_k \geq 0
	    \ee
	    for all $\K \subset \{ 1, 2, \ldots, m \}$. These inequalities are satisfied if and only if $Q_k \geq 0$
	    for all $k$. We can therefore conclude that the cone part of $P_\mathcal{H}$ is
	    $cone(\vec{e}_1,\vec{e}_2,\ldots,\vec{e}_m)$.
	
	    \vspace{0.45cm}

	    This completes our demonstration that $P_\mathcal{V}$ is the $\mathcal{V}$-polyhedron description of the
	    $\mathcal{H}$-polyhedron $P_\mathcal{H}$.
	    Thus we arrive at the statement we were trying to prove; if the inequalities
	    \be
	        \sum_{k\in \K} Q_k
	                    \geq    C_{\K}
	                    =       \frac{1}{2} \left[ \sum_{k\in \K}\!H(A_k)_\ph  +  H(R)_\ph -  H(RA_{\K})_\ph \right]
	    \ee
	    are satisfied for any $\K \subseteq \{1,2,\ldots,m\}$, then the rate tuple $(Q_1,Q_2,\cdots,Q_m)$ is 
	    inside the rate region. This completes the proof of Theorem \ref{thm:THM1}.
	
	
		%

%
%

	\bigskip
	\bigskip	
	\section{Proof of outer bound}	\label{sec:THMIIproof}    
	
	    We want to show that \emph{any} distributed compression protocol which works must satisfy all of the 
	    inequalities \eqref{outer-bound} from Theorem \ref{thm:THM2}. 
	    In order to prove this, we will use some of the properties of multiparty information and 
	    squashed entanglement.  
	    We break up the proof into three steps.
	
		\bigskip	
	    \myparagraph{Step 1: Decoupling Formula}
	    We know that the input system $\ket{\psi}^{A^nR^n}$ is a pure state.  
	    If we account for the Stinespring dilations of each encoding and decoding operation, then we can 
	    view any protocol
	    as implemented by unitary transformations with ancilla and waste.
	    Therefore, the output state (including the waste systems) should also be pure.
								\index{encoding operation@$\E_i$: encoding operation}
								\index{waste system@$W_i$: waste system associated with $\E_i$}	
	    More specifically, the encoding operations are modeled by CPTP maps $\E_i$ with outputs $C_i$ 
	    of dimension $2^{nQ_i}$.
	    In our analysis we will keep track of the purification (waste) systems $W_i$ of the the Stinespring dilations
	    $\E_i$, so the evolution as a whole will be unitary. 
	    \[
	    \Qcircuit @C=1em @R=.7em {
	       \lstick{A_i}      & \multigate{1}{\E_i} & \qw & \rstick{C_i \quad  \leftarrow\text{to Charlie}} \qw \\
	       \lstick{ \ket{0}} & \ghost{\E_i}        & \qw & \rstick{W_i \quad  \! \leftarrow\text{waste}} \qw
	    }
	    \]

	    Once Charlie receives the systems that were sent to him, he will apply a decoding CPTP map $\cD$ with output
	    system 
	    $\widehat{A}=\widehat{A}_1\widehat{A}_2 \ldots \widehat{A}_m$ isomorphic to the original $A=A_1A_2\ldots A_m$.
	    \[
	    \Qcircuit @C=1em @R=.7em {
	       \lstick{\bigcup_{i}^m C_i}           & \multigate{1}{\cD}      & \qw &
	            \rstick{\widehat{A}_1\cdots\widehat{A}_m \quad \leftarrow\text{near-purification of $R$}} \qw \\
	       \lstick{ \ket{0}}    & \ghost{\cD}   & \qw & \rstick{W_C \qquad\qquad \!\!\!\leftarrow
	       \text{Charlie's waste}} \qw
	    }
	    \]

	    In what follows we will use Figure \ref{fig:mpFQSW} extensively in order to keep track of the evolution and 
	    purity of the states at various points in the protocol.
	
	    \begin{figure}[ht]  \begin{center}
	        \input{figures/figureBIGmpFQSW.pst} \end{center}
			\FigureCaptionOpt{ Detailed diagram of the distributed compression circuit.}{
	            A general distributed compression circuit diagram showing the encoding operations $\E_i$ with
	            output systems $C_i$ (compressed data) and $W_i$ (waste). The decoding operation takes all
	            the compressed data $\bigotimes_i\!C_i$ and applies the decoding operation $\cD$ to output a state
	            ${\sigma}^{\widehat{A}^nR^n}$ which has high fidelity with the original $\ket{\psi}^{A^nR^n}$. }
	        \label{fig:mpFQSW}
	    \end{figure}
	
	    The starting point of our argument is the fidelity condition (Definition~\ref{achievable}) for successful
	    distributed compression, which we restate below for convenience
	    \be
	        F\left( \ket{\psi}^{A^nR^n}\!,\ {\sigma}^{\widehat{A}^nR^n} \right)     \geq    1 - \epsilon
	    \ee
	    where $\ket{\psi}^{A^nR^n} = \left( \ket{\ph}^{A_1A_2\cdots A_mR} \right)^{\otimes n}$ 
	    is the input state to the protocol and $\sigma^{\widehat{A}^nR^n}$ 
	    is the output state of the protocol.
	    Since $\sigma^{\widehat{A}^nR^n}$ has high fidelity with a rank one state, it must have one large eigenvalue
	    \be
	        \lambda_{\rm max}(\sigma^{\widehat{A}^nR^n}) \geq 1 - \epsilon.
	    \ee
	    Therefore, the full output state $\ket{\sigma}^{\widehat{A}R^nW_1\!\cdots W_m W_C}$ has Schmidt decomposition
	    of the form
	    \be
	        \ket{\sigma}^{\widehat{A}^nR^nW_1\!\cdots W_m W_C}
	            =
	                \sum_i \sqrt{\lambda_i} \ket{e_i}^{\widehat{A}^nR^n} \!\!\otimes \ket{f_i}^{W_1\!\cdots W_m W_C},
	    \ee
	    where $\ket{e_i},\ket{f_i}$ are orthonormal bases and $\lambda_1=\lambda_{\rm max} \geq 1 - \epsilon$.

	    Next we show that the output state $\ket{\sigma}^{\widehat{A}^nR^nW_1\!\cdots W_m W_C}$
	    is very close in fidelity to a totally decoupled state 
	    $\sigma^{\widehat{A}^nR^n}\otimes\sigma^{W_1\cdots W_mW_C}$,
	    which is a tensor product of the marginals of $\ket{\sigma}$ on the subsystems
	    ${\widehat{A}^nR^n}$ and ${W_1\cdots W_mW_C}$:
	    \begin{align}
	        F\big(\ket{\sigma}^{\widehat{A}^nR^nW_1\!\cdots W_m W_C}&, \
	            \sigma^{\widehat{A}^nR^n}\otimes\sigma^{W_1\cdots W_mW_C} \big)  = \nonumber \\
	            &=\ \Tr\left[ \ket{\sigma}\!\!\bra{\sigma}^{\widehat{A}^nR^nW_1\!\cdots W_m W_C}\left(
	                \sigma^{\widehat{A}^nR^n}\otimes\sigma^{W_1\cdots W_mW_C}
	              \right)\right] \nonumber \\
	            &=\ \sum_{i}    \lambda^3_i  \geq\  (1-\epsilon)^3 \geq 1 - 3\epsilon. \label{fidelity_close}
	    \end{align}
	    Using the relationship between fidelity and trace distance \cite{Fuchs}, we can transform \eqref{fidelity_close}
	    into the trace distance bound
	    \be
	        \left\| \ket{\sigma}\!\!\bra{\sigma}^{\widehat{A}^nR^nW_1\!\cdots W_m W_C} -
	                    \sigma^{\widehat{A}^nR^n} \otimes \sigma^{W_1\!\cdots W_m W_C} \right\|_1 \leq 2\sqrt{3\epsilon}.
	    \ee
	    By the contractivity of trace distance, the same equation must be true for any subset of the systems.
	    This bound combined with the Fannes inequality implies that the entropies taken with respect to
	    the output state are nearly additive:
	    \bea
	        \big\vert H(R^n W_\K)_\sigma    \ -\  H(R^n)_\sigma + H(W_\K)_\sigma    \big\vert
	        &\leq &     2\sqrt{3\epsilon} \log(d_{R^n}d_{W_\K}) + \eta(2\sqrt{3\epsilon}) \nonumber \\
	        &\leq &     2\sqrt{3\epsilon} \log(d_{A^n}d_{A^{2n}_\K}) + \eta(2\sqrt{3\epsilon}) \nonumber \\
	        &\leq &     2\sqrt{3\epsilon}\ n \log(d^3_{A}) + \eta(2\sqrt{3\epsilon}) \nonumber \\
	        &=:& 		f_1(\epsilon,n).  \label{wastereference}
	    \eea
	    for any subset $\K \subseteq \{ 1,2\ldots m \}$ with $\epsilon \leq \frac{1}{12e^2}$ and $\eta(x)=-x\log x$.
	    In the second line we have used the fact that $d_A = d_R$ and exploited the fact that
	    $d_{W_\K}$ can be taken less than or equal to $d_{A^{2n}_\K}$, the maximum size of an environment
	    required for a quantum operation with inputs and outputs of dimension no larger than $d_{A^{n}_\K}$.

		\bigskip	
	    \myparagraph{Step 2: Dimension Counting}
	    The entropy of any system is bounded above by the logarithm of its dimension.
	    In the case of the systems that participants send to Charlie, this implies that
	    \be \label{c-dimension-bound}
	        n \sum_{k\in \K} Q_k \geq H(C_\K)_{\psi'}.
	    \ee
	    We can use this fact and the diagram of Figure \ref{fig:mpFQSW} to bound the rates $Q_i$.
	    First we add $H(A_\Kbar)_{\psi} = H(A_\Kbar)_{\psi'}$ to both sides of equation (\ref{c-dimension-bound}) and
	    obtain the inequality
	    \be     \label{useful2}
	        H(A_\Kbar)_{\psi}  +  n \sum_{k\in \K} Q_k
	        \geq
	        H(C_\K)_{\psi'} +H(A_\Kbar)_{\psi'} \geq H(C_\K A_\Kbar)_{\psi'}.
	    \ee
	    For each encoding operation, the input system $A_i$ is unitarily related to the outputs $C_iW_i$ so we can write
	    \be   \label{thisistight}
	        H(A_i)_\psi = H(W_iC_i)_{\psi'} \leq H(W_i)_{\psi'} + H(C_i)_{\psi'} \leq H(W_i)_{\psi'} + nQ_i,
	    \ee
	    where in the last inequality we have used the dimension bound $H(C_i) \leq nQ_i$.
	    If we collect all the $Q_i$ terms from equations (\ref{useful2}) and (\ref{thisistight}), 
	    we obtain the inequalities
	    \bea
	        n\sum_{i\in\K} Q_i  &\geq&      H(C_\K A_\Kbar)_{\psi'} - H(A_\Kbar)_{\psi} \label{useful2rewrite} \\
	        n\sum_{i\in\K} Q_i  &\geq&  \sum_{i\in\K}  H(A_i)_\psi  - \sum_{i\in\K} H(W_i)_{\psi'}. \label{sumoftight}
	    \eea
	    Now add equations (\ref{useful2rewrite}) and (\ref{sumoftight}) to get
	    \begin{align}
	    2 n\sum_{i\in\K} Q_i    &\geq^{\ \!\ \ \ }\quad \sum_{i\in\K}  H(A_i)_\psi  - \sum_{i\in\K} H(W_i)_{\psi'}
	                                                +  H(C_\K A_\Kbar)_{\psi'} - H(A_\Kbar)_\psi \nonumber \\
	                            &=^{(1)}\quad       \sum_{i\in\K}  H(A_i)_\psi  - \sum_{i\in\K} H(W_i)_{\psi'}
	                                                +  H(W_\K R^n)_{\psi'} - H(R^n A_\K)_\psi \nonumber  \\
	                            &\geq^{(2)}\quad    \sum_{i\in\K}  H(A_i)_\psi  - \sum_{i\in\K} H(W_i)_{\psi'}
	                                                +  H(W_\K)_{\psi'} + H(R^n)_{\psi'} \nonumber \\[-0.5cm]
								& \hspace{7.8cm} 		- H(R^n A_\K)_\psi - f_1(\epsilon,n) \nonumber  \\
	                            &=^{\ \ }\quad      \left[ \sum_{i\in\K} H(A_i) + H(R^n) - H(R^n A_\K) \right]_\psi
	                                                +  H(W_\K)_{\psi'} \nonumber \\[-0.5cm]
								& \hspace{7.8cm}		- \sum_{i\in\K} H(W_i)_{\psi'} - f_1(\epsilon,n), 
													\label{dirty-rate-bound}
	    \end{align}
	    where the equality $\!\!\phantom|^{(1)}$ comes about because the two systems
	    $\ket{\psi}^{A_\K A_\Kbar R^n}$ and $\ket{\psi'}^{C_\K W_\K A_\Kbar R^n}$ are pure.
	    The inequality (\ref{wastereference}) from Step 1 was used in $\!\!\phantom|^{(2)}$.

		\bigskip	
	    \myparagraph{Step 3: Squashed Entanglement}
	    We would like to have a bound on the extra terms in equation \eqref{dirty-rate-bound} that does not depend on the
	    encoding and decoding maps.
	    We can accomplish this if we bound the waste terms $\sum_{i \in \K} H(W_i)_{\sigma}  - H(W_\K)_{\sigma}$
	    by the squashed entanglement $2\Esq( A_{k_1};\cdots;A_{k_l} )_\psi$ of the input state
	    for each $\K = \{k_1,k_2,\ldots,k_l\} \subseteq \{1,\ldots,m\}$ plus some small
	    corrections.
	    The proof requires a continuity statement analogous to \eqref{wastereference},
	    namely that
	    \be
	        \big\vert H(W_i) - H(W_i|R)\big\vert \leq   f_2(\epsilon,n)     \label{conditioning-on-R}
	    \ee
	    where $f_2$ is some function such that $f_2(\epsilon,n)/n \rightarrow 0$ as  $\epsilon \rightarrow 0$.
	    The proof is very similar to that of \eqref{wastereference} so we omit it.

	    Furthermore, if we allow an arbitrary transformation  $\cN^{R \to E}$ to be applied to the $R$ system, we will
	    obtain some general extension but the analog of equation (\ref{conditioning-on-R}) will remain true
	    by the contractivity of the trace distance under CPTP maps. We can therefore write:
	    \begin{align*} 
	        \sum_{i\in\K} &H(W_i)_\psi   -  H(W_\K)_\psi   \\
	             &\leq     \sum_{i\in\K} H(W_i|E)  - H(W_\K|E)
	                                                                     + [|\K|+1]f_2(\epsilon,n) \\
	                    &=                I(W_{k_1};W_{k_2};\ldots;W_{k_l};E)
	                          - I(W_{k_1};E) - \!\!\!\!\sum_{i\in \{ \K  \setminus k_1\} } I(W_i;E) + f'_2(\epsilon,n)\\
	                    &=^{(1)}      I(W_{k_1}E;W_{k_2};\ldots;W_{k_l})
	                                        - \sum_{i\in \{ \K  \setminus k_1\} } I(W_i;E)  + f'_2(\epsilon,n)\\
	                    &\leq^{(2)}   I(A_{k_1}E;W_{k_2};\ldots;W_{k_l})
	                                        - \sum_{i\in \{ \K  \setminus k_1\} } I(W_i;E)  + f'_2(\epsilon,n)\\
	                    &=^{(1)}      I(A_{k_1};W_{k_2};\ldots;W_{k_l},E) - I(A_{k_1};E)
	                                        - \sum_{i\in \{ \K  \setminus k_1\} } I(W_i;E)  + f'_2(\epsilon,n)\\
	                    &\leq^{(3)}       I(A_{k_1};A_{k_2};\ldots;A_{k_l};E) - \sum_{i\in\K} I(A_i;E)    
	                    	+ f'_2(\epsilon,n)\\
	                    &\leq             I(A_{k_1};A_{k_2};\ldots;A_{k_l}|E) + f'_2(\epsilon,n),
	    \end{align*}
	    where we have used the shorthand $f'_2(\epsilon,n) = [|\K|+1]f_2(\epsilon,n)$ for brevity.
	    Equations marked $\!\!\phantom|^{(1)}$ use Lemma \ref{mergingJ} and inequality $\!\!\phantom|^{(2)}$ comes
	    about from Lemma \ref{cond-monotonicity}, the monotonicity of the multiparty information.
	    Inequality $\!\!\phantom|^{(3)}$ is obtained when we repeat the steps for $k_2,\ldots,k_l$.
	    The above result is true for any extension $E$ but we want to find the tightest possible lower bound for the
	    rate region so we take the infimum over all possible extensions $E$ thus arriving at the definition of squashed
	    entanglement.
	    \index{entanglement!squashed@$\Esq$: squashed entanglement}

	    \ \\
	
	    \noindent Putting together equation \eqref{dirty-rate-bound} from Step 2 and the bound from Step 3 we have
	    \begin{align*}
	        2 n\sum_{i\in\K} Q_i    &\geq   \left[ \sum_{i\in\K} H(A_i) + H(R^n) - H(R^n A_\K) \right]_\psi \\[-0.4cm]
	        						&\hspace{4.9cm} 	- \left(\sum_{i\in\K} H(W_i)_{\psi'}-H(W_\K)_{\psi'} \right) 
	        										- f_1(\epsilon,n) \nonumber \\
	                                &\geq   \left[ \sum_{i\in\K} H(A_i) + H(R^n) - H(R^n A_\K) \right]_\psi \\[-0.4cm]
	                                &\hspace{4.5cm}	-  2\Esq( A_{k_1};\cdots;A_{k_l} )_\psi
	                                				- f_1(\epsilon,n) - f'_2(\epsilon,n). 
	    \end{align*}
	    We can simplify the expression further by using the fact that $\ket{\psi} = \ket{\ph}^{\otimes n}$
	    to obtain
	    \begin{align*}
	        \sum_{k\in \K} Q_k
	        &\geq   \frac{1}{2}\left[ \sum_{k\in \K} H(A_k) + H(R) - H(RA_{\K}) \right]_\ph	\\[-0.4cm]
	        &\hspace{4.4cm} 	- \Esq({A_{k_1};A_{k_2};\ldots A_{k_l}})_\ph	
	                        - \frac{f_1(\epsilon,n)}{2n}
	                        - \frac{f'_2(\epsilon,n)}{2n} 
	    \end{align*}
	    where the we used explicitly the additivity of the entropy for tensor product states and the subadditivity of
	    squashed entanglement demonstrated in Proposition \ref{subadditiveTensorProducts}.
	
	    Theorem \ref{thm:THM2} follows from the above since $\epsilon > 0$ was arbitrary and the sum
	    $(f_1(\epsilon,n) + f'_2(\epsilon,n))/n \rightarrow 0$ as  $\epsilon \rightarrow 0$. \qed

	\pagebreak
	\section{Discussion}		\label{section:discussion}

		The multiparty fully quantum Slepian-Wolf protocol is an optimal solution to the distributed compression
		problem for separable states, i.e. states of the form
		\be	\nonumber
			\ph^{X_1\cdots X_m}	=	\sum_i	p_i	\ph_i^{X_1} \otimesc \ph_i^{X_2} \otimes\!\cdots \otimes \ph_i^{X_m},
		\ee
		because $\Esq =0$ for such states.
		For general states, we have provided an outer bound on the set of achievable rates based on the
		multiparty squashed entanglement.
		In this section, we outline some other aspects of the multiparty FQSW protocol and its relation to other
		protocols.

																		\index{information!multiparty}
		First, we note that there is an alternative, more compact way of writing the rate sum inequalities of 
		Theorem~\ref{thm:THM1} and Theorem~\ref{thm:THM2}. Consider the inequalities of
		the inner bound \eqref{inner-bound} reproduced below:
		\be	\label{inner-bound-discussion}
           \sum_{k\in \K}\! Q_k \ \geq \ 
	           	\frac{1}{2}\!\left[ \sum_{k\in \K}\!H(A_k)  +  H(R) -  H(RA_{\K}) \right],	
	           	\quad \forall\K \subseteq \{1,\ldots,m\}.
		\ee
		The term on the right hand side can be expressed as a multiparty information
		\be	\label{mi-inner-bound}
           \sum_{k\in \K} Q_k \geq  
	           	\frac{1}{2}I(A_{{\displaystyle ;}\K};R),	\qquad \forall\K \subseteq \{1,\ldots,m\},
		\ee
		where $I(A_{{\displaystyle ;}\K};R)$ is the multiparty information of all the members of $\K$ and $R$.
		The multiparty information function is naturally suited to the multiparty distributed compression problem.

		When only two parties are involved ($m=2$), the inequalities in \eqref{inner-bound-discussion}
	    reduce to the two-party bounds on distributed compression presented in \cite{FQSW}:
	    \be
	    \begin{aligned} 
	            Q_1       &\geq \frac{1}{2} I({A_1};R), \\
	            Q_2       &\geq \frac{1}{2} I(A_2;R), \label{eqn:region-bis} \\
	            Q_1 + Q_2 &\geq \frac{1}{2}\left[ H({A_1}) + H({A_2})+ H({A_1}{A_2}) \right].
	    \end{aligned}
	    \ee
	    However, we now understand the mystery behind the expression that looks like the mutual information
	    with a reversed sign: it is simply the form $\frac{1}{2}I(A_1;A_2;R)$, where $H(R)=H(A_1A_2)$
	    and $H(A_1A_2R)=0$.
	    The outer bound inequalities \eqref{outer-bound} similarly reduce to the corresponding expressions in 
		the FQSW paper \cite{FQSW} with the multiparty squashed entanglement being replaced by the original 
		two-party squashed entanglement of \cite{CW04}.

								\index{protocol!state merging}
		Another observation concerns the classical communication cost of the protocol.
		If we move away from the ``fully quantum'' regime and allow classical communication between
		the senders and the receiver we can achieve better rates.
		We do this by recycling the entanglement generated by the FQSW protocol.
		For two parties, the combination of multiparty FQSW of equation \eqref{eqn:region-bis} with teleportation
		reproduces  the state merging results of equation \eqref{state-merging-inequalities} 
		\be					\label{state-merging-inequalities-bis}
			\begin{aligned}
				R_A 		&>	 	\ H(A|B)_\rho,	\\
				R_B 		&>		\ H(B|A)_\rho,	\\
				R_A+R_B		&>		\ H(AB)_\rho.	
			\end{aligned}
		\ee	

		Finally we note that the multiparty FQSW protocol can be operated backwards in time to produce
		an optimal reverse Shannon theorem for the quantum broadcast channel \cite{DH07}.


\chapter{Possible applications to the black hole information paradox}	\label{chapter:applications-to-black-holes}

	There are very few physical systems that require both the application of the principles of general relativity and
	of quantum mechanics in order to understand them. Black holes fall into this category.
	Classically, a black hole is a region of space where the gravity is so strong that nothing can escape its 
	pull -- not even light.
	However, according to a certain semi-classical calculation performed by Hawking \cite{hawking2}, black holes 
	emit thermal radiation at a very slow rate. 
	Thus, while it may take a very long time, all the mass/energy that fell into the black hole will eventually 
	be released back into the universe and the black hole will evaporate.
	
	This scenario poses a serious problem known as the black hole information paradox.
	Consider a universe originally in the pure state $\ket{\textrm{Universe}}$ which collapses onto itself to form a
	black hole. After a very long time, the black hole evaporates completely to leave behind a universe 
	filled with thermal radiation, which corresponds to the maximally mixed state.
	Herein lies the paradox: an initially pure state has evolved to a mixed state --- something which 
	violates the laws of unitary evolution so central to all of quantum theory. 
	
	Does gravity lead to non-unitary evolution or is general relativity incomplete?
	Over the last 30 years, many preeminent physicists have had something to say about this question
	and yet this paradox still defies explanation \cite{preskill,thaschen,page2}.
	What is worse is that the more we think about the information paradox the more we realize that it is 
	not an ``unwarranted extrapolation from an untrustworthy approximation''\cite{preskill} but rather a true 
	paradox of physics that cannot be explained yet.
	True paradoxes of this kind are indicators that the scientific theories we use 
	do not provide a complete description of reality.
	
	The black hole information paradox is yet to be explained in a satisfactory manner by modern physics
	and perhaps will not be until a theory of quantum gravity is developed.
	Recently, however, interesting contributions to the black hole information problem have been made by people 
	from within the quantum information community \cite{blackadami,blackeradami,blackhayden,locking,hsu}.
	In the last chapter of this thesis, we present a curious and counter-intuitive result about the nature of
	purifications and then use this observation to make a speculative comment about black holes with highly 
	mixing internal dynamics.

	\bigskip
	\section{Polygamy of purification}
		
		In a closing remark of the original FQSW paper \cite{FQSW}, the authors make a very interesting observation
		about the nature of quantum purifications which we will refer to as \emph{polygamy of purification}.
		Consider three parties --- Alice, Bob and Ron who share the quantum state
		\be
			\ket{\psi}^{A^nB^nR^n}	=	\left( \ket{\Phi}^{A_BB}\otimes\ket{\Phi}^{A_RR} \right)^{\otimes n},
		\ee
		where $\ket{\Phi}$ denotes the maximally entangled state 
		$\ket{\Phi} = \tfrac{1}{\sqrt{2}}(\ket{00}+\ket{11})$.
		In other words, Alice shares $n$ entangled states with Ron and another $n$ maximally entangled states 
		with Bob. The entanglement structure is illustrated in Figure~\ref{fig:weird} a).

		\begin{figure}[ht]
			\begin{center} 
			\includegraphics[width=5.2in]{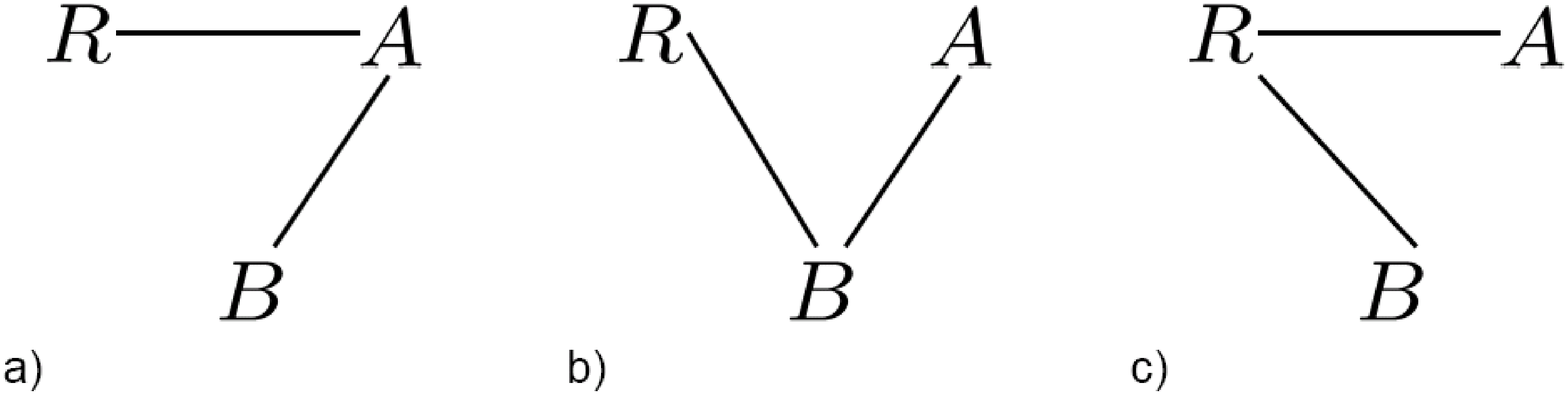}
			\end{center}
			\vspace{0.30cm}
		 	\FigureCaptionOpt{Transfer of quantum correlations between three parties.}{
							  Transfer of quantum correlations between three parties:
							  (a) The original $AR$ and $AB$ entanglement. 
							  (b) The effect of Alice sending the system $A_1$ to Bob. 
							  	  She is completely decoupled from the $R$ system.
							  (c)~Alternatively, Alice can send the \emph{same} $A_1$ system to  
							      Ron and completely decouple from Bob!}
		 	\label{fig:weird}
		\end{figure}
		
		Now, we tell Alice to perform the standard FQSW task, that is, to transfer her $R$ entanglement to Bob.
		Suppose that Alice performs the standard FQSW protocol in order to accomplish the 
		entanglement transfer.\!\!\footnote{Since in our setup the $R$-entangled part of her system is clearly
		identifiable, another approach for Alice could be to simply take the $n$ Ron-entangled qubits and send
		them to Bob.} 
		She applies a random unitary to the system $A^n$ and then sets aside a subsystem $A_1$ of 
		dimension $d_{A_1}$ where 
		\be
			\log d_{A_1} \geq \tfrac{1}{2}I(A;R)_{\phi} = n	\ \ \textrm{[qubits]}
		\ee
		as required by equation \eqref{eqn:FQSW} for the FQSW protocol.
		Sending the system $A_1$ to Bob will successfully decouple Alice from
		Ron and lead to the entanglement configuration illustrated in Figure~\ref{fig:weird} b).
		
										\index{entanglement!polygamy}
		Note, however, that the encoding operation was not specifically targeting Bob.
		Indeed, if the same $A_1$ system is sent to Ron instead, we would transfer the Bob entanglement to him
		and obtain the configuration of Figure~\ref{fig:weird} c).
		The polygamy of purification, therefore, is the observation that it is possible for a single 
		quantum system $A_1$ to contain the purification of more than one other system!

	\bigskip
	\section{Random internal dynamics for black holes}
		
		Recently, the results of the FQSW protocol were connected to the black hole information 
		paradox~\cite{blackhayden}. 
		The question studied is not about the evolution of the universe as whole but something more specific.
		If we drop half of pure state $\ket{\ph}^{AB_1}$ into a black hole, denoted $B_2$, how long will 
		it take for the its purification to come out? 
		
		Under the assumption that the internal dynamics of the black hole correspond to a random unitary operation,
		a situation which was considered previously in \cite{page2}, we can give an answer to this question 
		since it corresponds to an FQSW-type of problem except for the Schumacher compression step.
		We model the internal black hole dynamics as a random unitary $U_B$ which takes the system $B=B_1B_2$ 
		to an isomorphic system $B'R$, where $R$ is released as radiation and $B'$ is what
		remains of the black hole.
		The rest of the universe is denoted $U$ and no assumptions are made about its size.
		The situation is illustrated in Figure~\ref{fig:blackhole1}.
			
		\bigskip
	    \begin{figure}[ht]   \begin{center}
	        \input{figures/figureBlackHoleOne.pst}             \end{center}
	        \FigureCaptionOpt{Black hole before and after emitting the radiation system $R$.}{
		        			  {\bf a)} Black hole before the radiative process has taken place.
		        			  The purification of the $A$ system, $B_1$, is somewhere inside the black hole.
		        			  The system $U$ denotes the rest of the universe, i.e. everything that is not $A$ or~$B$.
		        			  \ {\bf b)} After the black hole emits the radiation chunk $R$ the remainder of 
		        			  the black hole is labeled $B'$.
		        			  }
	        \label{fig:blackhole1}
	    \end{figure}								
		
		Inspired by the FQSW results, we can say that if the dimension of the radiated system satisfies
		\be	\label{rateBlackHole}
			\log d_R \geq	\tfrac{1}{2}I(A;B) =  \tfrac{1}{2}I(A;B_1) = H(A)
		\ee
		then, with high probability, it will contain the purification of the $A$ system.
		This is because we can think of the black hole as an active entity mixing its internal degrees
		of freedom.
		
		In the current setup, we do not have the luxury of working in the i.i.d. regime so the statements we
		make are nothing more than inspired hand waving arguments. 
		Nevertheless, our calculation leads us to speculate that the purification information of a 
		specific system will come out fairly fast and independently of the size of the black hole.
		In fact, since the system we labeled $A$ was arbitrary, the purification of all 
		subsystems of the universe with the same dimension comes out with the radiation $R$!
		This is not be so surprising since we already know about the polygamy of purification.
		Nevertheless, even if the purification of any particular system of interest comes out quickly, we still
		have to wait until all of the black hole evaporates to recover the the purification of the whole universe,
		so the original black hole paradox remains.
			
		It is not clear what we mean when we say that the black hole has ``internal dynamics''.
		To assume that something interesting happens at the horizon is OK perhaps, but aren't black holes
		supposed to trap systems forever?

	\bigskip
	\section{Lost subsystem problem}
		
		Consider now a similar situation to the above but this time the black hole consists of 
		two systems $B_2L$, where the $L$ system is ``lost''; nothing ever leaves $L$.
		Half of a pure state $\ket{\ph}^{AB_1}$ is dropped into the black hole which is assumed to
		have random unitary dynamics on the space $B=B_1B_2$ from which a system $R$ is emitted.
		Once more we label $B'$ the remainder of the black hole as illustrated in Figure~\ref{fig:blackhole2}.
		
	    \begin{figure}[ht]   \begin{center}
	        \input{figures/figureBlackHoleTwo.pst}             \end{center}
	        \FigureCaptionOpt{Black hole which contains a lost subsystem $L$.}{
		        			  {\bf a)} The lost subsystem $L$ is part of the black hole $BL$.
	        				  The system $U$ denotes the rest of the universe.
	        				  \ {\bf b)} The black hole has released radiation $R$ from the $B$ subsystem.
	        				  The remainder of the black hole is $B'L$.}
	        \label{fig:blackhole2}
	    \end{figure}
		
		We would like to know how big the $R$ system has to be in order for the purification of $A$ to come out.
		This time, there are two active ``participants'': $B$ and $L$, so the multiparty FQSW results have to be
		considered. 
		Thus, in order for the purification of $A$ to come out the dimension of the radiated systems have to
		satisfy
		\bea
			\log d_R			&\geq&		\tfrac{1}{2}I(B;A) =  \tfrac{1}{2}I(B_1;A) = H(A),	\nonumber \\
			\log d_{R_L}		&\geq&		\tfrac{1}{2}I(L;A) =  0,	\\
		 \log d_R +\log d_{R_L}	&\geq&		\tfrac{1}{2}I(L;B;A)= H(A)+ \tfrac{1}{2}I(B_2;L). \nonumber
		\eea 
		where $d_{R_L}$ is the dimension of the system released by the lost system.

		At first sight, all seems to be in order since the requirement $\log d_{R_L} \geq 0$ is satisfied.
		The inequality for the sum of the rates, however, adds an extra requirement for $d_R$.
		To see the purification of $A$ come out we will have to wait until 
		\be
			\log d_R	\ >\ 	\max\{H(A), \ H(A)+ \tfrac{1}{2}I(B_2;L) \}.
		\ee
		
		Thus, if the are any significant correlations between the $B_2$ and $L$ parts of the black hole the 
		information will \emph{not} not come out quickly.
		This result is very interesting because the purification of $A$ will be slow to come out
		even though it is held in the $B$ part of the black hole and hasn't completely fallen into the $L$ system.

\chapter{Conclusion}

	This thesis has been an expedition into the field of quantum information science with many twists and turns.
	We began by introducing the fundamental principles of classical information theory and their extensions 
	to the quantum realm. 
	Armed with the basics, we were ready to approach some of last decade's important results in quantum information
	theory with the aim of getting readers from outside the field up to speed.

		
	We then attacked the multiparty distributed compression problem with the most powerful weapon available
	in our arsenal: the fully quantum Slepian-Wolf protocol.
	The construction of the multiparty distributed compression protocol is conceptually simple.
	It consists of sequential applications of the two-party FQSW protocol with careful accounting of the 
	information theoretic quantities at each step. 
	However, in order to achieve rigorous proofs of the bounds on the multiparty rate region, 
	we had to wage a heavy battle in difficult but interesting terrain. 
	
	To achieve a rigorous proof of Theorem~\ref{thm:THM1}, the inner bound on the rate region, we had to 
	dig into the geometry of convex polyhedra in $m$-dimensional space. 
	The proof we obtained uses a sufficient level of mathematical abstraction so as to apply to other problems 
	in information theory involving multiparty rate regions proved in terms of achievable points 
	but expressed instead in terms of facet inequalities. 
	Indeed, our proof is valid for all supermodular rate regions, that is, all rate region specified by 
	a set of inequalities 
	\be 
		\sum_{k \in \K}	R_\K	\geq	C_\K,	\qquad \quad  \forall \K \subseteq \{1,\ldots,m\}
	\ee
	for which the constants $C_\K$ satisfy the supermodular condition 
	$C_{\K\cup\cL} + C_{\K\cap\cL}  \geq  C_{\K} + C_{\cL}$.
	In particular, the rate regions for the classical multiparty Slepian-Wolf problem~\cite{W74,C75}
	and the multiparty state merging protocol~\cite{HOW05b} fall into this category because of strong subadditivity.

	Also, in order to prove Theorem~\ref{thm:THM2}, the outer bound on the rate region, it was necessary 
	to formulate a definition of the multiparty information and from it derive a multiparty generalization 
	of the squashed entanglement. 
	In the chapter dedicated to the multiparty squashed entanglement, we showed that it is a continuous,
	convex and subadditive measure of entanglement --- all desirable but rare properties in the multiparty case.

	Some open problems remain which could form fruitful directions for future investigations.
	The additivity of the multiparty squashed entanglement is an important conjecture that 
	was recently proved in an updated version of \cite{multisquash}, which now includes W. Song in the author list.
	As for the distributed compression problem, we have fully solved the problem only for separable states.
	Perhaps a different correction term exists for the outer bound? 
	If we find states for which we can calculate \Esq analytically or numerically
	we could use them to further probe the shape of the outer bound.
	Of course, the black hole information paradox remains an open problem since it hasn't been solved 
	by our toy-model observations. 
	
														\index{QIT@QIT: quantum information theory}
	And so, we add the new \emph{weapon of mass decoupling} to the ever growing collection
	of quantum information theory protocols derived from the nearly-universal building block of two-party FQSW. 
	At the time of writing of this thesis, this collection contains entanglement distillation, 
	channel simulation, communication over quantum broadcast channels, and many others.
	In fact, even the more general state redistribution \cite{DY07} result can be 
	obtained from the FQSW protocol~\cite{O07}.

\printindex{Glossary of technical terms}{Glossary of technical terms:}%
{For the convenience of the reader, we have collected in this section all technical terms and abbreviations that
were used in the document. The first page reference points to the page where the concept is defined. \ \\ }

\bibHeading{References}
\bibliography{thesis}
\bibliographystyle{unsrt}

%
%

\end{document}